\newcommand{\dd}{\mathrm{d}}
\newcommand{\ud}{\,\dd}
\newcommand{\nbr}[1]{$#1$\nobreakdash-\hspace{0pt}}
\newcommand{\braket}[1]{\langle{#1}\rangle}
\newcommand{\lshad}{[\![}
\newcommand{\rshad}{]\!]}
\newcommand{\sdot}{\,\cdot\,}
\renewcommand{\vec}[1]{\boldsymbol{\mathbf{#1}}}
\providecommand{\abs}[1]{\lvert#1\rvert}
\providecommand{\Abs}[1]{\left\lvert#1\right\rvert}
\providecommand{\norm}[1]{\lVert#1\rVert}
\DeclareMathOperator{\id}{id}
\DeclareMathOperator{\tr}{tr}
\DeclareMathOperator{\Tr}{Tr}
\DeclareMathOperator{\im}{Im}
\DeclareMathOperator{\ad}{ad}
\DeclareMathOperator{\Ad}{Ad}
\DeclareMathOperator{\T}{T}
\DeclareMathOperator{\supp}{supp}
\DeclareMathOperator{\Spec}{Spec}
\newtheorem{theorem}{Theorem}
\newtheorem{lemma}{Lemma}
\newtheorem{corollary}{Corollary}
\theoremstyle{definition}
\title{Deformation quantization on the cotangent bundle of a Lie group}
\author{Ziemowit Doma\'nski\thanks{The author has been partially supported by
the grant 04/43/DSPB/0094 from the Polish Ministry of Science and Higher
Education.}\\
\small Institute of Mathematics, Pozna{\'n} University of Technology\\
\small Piotrowo 3A, 60-965 Pozna{\'n}, Poland\\
\small \tt ziemowit.domanski@put.poznan.pl}
\begin{document}
\maketitle

\begin{abstract}
We develop a complete theory of non-formal deformation quantization on the
cotangent bundle of a weakly exponential Lie group. An appropriate integral
formula for the star-product is introduced together with a suitable space of
functions on which the star-product is well defined. This space of functions
becomes a Fr\'echet algebra as well as a pre-$C^*$-algebra. Basic properties of
the star-product are proved and the extension of the star-product to a Hilbert
algebra and an algebra of distributions is given. A \nbr{C^*}algebra of
observables and a space of states are constructed. Moreover, an operator
representation in position space is presented. Finally, examples of weakly
exponential Lie groups are given.
\\[\baselineskip]
\textbf{Keywords and phrases}: quantum mechanics, deformation quantization,
star-product, Lie group
\end{abstract}

\section{Introduction}
\label{sec:1}
In the paper we develop a complete theory of quantization of a classical
Hamiltonian system whose configuration space is in the form of a Lie group $G$.
The phase space $M$ of such system has the form of the cotangent bundle $\T^*G$
of $G$. A typical example of such system is a rigid body. The possible
configurations of a rigid body can be described by translations and rotations.
Thus, as the configuration space of this system we can take the group of
translations $(\mathbb{R}^3,+)$, if we are only interested in translational
degrees of freedom, or the group of rotations $SO(3)$, if we want to consider
only rotational degrees of freedom, or the semi-direct product of these two
groups if we want to take into account both translational and rotational degrees
of freedom.

Our approach to quantization is based on a deformation quantization theory
\cite{Bayen:1975-1977,Bayen:1978a,Bayen:1978b,Zachos:2005,Gosson:2006,%
Blaszak:2012}. In this theory we want to describe quantization as a continuous
deformation of a classical Hamiltonian system, with respect to some deformation
parameter. The deformation parameter is usually interpreted as the Planck's
constant $\hbar$, although in reality we should take as the deformation
parameter some dimensionless combination of the Planck's constant and a
parameter characteristic of the physical system under study. In the limit
$\hbar \to 0$ we should recover the classical system. In the rest of the paper
$\hbar$ will be a fixed non-zero real number. In fact it can be negative as the
developed theory still makes sense in such case.

The idea of deformation quantization relies on the fact that all information
about a classical Hamiltonian system is encoded in its Poisson algebra
$C^\infty(M)$. The algebra $C^\infty(M)$ is the vector space of smooth
\nbr{\mathbb{C}}valued functions defined on the phase space $M$, endowed with
a point-wise product of functions, a Poisson bracket $\{\sdot,\sdot\}$ and an
involution being the complex conjugation. By deforming the algebra $C^\infty(M)$
(or its subalgebra) to a certain noncommutative Poisson algebra we get a quantum
Hamiltonian system. The deformation is such that the point-wise product is
deformed to some noncommutative product $\star$ (the so called star-product)
and the Poisson bracket is deformed to the following Lie bracket
\begin{equation}
\lshad f,g \rshad = \frac{1}{i\hbar}(f \star g - g \star f).
\end{equation}
Moreover, there should hold
\begin{equation}
f \star g \to f \cdot g, \quad \lshad f,g \rshad \to \{f,g\} \quad
\text{as $\hbar \to 0$}.
\end{equation}

In the literature can be found many works on the topic of deformation
quantization on the cotangent bundle of a Riemannian manifold
\cite{Underhill:1978,Liu:1992,Pflaum:1998,Pflaum:1999,%
Bordemann.Neumaier.Waldmann:1998,Bordemann.Neumaier.Waldmann:1999,Karasev:2005},
however most of these works do not provide a complete theory of quantization. In
\cite{Pflaum:1998,Pflaum:1999} authors constructed a quantization scheme with
the use of a symbol calculus for pseudodifferential operators on a Riemannian
manifold. Then with its help they introduced a star-product on a suitable space
of phase space functions. In \cite{Bordemann.Neumaier.Waldmann:1998,%
Bordemann.Neumaier.Waldmann:1999} authors introduced a formal star-product on
the cotangent bundle of a Riemannian manifold and constructed a representation
of the received star product algebra by means of differential operators.
However, in all these papers the received algebra of functions consisted of
formal power series in $\hbar$, and therefore did not describe a physical
system, or was not big enough to include a pre-$C^*$-algebra, which then could
be used to define states of the quantum system in a natural way.

Note, that in the approach of strict deformation quantization
\cite{Rieffel:1989a,Rieffel:1989b,Rieffel:1994,Natsume:2001,Natsume:2003} these
problems do not appear. This approach is based on a construction of a continuous
field of \nbr{C^*}algebras by means of an operator representation of phase space
functions by compact operators on a suitable Hilbert space. However, in general
it is not possible to introduce a non-formal star product on the phase space
and it is not easy to extend the \nbr{C^*}algebra of abservables to a bigger
algebra containing all functions interesting from a physical point of view.

For all these reasons we develop a theory of deformation quantization in which
we introduce a non-formal star-product on a suitable pre-$C^*$-algebra of phase
space functions and extend this algebra to include many functions appearing in
physics. The developed deformation quantization theory is, in fact, an example
of strict deformation quantization. Moreover, we want to introduce
quantization without referring to an operator representation and later construct
this representation as a byproduct of the developed theory. Such approach seems
to be more natural because we then receive quantum theory as a direct
deformation of classical mechanics. In the current work we will be dealing only
with the case of the phase space in the form of the cotangent bundle of a Lie
group, which is much simpler to work with than the cotangent bundle of a general
Riemannian manifold.

There are not many works devoted to deformation quantization
on the cotangent bundle of a Lie group. Worth noting are papers
\cite{Cahen:1982,Gutt:1983} where authors introduce a star-product on the
cotangent bundle of a Lie group in terms of a formal power series in $\hbar$.
Also in \cite{Tounsi:2003,Tounsi:2010} are studied homogeneous star-products on
cotangent bundles of Lie groups, which are characterized in terms of integral
formulas. More often star-products are investigated on duals of Lie algebras
(see e.g.\ \cite{Arnal:1999,Dito:1999,Kathotia:2000,BenAmar:2003,Gutt:2011}).
Worth noting is also the case of a deformation quantization on the cotangent
bundle of a homogeneous space (a manifold $\mathcal{Q}$ on which a Lie group
$G$ acts transitively as a group of symmetries). Such case was studied in
\cite{Landsman:1993a} from the perspective of strict deformation quantization.

Some ideas of our theory are based on the work \cite{Karasev:2005} where authors
consider non-formal deformation quantization on the cotangent bundle of a
Riemannian manifold $\mathcal{Q}$. In this paper the exponential map $\exp_q$ on
$\mathcal{Q}$ is required to be a diffeomorphism on the whole tangent space
$\T_q\mathcal{Q}$ onto the whole manifold $\mathcal{Q}$. This condition greatly
restricts admissible manifolds $\mathcal{Q}$. In our approach we replace
$\mathcal{Q}$ by a Lie group $G$ with the property that its exponential map
$\exp$ is a diffeomorphism onto almost the whole group $G$. Many Lie groups
interesting from a point of view of physics are of such form.

In practice we do not need the whole Poisson algebra $C^\infty(M)$ to describe
a Hamiltonian system but only some subalgebra $\mathcal{F}_0(M)$. For the
purpose of quantization we take $\mathcal{F}_0(M)$ as the space of functions on
$M$ which momentum Fourier transforms are smooth and compactly supported. The
algebra $\mathcal{F}_0(M)$ fully characterizes the classical Hamiltonian system.
Indeed, all information about the Poisson manifold $M$ is encoded in
$\mathcal{F}_0(M)$. Moreover, states can be defined as appropriate linear
functionals on $\mathcal{F}_0(M)$. In fact $\mathcal{F}_0(M)$ is a dense
subalgebra of the \nbr{C^*}algebra $\mathcal{A}_0(M) := C_0(M)$ of continuous
functions on $M$ vanishing at infinity, with the standard supremum norm.
In other words $\mathcal{F}_0(M)$ is a pre-$C^*$-algebra. States are then
defined as continuous linear functionals $\Lambda$ on $\mathcal{A}_0(M)$
satisfying
\begin{enumerate}[(i)]
\item $\norm{\Lambda} = 1$ (normalization),
\item $\Lambda(\bar{f} \cdot f) \geq 0$ for every $f \in \mathcal{A}_0(M)$
(positive-definiteness).
\end{enumerate}
The Riesz representation theorem provides identification of states $\Lambda$
with probabilistic Borel measures $\mu$ through the following formula
\begin{equation}
\Lambda(f) = \int_M f \ud{\mu}.
\end{equation}
The deformation of the Poisson algebra $\mathcal{F}_0(M)$ will be denoted by
$\mathcal{F}_\hbar(M)$. Quite often we will skip the dependence on $\hbar$ and
just write $\mathcal{F}(M)$.

The paper is organized as follows. In Section~\ref{sec:2} we introduce notation
and definitions used throughout the paper. In particular, in
Section~\ref{subsec:2.6} we define the deformed Poisson algebra
$\mathcal{F}(M)$.

In Section~\ref{subsec:3.1} we introduce on $\mathcal{F}(M)$ a noncommutative
\nbr{\star}product in terms of an integral formula. Basic properties of the
\nbr{\star}product are proved and its expansion in $\hbar$ to the third order is
calculated. The algebra $\mathcal{F}(M)$ is a basis of the whole theory. With
its help we can define states of the quantum system. First, in
Section~\ref{subsec:3.2} we extend $\mathcal{F}(M)$ to a Hilbert algebra
$\mathcal{L}(M)$ and then in Section~\ref{subsec:3.3} we extend $\mathcal{L}(M)$
to a \nbr{C^*}algebra of observables $\mathcal{A}(M)$. Quantum states can be
defined in a standard way as continuous positively defined linear functionals on
$\mathcal{A}(M)$ normalized to unity. We also provide characterization of states
as elements of $\mathcal{L}(M)$ satisfying certain properties.

On the other hand the algebra $\mathcal{F}(M)$ can be extended to an algebra of
distributions $\mathcal{F}_\star(M)$ by treating $\mathcal{F}(M)$ as the space
of test functions (see Section~\ref{subsec:3.4}). The algebra
$\mathcal{F}_\star(M)$ contains all functions interesting from a physical point
of view. In particular we show that all smooth functions polynomial in fiber
variables are elements of $\mathcal{F}_\star(M)$.

For completeness of the quantization procedure we present in
Section~\ref{subsec:3.5} a short description of the time evolution of
states and observables in the language of deformation quantization.

In Section~\ref{sec:4} we construct an operator representation of the developed
theory. We represent considered in previous sections algebras of functions on
the phase space $M$ by algebras of operators on a Hilbert space $L^2(G,\dd{m})$
of \nbr{\mathbb{C}}valued functions on the Lie group $G$ square integrable with
respect to a Haar measure $\dd{m}$. We give explicit formulas of the constructed
representation. The introduced operator representation allows to express a
quantum system in the language of the standard Hilbert space approach to quantum
mechanics.

The action of the group $G$ on the phase space $M = \T^*G$ gives rise to a
reduced system whose phase space is the dual $\mathfrak{g}^*$ of the Lie algebra
$\mathfrak{g}$ of the Lie group $G$, endowed with the standard Poisson
structure. In Section~\ref{sec:5} we describe this reduction operation and show
that the introduced quantization procedure respects this reduction. In
particular, we give a formula for the \nbr{\star}product on $\mathfrak{g}^*$.

The introduced theory of quantization works only on a certain type of Lie groups
which we call weakly exponential Lie groups. In Section~\ref{subsec:2.1} we
define the notion of a weakly exponential Lie group and in Section~\ref{sec:6}
we give examples of such groups.

We end the paper with Section~\ref{sec:7} where we give conclusions and final
remarks.

\section{Preliminaries}
\label{sec:2}
\subsection{Lie groups basics}
\label{subsec:2.1}
Let $G$ be an \nbr{n}dimensional Lie group. Its Lie algebra will be denoted by
$\mathfrak{g}$. Moreover, we will denote by $\dd{m}$ a left invariant Haar
measure on $G$.

We will denote by $L_q$ and $R_q$ left and right translations in $G$, i.e.
\begin{equation}
\begin{gathered}
L_q \colon G \to G, \quad h \mapsto L_q(h) := qh, \\
R_q \colon G \to G, \quad h \mapsto R_q(h) := hq.
\end{gathered}
\end{equation}
The derivative of $L_q$ in the unit element $e$ is a linear isomorphism
$\T_e L_q \colon \mathfrak{g} \to \T_q G$ which provides a natural
identification of a tangent space $\T_q G$ with the Lie algebra $\mathfrak{g}$.
The transposition of this map $\T_e^* L_q \colon \T_q^* G \to \mathfrak{g}^*$
provides a natural identification of a cotangent space $\T_q^* G$ with the dual
of the Lie algebra $\mathfrak{g}$.

The mappings $L \colon G \times \mathfrak{g} \to \T G$ and
$\theta \colon G \times \mathfrak{g}^* \to \T^*G$, given by
\begin{equation}
\begin{gathered}
L_X(q) = \T_e L_q X, \quad q \in G, X \in \mathfrak{g}, \\
\theta_p(q) = (\T_e^* L_q)^{-1}p, \quad q \in G, p \in \mathfrak{g}^*,
\end{gathered}
\end{equation}
provide natural isomorphisms between tangent and cotangent bundles of $G$, and
trivial bundles $G \times \mathfrak{g}$ and $G \times \mathfrak{g}^*$,
respectively. The diffeomorphism $\theta$ can be used to transfer a natural
structure of a symplectic manifold on $\T^*G$ to $G \times \mathfrak{g}^*$. In
what follows we will denote the symplectic manifold $G \times \mathfrak{g}^*$
by $M$.

Recall that the derivative of the exponential map is equal
\cite{Duistermaat:2000}
\begin{equation}
\T_X \exp = \T_e L_{\exp(X)} \circ \phi(\ad_X)
= \T_e R_{\exp(X)} \circ \phi(-\ad_X), \quad
\phi(x) = \frac{1 - e^{-x}}{x} = \sum_{n=0}^\infty \frac{(-1)^n}{(n+1)!} x^n.
\end{equation}
Since $\phi(x)$ is analytic on $\mathbb{C}$ the linear operator $\phi(\ad_X)$
is well defined for every $X \in \mathfrak{g}$ as a convergent series of linear
operators. We calculate that
\begin{subequations}
\label{eq:1}
\begin{align}
\T_X (L_q \circ \exp) & = \T_{\exp(X)} L_q \circ \T_e L_{\exp(X)} \circ
    \phi(\ad_X)
= \T_e (L_q \circ L_{\exp(X)}) \circ \phi(\ad_X) \nonumber \\
& = \T_e L_{q\exp(X)} \circ \phi(\ad_X), \label{eq:1a} \\
\T_X (R_q \circ \exp) & = \T_{\exp(X)} R_q \circ \T_e R_{\exp(X)} \circ
    \phi(-\ad_X)
= \T_e (R_q \circ R_{\exp(X)}) \circ \phi(-\ad_X) \nonumber \\
& = \T_e R_{\exp(X)q} \circ \phi(-\ad_X). \label{eq:1b}
\end{align}
\end{subequations}

Let us define the reflection about the unit element $e$ as the following map
\begin{equation}
s \colon G \to G, \quad g \mapsto s(g) := g^{-1}.
\end{equation}
Moreover, define the reflection about point $q \in G$ as the map
\begin{equation}
s_q \colon G \to G, \quad s_q := L_q \circ s \circ L_{q^{-1}}
= R_q \circ s \circ R_{q^{-1}}, \quad \text{i.e.}
\quad s_q(g) = q(q^{-1}g)^{-1} = qg^{-1}q.
\label{eq:46}
\end{equation}
Note, that $s_q^2 = \id$. Denote the derivative of the reflection $s$ in the
unit element $e$ by $s'$, i.e.
\begin{equation}
s' := \T_e s, \quad s' \colon \mathfrak{g} \to \mathfrak{g}, \quad s'(X) = -X.
\end{equation}

Throughout the paper we will consider Lie groups satisfying
the following condition: there exists an open neighborhood $\mathcal{O}$ of 0 in
$\mathfrak{g}$, such that:
\begin{enumerate}[(i)]
\item it is star-shaped and symmetric, i.e.\ if $X \in \mathcal{O}$ then
$tX \in \mathcal{O}$ for $-1 \leq t \leq 1$,
\item the exponential map $\exp \colon \mathfrak{g} \to G$ restricted to
$\mathcal{O}$ is a diffeomorphism onto $\mathcal{U} = \exp(\mathcal{O})$,
\item $G \setminus \mathcal{U}$ is of measure zero.
\end{enumerate}
Lie groups satisfying this property will be called weakly exponential. Note,
however, that some authors use the term weakly exponential Lie group in a
different sense, namely as such Lie group for which $\exp(\mathfrak{g})$ is
dense in $G$. Throughout the paper $\mathcal{O}$ and $\mathcal{U}$ will denote
sets, corresponding to a given Lie group $G$, as in the above definition.

Recall that a subset $A$ of a smooth \nbr{n}dimensional manifold $M$ has measure
zero if for every smooth chart $(U,\varphi)$ for $M$, the set
$\varphi(A \cap U)$ has Lebesgue-measure zero in $\mathbb{R}^n$ \cite{Lee:2003}.

Note, that any set of measure zero has dense complement, because if
$M \setminus A$ is not dense, then $A$ contains a nonempty open set, which
would imply that $\psi(A \cap V)$ contains a nonempty open set of
Lebesgue-measure zero for some smooth chart $(\psi,V)$, which is impossible
because the only open set in $\mathbb{R}^n$ of Lebesgue-measure zero is an
empty set. Hence if $G$ is weakly exponential then $\mathcal{U}$ is dense in
$G$.

Observe, that since $\mathcal{O}$ is reflection invariant
($\mathcal{O} = -\mathcal{O}$), then $s(\mathcal{U}) = \mathcal{U}$ and
$s_q(L_q(\mathcal{U})) = L_q(\mathcal{U})$. Note also that weakly exponential
Lie groups are necessarily connected \cite{Dokovic:1997}.

We will also assume that considered weakly exponential Lie groups are such that
$\exp^{-1} \colon \mathcal{U} \to \mathcal{O}$ maps precompact sets onto
precompact sets, i.e.\ if $S \subset \mathcal{U}$ and its closure $\overline{S}$
is compact in $G$, then $\overline{\exp^{-1}(S)}$ is compact in $\mathcal{O}$.

For weakly exponential Lie groups $G$ satisfying this property we immediately
get that if $G$ is compact, then the closure $\overline{\mathcal{O}}$ of
$\mathcal{O}$ is also compact and $\exp(\overline{\mathcal{O}}) = G$.

\subsection{Poisson structure on $M$}
\label{subsec:2.2}
We will give an explicit formula for the canonical Poisson bracket on $M$.
Instead of using Darboux coordinate frame on $M$, it is convenient to use
different frame fields. Let $X_1,\dotsc,X_n$ be a basis in $\mathfrak{g}$ and
$X^1,\dotsc,X^n$ a dual basis in $\mathfrak{g}^*$. Each $X_j$ defines a left
invariant vector field $L_{X_j}$ on $G$. The vector fields $L_{X_j}$ can be
lifted to vector fields $Y_j$ on $M = G \times \mathfrak{g}^*$ by means of the
identification $\T_{(q,p)}(G \times \mathfrak{g}^*) = \T_q G \oplus
\mathfrak{g}^*$, i.e.
\begin{equation}
Y_j(q,p) = L_{X_j}(q) \oplus 0.
\end{equation}
Similarly covectors $X^j$ can be identified with constant vector fields $Z^j$ on
$M = G \times \mathfrak{g}^*$ according to the prescription
\begin{equation}
Z^j(q,p) = 0 \oplus X^j.
\end{equation}
The $2n$ vector fields $Y_1,\dotsc,Y_n,Z^1,\dotsc,Z^n$ on $M$ form at each point
$x$ a basis of tangent vectors. They have the following properties:
\begin{equation}
\begin{gathered}
\pi_* Z^i = 0, \quad \pi_* Y_i = L_{X_i}, \\
[Z^i,Z^j] = [Z^i,Y_j] = 0, \quad [Y_i,Y_j] = C^k_{ij} Y_k,
\end{gathered}
\end{equation}
where $\pi \colon G \times \mathfrak{g}^* \to G$ is a canonical projection and
$C^k_{ij}$ are the structure constants of $\mathfrak{g}$ in the basis $\{X_i\}$
given by
\begin{equation}
[X_i,X_j] = C^k_{ij} X_k.
\end{equation}
Let $p_j \colon M \to \mathbb{R}$ be fiber variables defined by
$p_j(q,p) = \braket{p,X_j}$. Then $\dd p_i(Z^j) = \delta_i^j$ and
$\dd p_i(Y_j) = 0$. The Poisson bracket of two functions $f,g \in C^\infty(M)$
reads:
\begin{equation}
\{f,g\} = Z^i f Y_i g - Y_i f Z^i g + p_k C^k_{ij} Z^i f Z^j g.
\label{eq:54}
\end{equation}
In particular
\begin{equation}
\{p_i,p_j\} = C^k_{ij} p_k.
\label{eq:48}
\end{equation}
The canonical symplectic form $\omega$ on $M$ is then given by the formula
\begin{equation}
\omega = \tilde{\alpha}^i \wedge \dd{p_i}
    + \frac{1}{2}p_k C^k_{ij} \tilde{\alpha}^i \wedge \tilde{\alpha}^j,
\label{eq:56}
\end{equation}
where $\tilde{\alpha}^i = \pi^*(\alpha^i)$ is a pull-back of a left invariant
1-form $\alpha^i$ on $G$ corresponding to $X_i$, i.e.\ $\alpha^i(L_{X_j}) =
\delta^i_j$. Then $\tilde{\alpha}^i(Z^j) = 0$ and $\tilde{\alpha}^i(Y_j) =
\delta^i_j$.

\subsection{Measures and integration}
\label{subsec:2.3}
On the vector space $\mathfrak{g}$ we can introduce a measure by means of the
Haar measure $\dd{m}$ on $G$. The Haar measure $\dd{m}$ originates from a left
invariant volume form $\omega_L$ on $G$, which on the other hand is completely
determined by its value in the unit element $\omega_L(e)$. The \nbr{n}form
$\omega_L(e)$ on $\mathfrak{g}$ induces, subsequently, a measure $\dd{X}$ on
$\mathfrak{g}$.

The measure $\dd{X}$ on $\mathfrak{g}$, in turn, determines a measure $\dd{p}$
on $\mathfrak{g}^*$ by means of the Fourier transform. To be more precise, if we
define the Fourier transform of functions on $\mathfrak{g}$ by the formula
\begin{equation}
\tilde{f}(p) = \int_{\mathfrak{g}} f(X) e^{-i\braket{p,X}} \ud{X},
\quad p \in \mathfrak{g}^*,
\end{equation}
then we choose the measure $\dd{p}$ on $\mathfrak{g}^*$ so that the inverse
Fourier transform will be given by the equation
\begin{equation}
f(X) = \frac{1}{(2\pi)^n} \int_{\mathfrak{g}^*} \tilde{f}(p)
    e^{i\braket{p,X}} \ud{p}.
\end{equation}

On the symplectic manifold $M = G \times \mathfrak{g}^*$ we have a distinguished
measure $\dd{x}$ induced by the Liouville volume form $\Omega$ on $M$.
The measure $\dd{x}$ is the product measure and can be written in the form
\begin{equation}
\dd{x} = \ud{m(q)} \times \dd{p}.
\end{equation}
Indeed, using \eqref{eq:56} we calculate that
\begin{equation}
\Omega = \frac{1}{n!} \underbrace{\omega \wedge \dotsm \wedge \omega}_n
= \tilde{\alpha}^1 \wedge \dd{p_1} \wedge \dotsm \wedge
    \tilde{\alpha}^n \wedge \dd{p_n}
= (-1)^{n(n+1)/2} \tilde{\alpha}^1 \wedge \dotsm \wedge \tilde{\alpha}^n \wedge
    \dd{p_1} \wedge \dotsm \wedge \dd{p_n}.
\end{equation}
We have that
\begin{equation}
\begin{aligned}
\tilde{\alpha}^1 \wedge \dotsm \wedge \tilde{\alpha}^n & =
    \pi_1^*(\alpha^1) \wedge \dotsm \wedge \pi_1^*(\alpha^n)
= \pi_1^*(\alpha^1 \wedge \dotsm \wedge \alpha^n), \\
\dd{p_1} \wedge \dotsm \wedge \dd{p_n} & =
    \pi_2^*(X_1) \wedge \dotsm \wedge \pi_2^*(X_n)
= \pi_2^*(X_1 \wedge \dotsm \wedge X_n),
\end{aligned}
\end{equation}
where $X_i$ are treated as constant 1-forms on $\mathfrak{g}^*$ and
$\pi_1 \colon G \times \mathfrak{g}^* \to G$, $\pi_2 \colon G \times
\mathfrak{g}^* \to \mathfrak{g}^*$ are projections onto the first and second
component of the product manifold $G \times \mathfrak{g}^*$. Thus $\Omega$ is
a wedge product of a pull-back of a left invariant volume form
$\alpha^1 \wedge \dotsm \wedge \alpha^n$ on $G$ and a pull-back of a constant
volume form $X_1 \wedge \dotsm \wedge X_n$ on $\mathfrak{g}^*$. Therefore, the
measure corresponding to $\Omega$ will be a product of a Haar measure on $G$ and
a canonical measure on $\mathfrak{g}^*$. Note, that the Liouville measure
$\dd{x}$ is independent on the normalization of the Haar measure $\dd{m}$ and
the measure $\dd{p}$ corresponding to it, since $\dd{p}$ scales inversely
proportionally to $\dd{m}$.

On the manifold $G \times \mathfrak{g}$ we will introduce the following measure
\begin{equation}
\dd{n(q,X)} = \ud{m(q)} \times \dd{X}.
\end{equation}
We will often use a normalized Liouville measure
$\dd{l} = \frac{\dd{x}}{\abs{2\pi\hbar}^n}$.

For function $f$ defined on $M = G \times \mathfrak{g}^*$ we define the Fourier
transform of $f$ in the momentum variable by
\begin{equation}
\tilde{f}(q,X) = \frac{1}{\abs{2\pi\hbar}^n} \int_{\mathfrak{g}^*}
    f(q,p) e^{\frac{i}{\hbar}\braket{p,X}} \ud{p}
\end{equation}
and the inverse Fourier transform by
\begin{equation}
f(q,p) = \int_{\mathfrak{g}} \tilde{f}(q,X)
    e^{-\frac{i}{\hbar}\braket{p,X}} \ud{X}.
\end{equation}

Quite often we will consider integrals of functions $f \in L^2(M,\dd{l})$ whose
momentum Fourier transforms $\tilde{f}$ are compactly supported, bounded and
smooth on $G \times \mathcal{O}$. Such functions may not be Lebesgue integrable,
however, we can define their integrals as improper integrals. Let
$\{K_j \mid j = 1,2,\dotsc\}$ be a sequence of compact subsets of
$\mathfrak{g}^*$ such that $K_1 \subset K_2 \subset \dotsb \subset
\mathfrak{g}^*$ and $\bigcup_{j=1}^\infty K_j = \mathfrak{g}^*$. Then the
improper integral of the function $f$ is defined as the following limit of
integrals:
\begin{equation}
\int_M f(x) \ud{l(x)} = \lim_{j \to \infty} \int_{G \times K_j} f(x) \ud{l(x)}.
\end{equation}
Note, that because of the equality
\begin{equation}
\int_{K_j} f(q,p) \ud{p} = \int_{\mathfrak{g}^*} \chi_{K_j}(p) f(q,p) \ud{p}
= \abs{2\pi\hbar}^n (\tilde{\chi}_{K_j} * \tilde{f})(q,0),
\end{equation}
where $\chi_{K_j}$ is the characteristic function of the set $K_j$ and $*$
denotes the usual convolution of functions, we get that
\begin{equation}
\int_M f(x) \ud{l(x)} = \lim_{j \to \infty} \int_G
    (\tilde{\chi}_{K_j} * \tilde{f})(q,0)
= \int_G (\delta * \tilde{f})(q,0) = \int_G \tilde{f}(q,0).
\end{equation}
This shows that the improper integral of functions in the above form always
exists, is finite and does not depend on the sequence of compact sets $\{K_j\}$.

\subsection{Baker-Campbell-Hausdorff product}
\label{subsec:2.4}
On the Lie algebra $\mathfrak{g}$ we define operation $\diamond$ by the formula
\begin{equation}
X \diamond Y = \exp^{-1}\bigl(\exp(X)\exp(Y)\bigr).
\end{equation}
It is well defined on $\mathcal{V} = \{(X,Y) \in \mathcal{O} \times \mathcal{O}
\mid \exp(X)\exp(Y) \in \mathcal{U}\}$. Clearly $X \diamond Y \in \mathcal{O}$
for $(X,Y) \in \mathcal{V}$ and $(\mathcal{O} \times \mathcal{O}) \setminus
\mathcal{V}$ is of measure zero. The operation $\diamond$ has the following
easy to verify properties:
\begin{enumerate}[(i)]
\item $X \diamond (Y \diamond Z) = (X \diamond Y) \diamond Z$ whenever
$(X,Y \diamond Z),(X \diamond Y,Z) \in \mathcal{V}$ (associativity),
\item $0 \diamond X = X \diamond 0 = X$ (0 is a neutral element),
\item $X \diamond (-X) = (-X) \diamond X = 0$ ($-X$ is an inverse element to
$X$),
\item $(-X) \diamond (-Y) = -(Y \diamond X)$,
\item $\exp(X \diamond Y) = \exp(X) \exp(Y)$.
\end{enumerate}

For a fixed element $Y \in \mathcal{O}$ let $\mathcal{O}_Y = \{X \in \mathcal{O}
\mid (X,Y) \in \mathcal{V}\}$. The maps $\mathcal{R}_Y \colon \mathcal{O}_Y \to
\mathcal{O}_{-Y}$ and $\mathcal{L}_Y \colon \mathcal{O}_{-Y} \to \mathcal{O}_Y$
given by
\begin{equation}
\mathcal{R}_Y(X) = X \diamond Y, \quad
\mathcal{L}_Y(X) = Y \diamond X
\label{eq:57}
\end{equation}
are one-to-one with inverses equal $\mathcal{R}_{-Y} \colon \mathcal{O}_{-Y} \to
\mathcal{O}_Y$ and $\mathcal{L}_{-Y} \colon \mathcal{O}_Y \to \mathcal{O}_{-Y}$
respectively. Note, that $\mathcal{O} \setminus \mathcal{O}_Y$ is of measure
zero.

In general $\diamond$ is not commutative and bilinear. By virtue of the
Baker-Campbell-Hausdorff formula we get the following expansion of
$X \diamond Y$ around $(0,0)$ up to third order
\begin{equation}
X \diamond Y = X + Y + \frac{1}{2}[X,Y] + \frac{1}{12}([X,[X,Y]] + [Y,[Y,X]])
    - \frac{1}{24}[Y,[X,[X,Y]]] + \dotsb.
\label{eq:17}
\end{equation}
We will call the operation $\diamond$ a Baker-Campbell-Hausdorff product.

\subsection{Notation}
\label{subsec:2.5}
For any $q \in G$, we use the notations:
\begin{itemize}
\item $V_q = (L_q \circ \exp|_{\mathcal{O}})^{-1}$, so that
$V_q(a) = \exp^{-1}(L_q^{-1}(a)) = \exp^{-1}(q^{-1}a)$ for
$a \in L_q(\mathcal{U})$,
\item $\frac{\dd m(s_q)}{\dd m}$ is the Jacobian obtained by transforming the
measure $\dd{m}$ under the diffeomorphism $s_q$,
\item $\frac{\dd m(\exp)}{\dd X}$ is the Jacobian obtained by transforming the
measure $\dd{m}$ under the diffeomorphism $\exp|_{\mathcal{O}}$,
\item $\chi_{\mathcal{O}}$ is the characteristic function of the set
$\mathcal{O}$, i.e.
\begin{equation*}
\chi_{\mathcal{O}}(X) = \begin{cases}
    1 & \text{for $X \in \mathcal{O}$} \\
    0 & \text{for $X \notin \mathcal{O}$}
\end{cases}
\end{equation*}
\item $j_q(a) = \abs{\det \psi(\ad_{V_q(a)})} \chi_{\mathcal{O}}(2V_q(a))$ for
$a \in L_q(\mathcal{U})$, where $\psi(x) = \frac{x}{2}\coth\frac{x}{2}
= \sum_{n=0}^\infty \frac{B_{2n} x^{2n}}{(2n)!}$ and $B_n$ is the \nbr{n}th
Bernoulli number,
\item $J_q(a) = j_q(a) \frac{\dd m(s_q)}{\dd m} \bigg|_a$ for
$a \in L_q(\mathcal{U})$,
\item $F(X) = \sqrt{\abs{\det \phi(\ad_X) \det \Ad_{\exp(\frac{1}{2}X)}}}
= \sqrt{\abs{\det \lambda(\ad_X)}}$ for $X \in \mathfrak{g}$, where
$\lambda(x) = \phi(x) e^{\frac{1}{2}x} = \frac{2}{x} \sinh\frac{x}{2}
= \sum_{n=0}^\infty \frac{x^{2n}}{4^n (2n+1)!}$,
\item $L(X,Y) = \begin{cases} \frac{F(X)F(Y)}{F(X \diamond Y)} & \text{for }
(X,Y) \in \mathcal{V} \\ 0 & \text{for } (X,Y) \notin \mathcal{V} \end{cases}$.
\end{itemize}
Note, that
\begin{equation}
V_q(a) = -V_q(s_q(a)) = -V_a(q)
\end{equation}
since
\begin{align}
V_q \circ s_q & = (L_q \circ \exp)^{-1} \circ s_q
= (s_q \circ L_q \circ \exp)^{-1}
= (L_q \circ s \circ \exp)^{-1} \nonumber \\
& = (L_q \circ \exp \circ s')^{-1}
= s' \circ (L_q \circ \exp)^{-1}
= s' \circ V_q
\end{align}
and
\begin{equation}
V_q(s_q(a)) = \exp^{-1}(q^{-1} s_q(a)) = \exp^{-1}(a^{-1} q) = V_a(q).
\end{equation}
Hence, $j_q$ is invariant under the reflection $s_q$.

We have that
\begin{align}
\frac{\dd m(\exp)}{\dd X} \bigg|_X & =
    \frac{\dd m(L_q \circ \exp)}{\dd X} \bigg|_X
= \abs{\det \phi(\ad_X)}, \label{eq:58} \\
\frac{\dd m(s)}{\dd m} \bigg|_q & = \det \Ad_q, \\
\frac{\dd m(s_q)}{\dd m} \bigg|_a & = \frac{\dd m(s)}{\dd m} \bigg|_{q^{-1}a}
= \det \Ad_{q^{-1}a}.
\end{align}
Indeed, if we choose a basis $X^1,\dotsc,X^n$ in $\mathfrak{g}^*$, then the left
invariant volume form $\omega_L$ on $G$ corresponding to the Haar measure
$\dd{m}$ can be expressed by the formula:
$\omega_L = c \alpha^1 \wedge \dotsm \wedge \alpha^n$,
where $\alpha^i(a) = (\T_e^* L_a)^{-1} X^i$ is a left invariant 1-form induced
by $X^i$ and $c \in \mathbb{R}$ is some constant. The volume form on
$\mathfrak{g}$ corresponding to the measure $\dd{X}$ takes then the form
$c X^1 \wedge \dotsm \wedge X^n$. Let $\Phi = L_q \circ \exp$. The pull-back of
the 1-form $\alpha^i$ by the map $\Phi$ is equal
\begin{align}
\Phi^* \alpha^i(X) & = \alpha(\Phi(X)) \circ \Phi_*(X)
= \alpha(q\exp(X)) \circ \T_X(L_q \circ \exp)
= (\T_e^* L_{q\exp(X)})^{-1} X^i \circ \T_X(L_q \circ \exp) \nonumber \\
& = \left(\T_X^*(L_q \circ \exp) \circ (\T_e^* L_{q\exp(X)})^{-1}\right)X^i
= \left((\T_e L_{q\exp(X)})^{-1} \circ \T_X(L_q \circ \exp)\right)^* X^i
    \nonumber \\
& = \phi(\ad_X)^* X^i.
\end{align}
Thus, the pull-back of the volume form $\omega_L$ is equal
\begin{align}
\Phi^* \omega_L(X) & =
    c (\Phi^* \alpha^1)(X) \wedge \dotsm \wedge (\Phi^* \alpha^n)(X)
= c \phi(\ad_X)^* X^1 \wedge \dotsm \wedge \phi(\ad_X)^* X^n \nonumber \\
& = \det \phi(\ad_X) c X^1 \wedge \dotsm \wedge X^n.
\end{align}
This proves \eqref{eq:58}.

The function $F(X)$ is symmetric: $F(X) = F(-X)$. Note also, that for unimodular
groups (so in particular for all compact groups)
\begin{equation}
\frac{\dd m(s)}{\dd m}\bigg|_q = \det \Ad_q = 1
\end{equation}
for every $q \in G$, so that $F(X) = \sqrt{\abs{\det \phi(\ad_X)}}$.

\subsection{Spaces $\mathcal{F}(M)$ and $\mathcal{L}(M)$}
\label{subsec:2.6}
Let $L^2(G \times \mathcal{O},\dd{n})$ denote the Hilbert space of
\nbr{\mathbb{C}}valued square integrable functions on $G \times \mathfrak{g}$
with support in $G \times \overline{\mathcal{O}}$ (by $\overline{\mathcal{O}}$
we denote the closure of $\mathcal{O}$). The scalar product in this space is
given by
\begin{equation}
(f,g) = \int_{G \times \mathfrak{g}} \overline{f(q,X)} g(q,X) \ud{n(q,X)}.
\end{equation}
Moreover, let $\widetilde{\mathcal{F}}(M)$ be a subspace of
$L^2(G \times \mathcal{O},\dd{n})$ consisting of functions $\tilde{f}$ such that
the functions
\begin{equation}
\mathsf{f}(a,b) = \tilde{f}\bigl(a\exp(\tfrac{1}{2}V_a(b)),V_a(b)\bigr)
    F(V_a(b))^{-1},
\label{eq:5}
\end{equation}
defined on a dense subset $\{(a,b) \in G \times G\mid a^{-1}b \in \mathcal{U}\}$
of $G \times G$, extend to smooth functions with compact support defined on the
whole space $G \times G$. Note, that since $F(X)$ is continuous on
$\mathfrak{g}$ and smooth on $\mathcal{O}$ functions in
$\widetilde{\mathcal{F}}(M)$ are bounded,
have compact support and are smooth on $G \times \mathcal{O}$. If a function
$\tilde{f} \in \widetilde{\mathcal{F}}(M)$ has a support in
$G \times \mathcal{O}$, then it will be smooth on the whole set
$G \times \mathfrak{g}$. However, if the support of $\tilde{f}$ is not in
$G \times \mathcal{O}$ but only in its closure
$G \times \overline{\mathcal{O}}$, then $\tilde{f}$ may not be smooth
(or even continuous) on $G \times \mathfrak{g}$. As we will see in
Section~\ref{sec:4} the functions \eqref{eq:5} are the integral kernels of
operators in the corresponding operator representation. So that, we define
$\widetilde{\mathcal{F}}(M)$ in such a way that the corresponding space of
operators will consist of all integral operators whose integral kernels are
smooth and compactly supported.

Denote by $\mathcal{L}(M)$ and $\mathcal{F}(M)$ the images
of $L^2(G \times \mathcal{O},\dd{n})$ and $\widetilde{\mathcal{F}}(M)$ with
respect to the inverse Fourier transform in momentum variable. The set
$\mathcal{L}(M)$ obtains the structure of a Hilbert space from
$L^2(G \times \mathcal{O},\dd{n})$, where the scalar product on $\mathcal{L}(M)$
takes the form
\begin{equation}
(f,g) = \int_M \overline{f(x)} g(x) \ud{l(x)}.
\label{eq:15}
\end{equation}
The space $\mathcal{L}(M)$ is a Hilbert subspace of the space $L^2(M,\dd{l})$ of
\nbr{\mathbb{C}}valued square integrable functions on $M$. Since functions in
$\mathcal{F}(M)$ are inverse momentum Fourier transforms of bounded
functions with compact support, then it is easy to show that $\mathcal{F}(M)$
consists of smooth functions. Note, that elements of $\mathcal{F}(M)$ may not be
Lebesgue integrable. Therefore, integrals of these functions will be considered
as improper integrals defined as in Section~\ref{subsec:2.3}.

On the space $\mathcal{F}(M)$ we define a trace functional $\tr$ by the formula
\begin{equation}
\tr(f) = \int_M f(x) \ud{l(x)}.
\label{eq:25}
\end{equation}
By $\Tr$ we will denote the usual operator trace.

Inverting formula \eqref{eq:5} we can see that elements of $\mathcal{F}(M)$ are
all functions $f$ of the form
\begin{equation}
f(q,p) = \int_{\mathcal{O}} \mathsf{f}\bigl(q\exp(-\tfrac{1}{2}X),
    q\exp(\tfrac{1}{2}X)\bigr) e^{-\frac{i}{\hbar} \braket{p,X}} F(X) \ud{X}
\label{eq:44}
\end{equation}
for functions $\mathsf{f} \in C^\infty_c(G \times G)$. Therefore, we have a
one-to-one correspondence between elements of $\mathcal{F}(M)$ and
$C^\infty_c(G \times G)$. In the case $G$ is a compact group we will topologize
$\mathcal{F}(M)$ by the following family of semi-norms
\begin{equation}
\norm{f}_{k,l} = \sup_{a,b \in G} \abs{D^{k,l}\mathsf{f}(a,b)},
\end{equation}
where $k,l \in \mathbb{N}^n$ are multi-indices and $D^{k,l}$ are differential
operators expressed by left-invariant vector fields $L_{X_1},\dotsc,L_{X_n}$
corresponding to a basis $X_1,\dotsc,X_n$ in $\mathfrak{g}$, and defined
by the formula
\begin{equation}
D^{k,l} = \bigl(L_{X_1}^{(1)}\bigr)^{k_1} \dotsm \bigl(L_{X_n}^{(1)}\bigr)^{k_n}
    \bigl(L_{X_1}^{(2)}\bigr)^{l_1} \dotsm \bigl(L_{X_n}^{(2)}\bigr)^{l_n},
\end{equation}
where $L_{X_i}^{(1)}$ and $L_{X_j}^{(2)}$ are vector fields on $G \times G$
defined by
\begin{equation}
L_{X_i}^{(1)}(a,b) = L_{X_i}(a) \oplus 0, \quad
L_{X_j}^{(2)}(a,b) = 0 \oplus L_{X_j}(b)
\label{eq:30}
\end{equation}
according to the identification $\T_{(a,b)}(G \times G) = \T_a G \oplus \T_b G$.
The introduced topology on $\mathcal{F}(M)$ is independent on the choice of a
basis $\{X_i\}$ in $\mathfrak{g}$. It makes from $\mathcal{F}(M)$ a Fr\'echet
space.

Note, that since $C^\infty_c(G \times G) \subset L^2(G \times G,\dd{m} \times
\dd{m})$ we can introduce for $\mathsf{f},\mathsf{g} \in C^\infty_c(G \times G)$
their scalar product $(\mathsf{f},\mathsf{g})$. A similar consideration takes
place for the space $\mathcal{F}(M) \subset \mathcal{L}(M)$. We will prove a
useful lemma.

\begin{lemma}
\label{lem:1}
Let $f,g \in \mathcal{F}(M)$ and $\mathsf{f},\mathsf{g} \in
C^\infty_c(G \times G)$ be the corresponding integral kernels. Then,
\begin{equation}
(f,g) = (\mathsf{f},\mathsf{g}).
\end{equation}
\end{lemma}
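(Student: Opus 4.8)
The plan is to obtain $(f,g)=(\mathsf f,\mathsf g)$ as a composition of two isometries: the momentum Fourier transform, relating $(f,g)$ to the $L^2(G\times\mathcal O,\dd n)$ scalar product of the transforms $\tilde f,\tilde g$, and a change of variables, relating that scalar product to $(\mathsf f,\mathsf g)$. Since $f,g\in\mathcal L(M)\subset L^2(M,\dd l)$ and $\mathsf f,\mathsf g\in C^\infty_c(G\times G)$, both scalar products are finite genuine integrals and no improper-integral subtleties arise.

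First I would establish the Plancherel identity
\begin{equation}
(f,g) = \int_{G\times\mathfrak g}\overline{\tilde f(q,X)}\,\tilde g(q,X)\ud{m(q)}\ud{X}.
\end{equation}
Fixing $q$ and writing $h=\tilde f(q,\sdot)$, the defining relation $f(q,p)=\int_{\mathfrak g}h(X)e^{-\frac{i}{\hbar}\braket{p,X}}\ud{X}$ exhibits $f(q,\hbar\,\sdot)$ as the ordinary Fourier transform of $h$ in the convention of Section~\ref{subsec:2.3}. The standard Plancherel theorem, followed by the substitution $p=\hbar\xi$, gives $\int_{\mathfrak g}\overline{\tilde f(q,X)}\tilde g(q,X)\ud{X}=\frac{1}{\abs{2\pi\hbar}^n}\int_{\mathfrak g^*}\overline{f(q,p)}g(q,p)\ud{p}$ for each $q$; integrating over $q$ against $\dd m$ and recalling $\dd l=\dd x/\abs{2\pi\hbar}^n$ yields the displayed identity, cf.~\eqref{eq:15}.

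The principal remaining task is the change of variables $(q,X)\mapsto(a,b)=(q\exp(-\tfrac12 X),q\exp(\tfrac12 X))$, which by \eqref{eq:5} satisfies $\tilde f(q,X)=\mathsf f(a,b)F(X)$ and maps $G\times\mathcal O$ diffeomorphically onto the full-measure set $\{(a,b)\mid a^{-1}b\in\mathcal U\}$ (its complement is null, since $G\setminus\mathcal U$ is null and left translations preserve null sets). Substituting gives $\overline{\tilde f}\,\tilde g=\overline{\mathsf f}\,\mathsf g\,F(X)^2$, so it suffices to prove the Jacobian identity $\dd m(a)\,\dd m(b)=F(X)^2\,\dd m(q)\,\dd X$. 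I would factor the map as $(q,X)\mapsto(a,X)\mapsto(a,b)$ with $a=q\exp(-\tfrac12 X)$ and $b=a\exp(X)$. The second factor, for fixed $a$, is $L_a\circ\exp$ in the variable $X$, contributing $\abs{\det\phi(\ad_X)}$ by \eqref{eq:58}; since the $a$-component is untouched, the Jacobian is block-triangular and equals this factor. The first factor, for fixed $X$, is the right translation $R_{\exp(-\frac12 X)}$ on $q$, whose Jacobian on a left Haar measure is the modular factor: in the left-invariant frame $\T_q R_h$ acts as $\Ad_{h^{-1}}$, giving $\det\Ad_{\exp(\frac12 X)}>0$, and block-triangularity again makes this the total factor. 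Multiplying,
\begin{equation}
\dd m(a)\,\dd m(b)=\abs{\det\phi(\ad_X)}\,\det\Ad_{\exp(\tfrac12 X)}\,\dd m(q)\,\dd X=F(X)^2\,\dd m(q)\,\dd X,
\end{equation}
where the last equality is the definition of $F$ together with positivity of $\det\Ad_{\exp(\sdot)}$. Combining the two steps then gives $(f,g)=(\tilde f,\tilde g)=(\mathsf f,\mathsf g)$.

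The main obstacle I anticipate is precisely this Jacobian computation: correctly pinning down the modular contribution $\det\Ad_{\exp(\frac12 X)}$ of the right translation and verifying that, combined with the exponential-map Jacobian $\abs{\det\phi(\ad_X)}$, it reproduces exactly the weight $F(X)^2=\abs{\det\lambda(\ad_X)}$ built into the definition \eqref{eq:5} of $\mathsf f$. Everything else is careful bookkeeping of the Fourier normalizations and of the null sets on which the diffeomorphism is undefined.
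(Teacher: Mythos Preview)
Your proposal is correct and follows essentially the same approach as the paper: Plancherel for the momentum Fourier transform together with the change of variables between $(q,X)$ and $(a,b)$, the Jacobian of which is exactly $F(X)^2$. The paper simply runs the computation in the opposite direction, starting from $(\mathsf f,\mathsf g)$, substituting \eqref{eq:5}, and performing the two one-variable changes $b\mapsto X=V_a(b)$ and $a\mapsto a\exp(\tfrac12 X)$; your explicit block-triangular factorization of the Jacobian is a slightly cleaner presentation of the same calculation.
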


\begin{proof}
We calculate that
\begin{align}
(\mathsf{f},\mathsf{g}) & = \int_{G \times G} \overline{\mathsf{f}(a,b)}
    \mathsf{g}(a,b) \ud{m(a)}\ud{m(b)} \nonumber \\
& = \int_G \left( \int_G
    \overline{\tilde{f}\bigl(a\exp(\tfrac{1}{2}V_a(b)),V_a(b)\bigr)}
    \tilde{g}\bigl(a\exp(\tfrac{1}{2}V_a(b)),V_a(b)\bigr)
    F(V_a(b))^{-2} \ud{m(b)} \right) \ud{m(a)} \nonumber \\
& = \int_G \left( \int_{\mathfrak{g}}
    \overline{\tilde{f}\bigl(a\exp(\tfrac{1}{2}X),X\bigr)}
    \tilde{g}\bigl(a\exp(\tfrac{1}{2}X),X\bigr)
    \det\Ad_{\exp(-\frac{1}{2}X)} \ud{m(a)} \right) \ud{X} \nonumber \\
& = \int_{G \times \mathfrak{g}} \overline{\tilde{f}(a,X)} \tilde{g}(a,X)
    \ud{m(a)}\ud{X}
= (\tilde{f},\tilde{g}) = (f,g).
\end{align}
\end{proof}

\section{Deformation quantization of the classical system}
\label{sec:3}
\subsection{Star-product on $\mathcal{F}(M)$}
\label{subsec:3.1}
On the space $\mathcal{F}(M)$ we introduce the following star-product
\begin{align}
(f \star g)(q,p) & = \int_{\mathfrak{g} \times \mathfrak{g}}
    \tilde{f}\bigl(q\exp(-\tfrac{1}{2}(X \diamond Y))\exp(\tfrac{1}{2}X),X\bigr)
    \tilde{g}\bigl(q\exp(\tfrac{1}{2}(X \diamond Y))\exp(-\tfrac{1}{2}Y),Y\bigr)
    e^{-\frac{i}{\hbar}\braket{p,X \diamond Y}} \nonumber \\
& \quad {} \times L(X,Y) \ud{X} \ud{Y}.
\label{eq:2}
\end{align}
Such star-product was already considered in \cite{Tounsi:2010} from the
perspective of formal deformation quantization. Note, that since $\tilde{f}$ and
$\tilde{g}$ have supports in $G \times \overline{\mathcal{O}}$ the
\nbr{\star}product of $f$ and $g$ is a well defined function on $M$. In what
follows we will show that $f \star g \in \mathcal{F}(M)$ so that the space
$\mathcal{F}(M)$ together with the \nbr{\star}product becomes an algebra.

For $Y \in \mathcal{O}$ let $\mathcal{R}_Y$ and $\mathcal{L}_Y$ be maps defined
in \eqref{eq:57}. With the use of \eqref{eq:1b} we can calculate Jacobians of
these transformations:
\begin{subequations}
\begin{align}
\abs{\det \T_X \mathcal{R}_Y} & = \left(\frac{F(X)}{F(X \diamond Y)}\right)^2
    \det \Ad_{\exp(-\frac{1}{2}Y)}, \\
\abs{\det \T_X \mathcal{L}_Y} & = \left(\frac{F(X)}{F(Y \diamond X)}\right)^2
    \det \Ad_{\exp(\frac{1}{2}Y)}.
\end{align}
\end{subequations}
If in \eqref{eq:2} under the integral with respect to $X$ we perform the
following change of variables: $X \to \mathcal{R}_{-Y}(X)$, then we can write
the \nbr{\star}product \eqref{eq:2} in the form
\begin{align}
(f \star g)(q,p) & = \int_{\mathfrak{g} \times \mathfrak{g}}
    \tilde{f}\bigl(q\exp(-\tfrac{1}{2}X)
    \exp\bigl(\tfrac{1}{2}(X \diamond Y)\bigr),X \diamond Y\bigr)
    \tilde{g}\bigl(q\exp(\tfrac{1}{2}X)\exp(\tfrac{1}{2}Y),-Y\bigr)
    e^{-\frac{i}{\hbar}\braket{p,X}} \nonumber \\
& \quad {} \times L(X, Y) \det \Ad_{\exp(-\frac{1}{2}Y)} \ud{X}\ud{Y}.
\label{eq:3}
\end{align}
We can introduce a twisted convolution of functions $\tilde{f},\tilde{g} \in
\widetilde{\mathcal{F}}(M)$ by the formula
\begin{align}
(\tilde{f} \odot \tilde{g})(q,X) & = \int_{\mathfrak{g}}
    \tilde{f}\bigl(q\exp(-\tfrac{1}{2}X)
    \exp\bigl(\tfrac{1}{2}(X \diamond Y)\bigr),X \diamond Y\bigr)
    \tilde{g}\bigl(q\exp(\tfrac{1}{2}X)\exp(\tfrac{1}{2}Y),-Y\bigr) \nonumber \\
& \quad {} \times L(X, Y) \det \Ad_{\exp(-\frac{1}{2}Y)} \ud{Y}.
\label{eq:4}
\end{align}
Then from \eqref{eq:3} we immediately get that
\begin{equation}
(f \star g)^\sim = \tilde{f} \odot \tilde{g}.
\label{eq:6}
\end{equation}

The following theorem gathers basic properties of the \nbr{\star}product.

\begin{theorem}
\label{thm:1}
The \nbr{\star}product on $\mathcal{F}(M)$ is a bilinear operation with the
following properties:
\begin{enumerate}[(i)]
\item\label{item:1a} $f \star g \in \mathcal{F}(M)$ (the space $\mathcal{F}(M)$
is closed with respect to the \nbr{\star}product),
\item\label{item:1b} $(f \star g) \star h = f \star (g \star h)$
(associativity),
\item\label{item:1c} $\overline{f \star g} = \bar{g} \star \bar{f}$
(complex-conjugation is an involution),
\item\label{item:1d} $\displaystyle \int_M f \star g \ud{x} =
\int_M f(x)g(x) \ud{x}$,
\end{enumerate}
for $f,g,h \in \mathcal{F}(M)$.
\end{theorem}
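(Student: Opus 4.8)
The plan is to recast the whole statement in terms of the integral kernels $\mathsf f,\mathsf g\in C^\infty_c(G\times G)$ associated with $f,g$ through \eqref{eq:5} and \eqref{eq:44}; recall that $f\mapsto\mathsf f$ is a linear bijection of $\mathcal F(M)$ onto $C^\infty_c(G\times G)$. Bilinearity of $\star$ is immediate, since $\tilde f$ and $\tilde g$ enter \eqref{eq:2} linearly. The single computation from which all four items follow is the assertion that the kernel $\mathsf h$ of $f\star g$ is the composition of integral kernels,
\begin{equation}
\mathsf h(a,b)=\int_G \mathsf f(a,c)\,\mathsf g(c,b)\ud{m(c)}.
\end{equation}
Once this formula is available, each property reduces to a standard fact about composition of kernels.

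To prove the composition formula I would evaluate \eqref{eq:6} and the definition \eqref{eq:4} of $\odot$ at the midpoint. Put $X=V_a(b)$, so that $b=a\exp(X)$ and $a\exp(\tfrac{1}{2}X)$ is the midpoint of $a$ and $b$; then $\mathsf h(a,b)=(\tilde f\odot\tilde g)(a\exp(\tfrac{1}{2}X),X)\,F(X)^{-1}$. With $q=a\exp(\tfrac{1}{2}X)$ the shifted arguments in \eqref{eq:4} collapse: the first slot of $\tilde f$ becomes $a\exp(\tfrac{1}{2}(X\diamond Y))$ and that of $\tilde g$ becomes $b\exp(\tfrac{1}{2}Y)$. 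Using \eqref{eq:5}, the symmetry $F(-Y)=F(Y)$ and $L(X,Y)=F(X)F(Y)/F(X\diamond Y)$, the product of the weight with the $F$-factors from \eqref{eq:5} reduces to $F(X)F(Y)^2\det\Ad_{\exp(-\frac{1}{2}Y)}$; dividing by $F(X)$ leaves
\begin{equation}
\mathsf h(a,b)=\int_{\mathfrak g}\mathsf f\bigl(a,b\exp(Y)\bigr)\,\mathsf g\bigl(b\exp(Y),b\bigr)\,F(Y)^2\det\Ad_{\exp(-\frac{1}{2}Y)}\ud{Y}.
\end{equation}
Substituting $c=b\exp(Y)$ and invoking $F(Y)^2=\abs{\det\phi(\ad_Y)\det\Ad_{\exp(\frac{1}{2}Y)}}$ together with the positivity $\det\Ad_{\exp(Z)}=e^{\tr\ad_Z}>0$, the factor $F(Y)^2\det\Ad_{\exp(-\frac{1}{2}Y)}$ simplifies to $\abs{\det\phi(\ad_Y)}$, which by \eqref{eq:58} and left invariance of $\dd m$ is precisely the Jacobian of $Y\mapsto b\exp(Y)$; this yields the composition formula. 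Since $G\setminus\mathcal U$ is null, extending the domain from $\mathcal O$ to all of $G$ is harmless.

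With the composition formula in hand, item (i) follows because differentiation under the integral sign preserves smoothness, while $\supp\mathsf h$ lies in the product of the first factor of $\supp\mathsf f$ and the second factor of $\supp\mathsf g$; hence $\mathsf h\in C^\infty_c(G\times G)$ and $f\star g\in\mathcal F(M)$. Item (ii) is the associativity of kernel composition, i.e.\ an application of Fubini's theorem, legitimate because all supports are compact. For item (iii) I would first note $(\bar f)^\sim(q,X)=\overline{\tilde f(q,-X)}$; combining this with the symmetry $V_a(b)=-V_b(a)$ and the coincidence of midpoints $a\exp(\tfrac{1}{2}V_a(b))=b\exp(\tfrac{1}{2}V_b(a))$ shows that the kernel of $\bar f$ is $\overline{\mathsf f(b,a)}$. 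The identity $\overline{f\star g}=\bar g\star\bar f$ then becomes the adjoint law $\overline{\mathsf h(b,a)}=\int_G\overline{\mathsf g(c,a)}\,\overline{\mathsf f(b,c)}\ud{m(c)}$, which is evident from the composition formula.

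Finally, for item (iv) observe that \eqref{eq:5} at $b=a$ gives $\mathsf f(q,q)=\tilde f(q,0)$, so the improper-integral formula of Section~\ref{subsec:2.3} yields $\int_M f\star g\ud{l}=\int_G\mathsf h(q,q)\ud{m(q)}=\int_G\int_G\mathsf f(q,c)\mathsf g(c,q)\ud{m(c)}\ud{m(q)}$. On the other hand $\int_M fg\ud{l}=(\bar f,g)$, and Lemma~\ref{lem:1} applied to $\bar f$ (whose kernel is $\overline{\mathsf f(b,a)}$) identifies this with the same double integral $\int_G\int_G\mathsf f(b,a)\mathsf g(a,b)\ud{m(a)}\ud{m(b)}$; multiplying by $\abs{2\pi\hbar}^n$ converts $\dd l$ into $\dd x$ and gives the stated equality. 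I expect the only genuinely delicate step to be the Jacobian bookkeeping in the change of variables underlying the composition formula, in particular the cancellation of the $\det\Ad$ factors via $\det\Ad_{\exp(Z)}=e^{\tr\ad_Z}>0$; the remaining arguments are formal consequences of the kernel picture.
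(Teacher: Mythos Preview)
Your derivation of the kernel composition formula $\mathsf h(a,b)=\int_G\mathsf f(a,c)\,\mathsf g(c,b)\ud{m(c)}$ is correct and coincides with the paper's computation \eqref{eq:23}; the Jacobian bookkeeping you flag as delicate is exactly right, and the paper handles it the same way (though it phrases the change of variables as $c=b\exp(Y)$ via the identity $V_a(b)\diamond Y=V_a(b\exp(Y))$ rather than tracking the $\det\Ad$ cancellation explicitly).

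Where you diverge is in how you exploit the formula afterwards. For (\ref{item:1b}) the paper does not use kernel associativity; it instead performs a direct change of variables $Z\mapsto\mathcal L_Y(Z)$ inside the iterated twisted convolution $(\tilde f\odot\tilde g)\odot\tilde h$ and matches it term by term to $\tilde f\odot(\tilde g\odot\tilde h)$. Your Fubini argument on compactly supported kernels is shorter and equally rigorous. For (\ref{item:1c}) the paper simply appeals to the definition \eqref{eq:2} (swap $X\leftrightarrow Y$, conjugate, and use $(-X)\diamond(-Y)=-(Y\diamond X)$ together with the symmetry of $F$ and $L$); your kernel-adjoint argument is a valid alternative and in fact anticipates the paper's later proof of Theorem~\ref{thm:3}(\ref{item:3c}). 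For (\ref{item:1d}) the paper evaluates $(\tilde f\odot\tilde g)(q,0)$ directly and recognises an ordinary convolution, whereas you route through Lemma~\ref{lem:1}; both are fine. In short, your proof is correct and organised around the same central identity as the paper, but you use that identity more systematically, which buys you a cleaner associativity argument at the cost of importing Lemma~\ref{lem:1} for (\ref{item:1d}).
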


\begin{proof}
(\ref{item:1a}) If $\mathsf{f},\mathsf{g} \in C^\infty_c(G \times G)$ are
functions corresponding to $\tilde{f}$ and $\tilde{g}$ according to
\eqref{eq:5}, then by virtue of the equality
\begin{equation}
V_a(b) \diamond Y = V_a(b\exp(Y)),
\end{equation}
we get the following expression for a function $\mathsf{h}$ corresponding to
$\tilde{f} \odot \tilde{g}$
\begin{align}
\mathsf{h}(a,b) & =
    (\tilde{f} \odot \tilde{g})\bigl(a\exp(\tfrac{1}{2}V_a(b)),V_a(b)\bigr)
    F(V_a(b))^{-1} = \int_{\mathfrak{g}}
    \tilde{f}\bigl(a\exp(\tfrac{1}{2}V_a(b\exp(Y))),
    V_a(b\exp(Y))\bigr) \nonumber \\
& \quad \times \tilde{g}\bigl(b\exp(\tfrac{1}{2}Y),-Y\bigr)
    F(V_a(b\exp(Y)))^{-1} F(Y)^{-1}
    \frac{\dd m(L_b \circ \exp)}{\dd Y}\bigg|_Y \ud{Y} \nonumber \\
& = \int_G \tilde{f}\bigl(a\exp(\tfrac{1}{2}V_a(c)),V_a(c)\bigr)
    F(V_a(c))^{-1} \tilde{g}\bigl(c\exp(\tfrac{1}{2}V_c(b)),V_c(b)\bigr)
    F(V_c(b))^{-1} \ud{m(c)} \nonumber \\
& = \int_G \mathsf{f}(a,c) \mathsf{g}(c,b) \ud{m(c)}.
\label{eq:23}
\end{align}
The function
\begin{equation}
\mathsf{h}(a,b) = \int_G \mathsf{f}(a,c) \mathsf{g}(c,b) \ud{m(c)}
\label{eq:59}
\end{equation}
is smooth and compactly supported. Indeed, $\supp(\mathsf{f}) \subset
\pi_1(\supp(\mathsf{f})) \times \pi_2(\supp(\mathsf{f}))$ and similarly
$\supp(\mathsf{g}) \subset \pi_1(\supp(\mathsf{g})) \times
\pi_2(\supp(\mathsf{g}))$, where $\pi_i \colon G \times G \to G$, $i = 1,2$,
are canonical projections onto the first and second component of the product
manifold $G \times G$. Thus $\supp(\mathsf{h}) \subset \pi_1(\supp(\mathsf{f}))
\times \pi_2(\supp(\mathsf{g}))$ and since $\pi_1$ and $\pi_2$ are continuous
the support of $\mathsf{h}$ is a subset of a compact set and as such is also
compact. The value of the integral in \eqref{eq:59} does not change if we
restrict the integration to a compact set $\pi_2(\supp(\mathsf{f})) \cap
\pi_1(\supp(\mathsf{g}))$. Using this fact we can interchange the integration
with a limiting operation and easily show that partial derivatives of any order
of $\mathsf{h}$ exist and are jointly continuous as functions on $G \times G$.
Therefore, $\mathsf{h} \in C^\infty_c(G \times G)$ which further implies that
$\tilde{f} \odot \tilde{g} \in \widetilde{\mathcal{F}}(M)$ and consequently
$f \star g \in \mathcal{F}(M)$.

(\ref{item:1b}) We have that
\begin{align}
& ((\tilde{f} \odot \tilde{g}) \odot \tilde{h})(q,X) =
    \int_{\mathfrak{g}} \biggl( \int_{\mathfrak{g}}
    \tilde{f}\bigl(q\exp(-\tfrac{1}{2}X)
    \exp(\tfrac{1}{2}(X \diamond Y \diamond Z)),X \diamond Y \diamond Z\bigr)
    \nonumber \\
& \qquad \times \tilde{g}\bigl(q\exp(\tfrac{1}{2}X)\exp(Y)\exp(\tfrac{1}{2}Z),
    -Z\bigr) \tilde{h}\bigl(q\exp(\tfrac{1}{2}X)\exp(\tfrac{1}{2}Y),-Y\bigr)
    \nonumber \\
& \qquad \times L(X,Y)L(X \diamond Y,Z)
    \det\Ad_{\exp(-\frac{1}{2}Y)\exp(-\frac{1}{2}Z)} \ud{Z} \biggr) \ud{Y}
    \displaybreak[0] \nonumber \\
& \quad = \int_{\mathcal{O}} \biggl( \int_{\mathcal{O}_{-Y}}
    \tilde{f}\bigl(q\exp(-\tfrac{1}{2}X)\exp(\tfrac{1}{2}(X \diamond
    \mathcal{L}_Y(Z))),X \diamond \mathcal{L}_Y(Z)\bigr) \nonumber \\
& \qquad \times \tilde{g}\bigl(q\exp(\tfrac{1}{2}X)\exp(\mathcal{L}_Y(Z))
    \exp\bigl(\tfrac{1}{2}((-\mathcal{L}_Y(Z)) \diamond Y)\bigr),
    (-\mathcal{L}_Y(Z)) \diamond Y\bigr) \nonumber \\
& \qquad \times \tilde{h}\bigl(q\exp(\tfrac{1}{2}X)\exp(\tfrac{1}{2}Y),-Y\bigr)
    L(-\mathcal{L}_Y(Z),Y)L(X,\mathcal{L}_Y(Z)) \nonumber \\
& \qquad \times \det\Ad_{\exp(-\frac{1}{2}Y)\exp(-\frac{1}{2}\mathcal{L}_Y(Z))}
    \abs{\det \T_Z \mathcal{L}_Y} \ud{Z} \biggr) \ud{Y} \displaybreak[0]
    \nonumber \\
& \quad = \int_{\mathfrak{g}} \biggl( \int_{\mathfrak{g}}
    \tilde{f}\bigl(q\exp(-\tfrac{1}{2}X)
    \exp(\tfrac{1}{2}(X \diamond Z)),X \diamond Z\bigr)
    \tilde{g}\bigl(q\exp(\tfrac{1}{2}X)\exp(Z)
    \exp\bigl(\tfrac{1}{2}((-Z) \diamond Y)\bigr),(-Z) \diamond Y\bigr)
    \nonumber \\
& \qquad \times \tilde{h}\bigl(q\exp(\tfrac{1}{2}X)\exp(\tfrac{1}{2}Y),-Y\bigr)
    L(-Z,Y)L(X,Z) \det\Ad_{\exp(-\frac{1}{2}Y)\exp(-\frac{1}{2}Z)} \ud{Y}
    \biggr) \ud{Z} \nonumber \\
& \quad = (\tilde{f} \odot (\tilde{g} \odot \tilde{h}))(q,X).
\end{align}
Applying \eqref{eq:6} yields the associativity of $\star$.

(\ref{item:1c}) This property follows directly from the definition \eqref{eq:2}
of the \nbr{\star}product.

(\ref{item:1d}) We calculate that
\begin{align}
\int_M f \star g \ud{x} & = \int_G (f \star g)^\sim(q,0) \ud{m(q)}
= \int_G (\tilde{f} \odot \tilde{g})(q,0) \ud{m(q)} \nonumber \\
& = \int_G \left(\int_{\mathfrak{g}}\tilde{f}\bigl(q\exp(\tfrac{1}{2}Y),Y\bigr)
    \tilde{g}\bigl(q\exp(\tfrac{1}{2}Y),-Y\bigr) \det\Ad_{\exp(-\frac{1}{2}Y)}
    \ud{Y} \right) \ud{m(q)} \nonumber \\
& = \int_G \left( \int_{\mathfrak{g}} \tilde{f}(q,Y) \tilde{g}(q,-Y) \ud{Y}
    \right) \ud{m(q)}
= \int_G (\tilde{f} * \tilde{g})(q,0) \ud{m(q)} = \int_M f(x)g(x) \ud{x},
\end{align}
where $*$ denotes the usual convolution.
\end{proof}

From \eqref{eq:23} we can see that the integral kernel $\mathsf{h} \in
C^\infty_c(G \times G)$ corresponding to the product $f \star g$ is expressed
by the integral kernels $\mathsf{f},\mathsf{g} \in C^\infty_c(G \times G)$ of
$f$ and $g$ by the formula
\begin{equation}
\mathsf{h}(a,b) = \int_G \mathsf{f}(a,c) \mathsf{g}(c,b) \ud{m(c)}.
\label{eq:29}
\end{equation}
From this, in the case of a compact group $G$, we get the following estimate
on the semi-norms of the product $f \star g$:
\begin{equation}
\norm{f \star g}_{k,l} \leq m(G) \norm{f}_{k,0} \norm{g}_{0,l},
\end{equation}
where $m(G)$ is the measure of the group $G$. Therefore, the \nbr{\star}product
as a map $\mathcal{F}(M) \times \mathcal{F}(M) \to \mathcal{F}(M)$ will be a
continuous operation on the Fr\'echet space $\mathcal{F}(M)$.

At this point a remark should be made concerning the dependence of the
quantization on the global topological and geometrical structure of the Lie
group $G$. It might seem that we are neglecting all global issues by working
only on an open subset $\mathcal{U}$ of the Lie group $G$, which we then map to
an open subset $\mathcal{O}$ of the Lie algebra $\mathfrak{g}$. However, the
global topological and geometrical structure of $G$ is encoded in the definition
of the algebra $\mathcal{F}(M)$. This may be readily seen from the fact that
there is a one-to-one correspondence between elements of $\mathcal{F}(M)$ and
$C^\infty_c(G \times G)$, where to the \nbr{\star}product of functions
corresponds a composition of integral kernels given by \eqref{eq:29}. Clearly
the space $C^\infty_c(G \times G)$ depends on the global structure of $G$ and
so is $\mathcal{F}(M)$.

In what follows we will investigate the dependence of $\mathcal{F}(M)$ on
$\hbar$. For this purpose we will change, for a moment, the definition of the
Fourier transform in the momentum variable in the following way
\begin{equation}
\tilde{f}(q,X) = \frac{1}{(2\pi)^n} \int_{\mathfrak{g}^*} f(q,p)
    e^{i\braket{p,X}} \ud{p},
\end{equation}
so that it will not depend on $\hbar$. Then we also have to redefine the space
$\widetilde{\mathcal{F}}(M)$ as the space of square integrable functions
$\tilde{f}$ with support in $G \times \hbar^{-1} \overline{\mathcal{O}}$ such
that the functions
\begin{equation}
\mathsf{f}(a,b) = \abs{\hbar}^{-n} \tilde{f}\bigl(a\exp(\tfrac{1}{2}V_a(b)),
    \hbar^{-1}V_a(b)\bigr) F(V_a(b))^{-1}
\end{equation}
extend to smooth functions on $G \times G$ with compact support. We will
explicitly denote the dependence of the spaces $\widetilde{\mathcal{F}}(M)$ and
$\mathcal{F}(M)$ on $\hbar$ by writing $\widetilde{\mathcal{F}}_\hbar(M)$ and
$\mathcal{F}_\hbar(M)$. Moreover, we will denote by
$\widetilde{\mathcal{F}}_0(M)$ the space $C^\infty_c(G \times \mathfrak{g})$ of
smooth functions with compact support and by $\mathcal{F}_0(M)$ its inverse
momentum Fourier transform. Then for any $f \in \mathcal{F}_0(M)$ there exists
a sufficiently small $\hbar_0 > 0$ for which $f \in \mathcal{F}_\hbar(M)$ for
every $\hbar \in \mathbb{R}$ such that $\abs{\hbar} < \hbar_0$.

Introducing a Baker-Campbell-Hausdorff product $\diamond_\hbar$ on the Lie
algebra $\mathfrak{g}$ corresponding to a Lie bracket
$[\sdot,\sdot]_\hbar = \hbar[\sdot,\sdot]$, we can write the \nbr{\star}product
in the following way
\begin{align}
(f \star_\hbar g)(q,p) & = \int_{\mathfrak{g} \times \mathfrak{g}}
    \tilde{f}\bigl(q\exp(-\tfrac{1}{2}\hbar(X \diamond_\hbar Y))
    \exp(\tfrac{1}{2}\hbar X),X\bigr)
    \tilde{g}\bigl(q\exp(\tfrac{1}{2}\hbar(X \diamond_\hbar Y))
    \exp(-\tfrac{1}{2}\hbar Y),Y\bigr)
    e^{-i\braket{p,X \diamond_\hbar Y}} \nonumber \\
& \quad {} \times L_\hbar(X,Y) \ud{X} \ud{Y},
\end{align}
where $L_\hbar(X,Y) = F_\hbar(X \diamond_\hbar Y)^{-1} F_\hbar(X) F_\hbar(Y)$
and $F_\hbar(X) = F(\hbar X)$. From this presentation of the \nbr{\star}product
it can be easily seen that for $f,g \in \mathcal{F}_0(M)$ and a sufficiently
small $\hbar_0 > 0$ the functions $f,g \in \mathcal{F}_\hbar(M)$ for every
$\hbar \in \mathbb{R}$ such that $\abs{\hbar} < \hbar_0$, and for any $x \in M$
the function $\hbar \mapsto (f \star_\hbar g)(x)$ is smooth on
$(-\hbar_0,\hbar_0)$ and $(f \star_\hbar g)(x) \to f(x)g(x)$ as $\hbar \to 0$.
From the following theorem we also get that $\lshad f,g \rshad_\hbar(x) \to
\{f,g\}(x)$ as $\hbar \to 0$, where $\{\sdot,\sdot\}$ is the canonical Poisson
bracket on $M$.

\begin{theorem}
The \nbr{\star}product enjoys the following power series expansion in $\hbar$
around $\hbar = 0$ up to third order:
\begin{align}
f \star_\hbar g & = fg + \frac{i\hbar}{2} \{f,g\}
    + \frac{1}{2!}\left(\frac{i\hbar}{2}\right)^2\bigl(B_2(f,g) + B_2(g,f)\bigr)
    + \frac{1}{3!}\left(\frac{i\hbar}{2}\right)^3\bigl(B_3(f,g) - B_3(g,f)\bigr)
    + O(\hbar^4),
\label{eq:18}
\end{align}
where
\begin{subequations}
\begin{align}
B_2(f,g) & = Z^i Z^j f Y_i Y_j g - Z^i Y_j f Z^j Y_i g
     + C^k_{ij} Z^i f Z^j Y_k g - 2 p_k C^k_{ij} Z^i Y_l f Z^l Z^j g
    - \frac{1}{6} C^k_{il} C^l_{jk} Z^i f Z^j g \nonumber \\
& \quad {} + \frac{1}{2} p_k p_l C^k_{ij} C^l_{rs} Z^i Z^r f Z^j Z^s g
    + \frac{2}{3} p_k C^k_{il} C^l_{jr} Z^i Z^j f Z^r g, \\
B_3(f,g) & = Z^i Z^j Z^k f Y_i Y_j Y_k g - 3 Z^i Z^j Y_k f Z^k Y_i Y_j g
    + 3 C^k_{ij} Z^i Z^l f Z^j Y_l Y_k g - 3 C^k_{ij} Z^i Y_l f Z^l Z^j Y_k g
    \nonumber \\
& \quad {} + C^k_{il} C^l_{jk} Z^i Y_r f Z^r Z^j g
    + 3 p_k C^k_{ij} Z^i Y_r Y_s f Z^r Z^s Z^j g
    - 3 p_k C^k_{ij} Z^i Z^r Y_s f Z^j Z^s Y_r g \nonumber \\
& \quad {} - 3 p_k C^k_{ij} C^l_{rs} Z^i Z^r Y_l f Z^j Z^s g
    - 2 p_k C^k_{il} C^l_{jr} Z^i Z^j Y_s f Z^s Z^r g
    + 2 p_k C^k_{il} C^l_{jr} Z^j Y_s f Z^s Z^r Z^i g \nonumber \\
& \quad {} - 3 p_k p_l C^k_{ij} C^l_{rs} Z^i Z^r Y_m f Z^m Z^j Z^s g
    - C^s_{lr} C^l_{si} C^r_{jk} Z^i Z^j f Z^k g
    - p_k C^k_{jl} C^l_{im} C^m_{rs} Z^i Z^r f Z^j Z^s g \nonumber \\
& \quad {} - \frac{1}{2}p_k C^k_{lm} C^l_{ij} C^m_{rs} Z^i Z^r f Z^j Z^s g
    - p_k C^k_{ij} C^m_{rl} C^l_{sm} Z^i Z^r f Z^j Z^s g
    + 2 p_k p_l C^k_{ij} C^l_{rm} C^m_{st} Z^i Z^r Z^s f Z^j Z^t g \nonumber \\
& \quad {} + \frac{1}{2} p_k p_l p_m C^k_{ij} C^l_{rs} C^m_{tu} Z^i Z^r Z^t f
    Z^j Z^s Z^u g.
\end{align}
\end{subequations}
\end{theorem}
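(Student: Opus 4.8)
The plan is to derive the expansion directly from the integral formula \eqref{eq:2} by expanding every $\hbar$-dependent ingredient as a power series and collecting terms order by order. The cleanest route is to use the rescaled formulation with $\diamond_\hbar$, $L_\hbar$, and $F_\hbar(X) = F(\hbar X)$ introduced just before the statement, since there the phase factor $e^{-i\braket{p,X\diamond_\hbar Y}}$ carries no $\hbar$ and all $\hbar$-dependence sits inside smooth functions of $\hbar X$ and $\hbar Y$. First I would Taylor-expand the three structural quantities: the BCH product $X \diamond_\hbar Y = X + Y + \tfrac{\hbar}{2}[X,Y] + \tfrac{\hbar^2}{12}([X,[X,Y]]+[Y,[Y,X]]) + \dotsb$ using \eqref{eq:17} with the rescaled bracket; the shift arguments $q\exp(\mp\tfrac{\hbar}{2}\,\cdot\,)\exp(\pm\tfrac{\hbar}{2}\,\cdot\,)$, which I would expand by pulling the group displacements back to derivatives along the left-invariant fields $Y_i$ and the constant fields $Z^i$ acting on $f$ and $g$; and the weight $L_\hbar(X,Y) = F(\hbar X)F(\hbar Y)/F(\hbar(X\diamond_\hbar Y))$, expanded using $F(X) = \sqrt{\abs{\det\lambda(\ad_X)}}$ with $\lambda(x) = \tfrac{2}{x}\sinh\tfrac{x}{2}$ and the known Bernoulli/Bernoulli-type series.

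The second step is to convert the resulting polynomials in $X$ and $Y$ back into differential operators acting on $f$ and $g$. Each monomial $X^{a}Y^{b}$ multiplying $\tilde f(\cdot,X)\tilde g(\cdot,Y)$ under $\int e^{-i\braket{p,X\diamond_\hbar Y}}\,\dd X\,\dd Y$ becomes, after the inverse momentum Fourier transform, a product of momentum-derivatives; written in the invariant frame these are precisely the operators $Z^i$ (derivative along the fiber, i.e.\ $\partial/\partial p_i$) and $Y_i$ (the configuration-space left-invariant derivative). Concretely I would note that a factor of $X_i$ under the integral produces a $Z^i$ acting on the left argument, the $\exp(\pm\tfrac{\hbar}{2}\cdot)$ displacements produce $Y_i$'s, and the $p_k C^k_{ij}$ terms arise from differentiating the phase $\braket{p,X\diamond_\hbar Y}$ where the quadratic BCH correction $\tfrac{\hbar}{2}[X,Y]$ contributes the structure constants. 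Comparing the zeroth, first, second, and third order terms with the target expression should reproduce $fg$, then $\tfrac{i\hbar}{2}\{f,g\}$ via \eqref{eq:54}, then the symmetric combination $B_2(f,g)+B_2(g,f)$ at order $\hbar^2$, and finally the antisymmetric combination $B_3(f,g)-B_3(g,f)$ at order $\hbar^3$.

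The symmetry structure is itself a useful consistency check that I would exploit rather than verify blindly: property \eqref{item:1c}, $\overline{f\star g} = \bar g \star \bar f$, forces the even-order bidifferential operators to be symmetric under swapping $f \leftrightarrow g$ together with complex conjugation of the expansion coefficients, and the odd-order ones to be antisymmetric. Since $(i\hbar/2)^2$ is real and $(i\hbar/2)^3$ is imaginary, this explains precisely why the second-order term appears as $B_2(f,g)+B_2(g,f)$ and the third as $B_3(f,g)-B_3(g,f)$, and it halves the work of determining the coefficients. I would also use property \eqref{item:1b} (associativity) only as an a posteriori sanity check on the second- and third-order cocycle conditions.

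The main obstacle will be the third-order term: $B_3$ contains on the order of twenty monomials with up to three structure-constant factors and up to three powers of $p_k$, and these arise from interactions between three separate expansions — the cubic BCH term in $X\diamond_\hbar Y$, the cross terms between the quadratic BCH correction and the $\exp(\pm\tfrac{\hbar}{2}\cdot)$ shifts, and the quadratic part of the weight $L_\hbar$ — so the bookkeeping of which $Z$'s and $Y$'s act on which argument, and the correct placement of the $p_k$ factors (which must be differentiated consistently because $Z^i p_k = \delta^i_k$), is where errors are most likely. I would organize this by separating contributions according to their total degree in the structure constants and in $p$, and by repeatedly using the identities for $\det\Ad_{\exp(\pm\frac12 Y)}$ and the Jacobians $\abs{\det\T_X\mathcal{R}_Y}$, $\abs{\det\T_X\mathcal{L}_Y}$ established just before the theorem to keep the measure factors under control. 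The antisymmetrization forced by the involution property should eliminate many candidate terms and serve as the final check.
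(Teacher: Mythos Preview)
Your proposal is correct and follows essentially the same approach as the paper: Taylor-expand in $\hbar$ the three ingredients of the rescaled integral formula (the BCH product $X\diamond_\hbar Y$, the group shifts $q\exp(\mp\tfrac{\hbar}{2}(X\diamond_\hbar Y))\exp(\pm\tfrac{\hbar}{2}\cdot)$, and the weight $L_\hbar$), then convert monomials in $X,Y$ and the $\braket{p,[X,Y]}$-type terms into the operators $Z^i$, $Y_i$, $p_k$ via the Fourier correspondence. The paper organizes the computation slightly differently, computing $\frac{\dd^k}{\dd\hbar^k}(f\star_\hbar g)\big|_{\hbar=0}$ directly and handling the group shifts through a chain-rule lemma for $\frac{\dd^k}{\dd\hbar^k}\varphi(q\exp A(\hbar))$ in terms of left-invariant vector fields, but this is only a matter of bookkeeping; your symmetry check via the involution property and the remark about the Jacobians are extra sanity devices not used in the paper's argument.
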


\begin{proof}
From Taylor theorem we get for every $x \in M$ that
\begin{align}
(f \star_\hbar g)(x) & = f(x)g(x) + \frac{\dd}{\dd\hbar}(f \star_\hbar g)(x)
    \bigg|_{\hbar = 0} \hbar
    + \frac{1}{2!} \frac{\dd^2}{\dd\hbar^2}(f \star_\hbar g)(x)
    \bigg|_{\hbar = 0} \hbar^2 \nonumber \\
& \quad {} + \frac{1}{3!} \frac{\dd^3}{\dd\hbar^3}(f \star_\hbar g)(x)
    \bigg|_{\hbar = 0} \hbar^3 + O(\hbar^4).
\end{align}
Therefore, we have to calculate derivatives with respect to $\hbar$ of
$(f \star_\hbar g)(x)$ at $\hbar = 0$.

For $\varphi \in C^\infty(G)$ and a smooth function $A \colon \mathbb{R} \to
\mathfrak{g}$ such that $A(0) = 0$ we get
\begin{align}
\frac{\dd}{\dd\hbar} \varphi\bigl(q\exp(A(\hbar))\bigr) & = \T_{q\exp(A(\hbar))}
    \varphi\bigl(\T_e L_{q\exp(A(\hbar))}\phi(\ad_{A(\hbar)})A'(\hbar)\bigr)
= \T_{q\exp(A(\hbar))} \varphi\bigl(\T_e L_{q\exp(A(\hbar))} B(\hbar)\bigr)
    \nonumber \\
& = \T_{q\exp(A(\hbar))} \varphi\bigl(L_{B(\hbar)}(q\exp(A(\hbar)))\bigr)
= L_{B(\hbar)} \varphi\bigl(q\exp(A(\hbar))\bigr),
\end{align}
where $B(\hbar) = \phi(\ad_{A(\hbar)})A'(\hbar)$ and $L_{B(\hbar)}$ denotes a
left-invariant vector field corresponding to $B(\hbar) \in \mathfrak{g}$.
Moreover, we calculate that
\begin{equation}
\begin{split}
\frac{\dd^2}{\dd\hbar^2} \varphi\bigl(q\exp(A(\hbar))\bigr) & =
    L_{B(\hbar)} L_{B(\hbar)} \varphi\bigl(q\exp(A(\hbar))\bigr)
    + L_{B'(\hbar)} \varphi\bigl(q\exp(A(\hbar))\bigr), \\
\frac{\dd^3}{\dd\hbar^3} \varphi\bigl(q\exp(A(\hbar))\bigr) & =
    L_{B(\hbar)} L_{B(\hbar)} L_{B(\hbar)} \varphi\bigl(q\exp(A(\hbar))\bigr)
    + 2 L_{B(\hbar)} L_{B'(\hbar)} \varphi\bigl(q\exp(A(\hbar))\bigr) \\
& \quad {} + L_{B'(\hbar)} L_{B(\hbar)} \varphi\bigl(q\exp(A(\hbar))\bigr)
    + L_{B''(\hbar)} \varphi\bigl(q\exp(A(\hbar))\bigr).
\end{split}
\end{equation}
At $\hbar = 0$ the function $B(\hbar)$ and its derivatives are equal
\begin{equation}
B(0) = A'(0), \quad B'(0) = A''(0), \quad
B''(0) = A'''(0) - \frac{1}{2} [A'(0),A''(0)].
\end{equation}
Therefore,
\begin{equation}
\begin{split}
\frac{\dd}{\dd\hbar}\varphi\bigl(q\exp(A(\hbar))\bigr)\bigg|_{\hbar = 0} & =
    L_{A'(0)} \varphi(q), \\
\frac{\dd^2}{\dd\hbar^2}\varphi\bigl(q\exp(A(\hbar))\bigr)\bigg|_{\hbar = 0} & =
    L_{A'(0)} L_{A'(0)} \varphi(q) + L_{A''(0)} \varphi(q), \\
\frac{\dd^3}{\dd\hbar^3}\varphi\bigl(q\exp(A(\hbar))\bigr)\bigg|_{\hbar = 0} & =
    L_{A'(0)} L_{A'(0)} L_{A'(0)} \varphi(q) + 3 L_{A'(0)} L_{A''(0)} \varphi(q)
    \\
& \quad {} - \frac{3}{2} L_{[A'(0),A''(0)]} \varphi(q) + L_{A'''(0)} \varphi(q).
\end{split}
\end{equation}
For $A(\hbar) = \hbar\bigl((-\tfrac{1}{2}(X \diamond_\hbar Y)) \diamond_\hbar
(\tfrac{1}{2}X)\bigr)$ with the help of the expansion \eqref{eq:17} we can
calculate that
\begin{equation}
A'(0) = -\frac{1}{2}Y, \quad A''(0) = -\frac{1}{4}[X,Y], \quad
A'''(0) = -\frac{3}{16}[Y,[Y,X]].
\end{equation}
Hence,
\begin{equation}
\begin{split}
\frac{\dd}{\dd\hbar}\tilde{f}\bigl(q\exp(-\tfrac{1}{2}\hbar(X \diamond_\hbar Y))
    \exp(\tfrac{1}{2}\hbar X),X\bigr) \bigg|_{\hbar = 0} & =
    -\frac{1}{2} (L_Y f)^\sim(q,X), \\
\frac{\dd^2}{\dd\hbar^2}
    \tilde{f}\bigl(q\exp(-\tfrac{1}{2}\hbar(X \diamond_\hbar Y))
    \exp(\tfrac{1}{2}\hbar X),X\bigr) \bigg|_{\hbar = 0} & =
    \frac{1}{4} (L_Y L_Y f)^\sim(q,X) - \frac{1}{4} (L_{[X,Y]} f)^\sim(q,X), \\
\frac{\dd^3}{\dd\hbar^3}
    \tilde{f}\bigl(q\exp(-\tfrac{1}{2}\hbar(X \diamond_\hbar Y))
    \exp(\tfrac{1}{2}\hbar X),X\bigr) \bigg|_{\hbar = 0} & = -\frac{1}{8}
    (L_Y L_Y L_Y f)^\sim(q,X) + \frac{3}{8} (L_Y L_{[X,Y]} f)^\sim(q,X),
\end{split}
\label{eq:19}
\end{equation}
where we identify the left-invariant vector fields $L_X$, $L_Y$ and $L_{[X,Y]}$
on $G$ with vector fields on $G \times \mathfrak{g}^*$ by means of the
identification $\T_{(q,p)}(G \times \mathfrak{g}^*) = \T_q G \oplus
\mathfrak{g}^*$. Similarly we calculate that
\begin{equation}
\begin{split}
\frac{\dd}{\dd\hbar}\tilde{g}\bigl(q\exp(\tfrac{1}{2}\hbar(X \diamond_\hbar Y))
    \exp(-\tfrac{1}{2}\hbar Y),Y\bigr) \bigg|_{\hbar = 0} & =
    \frac{1}{2} (L_X g)^\sim(q,Y), \\
\frac{\dd^2}{\dd\hbar^2}
    \tilde{g}\bigl(q\exp(\tfrac{1}{2}\hbar(X \diamond_\hbar Y))
    \exp(-\tfrac{1}{2}\hbar Y),Y\bigr) \bigg|_{\hbar = 0} & =
    \frac{1}{4} (L_X L_X g)^\sim(q,Y) + \frac{1}{4} (L_{[X,Y]} g)^\sim(q,Y), \\
\frac{\dd^3}{\dd\hbar^3}
    \tilde{g}\bigl(q\exp(\tfrac{1}{2}\hbar(X \diamond_\hbar Y))
    \exp(-\tfrac{1}{2}\hbar Y),Y\bigr) \bigg|_{\hbar = 0} & = \frac{1}{8}
    (L_X L_X L_X g)^\sim(q,Y) + \frac{3}{8} (L_X L_{[X,Y]} g)^\sim(q,Y).
\end{split}
\label{eq:20}
\end{equation}

The function $F(X)$ can be written in the form
\begin{equation}
F(X) = \sqrt{\det\lambda(\ad_X)}
= \sqrt{\det\bigl(\exp(\ln\lambda(\ad_X))\bigr)}
= \exp\left(\frac{1}{2}\Tr\bigl(\ln\lambda(\ad_X)\bigr)\right).
\end{equation}
Therefore,
\begin{equation}
L(X,Y) = \exp\left(\frac{1}{2}\Tr\bigl(\ln\lambda(\ad_X) + \ln\lambda(\ad_Y)
    - \ln\lambda(\ad_{X \diamond Y})\bigr)\right).
\end{equation}
Using the above formula the derivatives of the function $L_\hbar(X,Y) =
L(\hbar X,\hbar Y)$ can be easily calculated:
\begin{equation}
\begin{gathered}
\frac{\dd}{\dd\hbar} L_\hbar(X,Y) \bigg|_{\hbar = 0} = 0, \quad
\frac{\dd^2}{\dd\hbar^2} L_\hbar(X,Y) \bigg|_{\hbar = 0} =
    -\frac{1}{12} \Tr(\ad_X \circ \ad_Y), \\
\frac{\dd^3}{\dd\hbar^3} L_\hbar(X,Y) \bigg|_{\hbar = 0} =
    -\frac{1}{8} \Tr(\ad_{X + Y} \circ \ad_{[X,Y]}).
\end{gathered}
\label{eq:21}
\end{equation}

With the help of the expansion \eqref{eq:17} of the product $\diamond_\hbar$ we
get the following derivatives of the exponent:
\begin{equation}
\begin{split}
\frac{\dd}{\dd\hbar} e^{-i\braket{p,X \diamond_\hbar Y}} \bigg|_{\hbar = 0} & =
    -\frac{1}{2}i \braket{p,[X,Y]} e^{-i\braket{p,X + Y}}, \\
\frac{\dd^2}{\dd\hbar^2} e^{-i\braket{p,X \diamond_\hbar Y}} \bigg|_{\hbar = 0}
    & = \left(-\frac{1}{4} \braket{p,[X,Y]} \braket{p,[X,Y]}
    - \frac{1}{6}i \braket{p,[X - Y,[X,Y]]}\right) e^{-i\braket{p,X + Y}}, \\
\frac{\dd^3}{\dd\hbar^3} e^{-i\braket{p,X \diamond_\hbar Y}} \bigg|_{\hbar = 0}
    & = \biggl(\frac{1}{8}i \braket{p,[X,Y]} \braket{p,[X,Y]} \braket{p,[X,Y]}
    - \frac{1}{4} \braket{p,[X,Y]} \braket{p,[X - Y,[X,Y]]} \\
& \quad {} + \frac{1}{4}i \braket{p,[Y,[X,[X,Y]]]}\biggr)e^{-i\braket{p,X + Y}}.
\end{split}
\label{eq:22}
\end{equation}

Let $X_1,\dotsc,X_n$ be a basis in $\mathfrak{g}$ and $Y_1,\dotsc,Y_n,Z^1,
\dotsc,Z^n$ a corresponding frame fields on $M$ as in Section~\ref{subsec:2.2}.
Moreover, let $p_j \colon M \to \mathbb{R}$ be fiber variables defined by
$p_j(q,p) = \braket{p,X_j}$ and $C^k_{ij}$ the structure constants of
$\mathfrak{g}$ in the basis $\{X_i\}$. After expanding $X$, $Y$ in the basis
$\{X_i\}$ and using formulas \eqref{eq:19}, \eqref{eq:20}, \eqref{eq:21}, and
\eqref{eq:22} we can calculate the derivatives with respect to $\hbar$ of the
\nbr{\star}product. Formula \eqref{eq:18} then follows.
\end{proof}

\subsection{Extension of the $\star$-product to the Hilbert space
$\mathcal{L}(M)$}
\label{subsec:3.2}
In what follows we will extend the \nbr{\star}product to the Hilbert space
$\mathcal{L}(M)$ introduced in Section~\ref{subsec:2.6}. The space
$\mathcal{F}(M)$ inherits from $\mathcal{L}(M)$ the scalar product
\eqref{eq:15}. The norm corresponding to this scalar product will be denoted by
$\norm{\sdot}_2$.

\begin{theorem}
For $f,g \in \mathcal{F}(M)$ the following inequality holds
\begin{equation}
\norm{f \star g}_2 \leq \norm{f}_2 \norm{g}_2.
\label{eq:16}
\end{equation}
\end{theorem}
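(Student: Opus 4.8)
The plan is to reduce the inequality to the submultiplicativity of the Hilbert--Schmidt norm under operator composition, exploiting the dictionary between $\mathcal{F}(M)$ and the integral kernels in $C^\infty_c(G \times G)$ established earlier. The point is that all of the geometric data hidden in the \nbr{\star}product (the $\diamond$-product, the factor $L(X,Y)$, the Jacobians) has already been packaged into the clean kernel-composition formula \eqref{eq:29}, so the estimate should follow from abstract operator theory alone.

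First I would use Lemma~\ref{lem:1} to transport the norm $\norm{\sdot}_2$ to the kernel side: for $f \in \mathcal{F}(M)$ with kernel $\mathsf{f} \in C^\infty_c(G \times G)$ one has $\norm{f}_2^2 = (f,f) = (\mathsf{f},\mathsf{f}) = \int_{G \times G} \abs{\mathsf{f}(a,b)}^2 \ud{m(a)}\ud{m(b)}$. Since $\mathsf{f}$ is smooth with compact support it lies in $L^2(G \times G, \dd{m} \times \dd{m})$, so it is the integral kernel of a Hilbert--Schmidt operator $\hat{f}$ on $L^2(G,\dd{m})$, acting by $(\hat{f}\psi)(a) = \int_G \mathsf{f}(a,c)\psi(c)\ud{m(c)}$; moreover $\norm{f}_2 = \norm{\hat{f}}_{\mathrm{HS}}$, the Hilbert--Schmidt norm of $\hat{f}$.

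Next I would observe that formula \eqref{eq:29} identifies the kernel $\mathsf{h}$ of $f \star g$ with the composition of kernels $\mathsf{h}(a,b) = \int_G \mathsf{f}(a,c)\mathsf{g}(c,b)\ud{m(c)}$, which is precisely the integral kernel of the operator product $\hat{f}\hat{g}$. Hence $\norm{f \star g}_2 = \norm{\hat{f}\hat{g}}_{\mathrm{HS}}$, and the claimed inequality becomes the standard estimate $\norm{\hat{f}\hat{g}}_{\mathrm{HS}} \leq \norm{\hat{f}}_{\mathrm{HS}} \norm{\hat{g}}_{\mathrm{HS}}$. To close it I would invoke the two classical bounds for Hilbert--Schmidt operators, $\norm{AB}_{\mathrm{HS}} \leq \norm{A}_{\mathrm{op}} \norm{B}_{\mathrm{HS}}$ and $\norm{A}_{\mathrm{op}} \leq \norm{A}_{\mathrm{HS}}$, which together give $\norm{\hat{f}\hat{g}}_{\mathrm{HS}} \leq \norm{\hat{f}}_{\mathrm{op}} \norm{\hat{g}}_{\mathrm{HS}} \leq \norm{\hat{f}}_{\mathrm{HS}} \norm{\hat{g}}_{\mathrm{HS}} = \norm{f}_2 \norm{g}_2$.

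I expect the only point requiring a little care to be the justification that the Hilbert--Schmidt calculus applies verbatim on $L^2(G,\dd{m})$ for a possibly non-compact group $G$ --- that is, that the kernel of $\hat{f}\hat{g}$ really is the pointwise composition $\mathsf{h}$ and that $\mathsf{h}$ again belongs to $L^2(G \times G)$. This follows directly from the compactness of the supports and the smoothness of $\mathsf{f}$ and $\mathsf{g}$: the inner integral defining $\mathsf{h}$ ranges over the compact set $\pi_2(\supp \mathsf{f}) \cap \pi_1(\supp \mathsf{g})$, so all integrals converge absolutely, Fubini applies, and $\mathsf{h} \in C^\infty_c(G \times G) \subset L^2(G \times G)$ as already noted in the proof of Theorem~\ref{thm:1}\,(\ref{item:1a}). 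No delicate analysis of the $\diamond$-product or of $L(X,Y)$ is needed.
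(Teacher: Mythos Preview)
Your proof is correct and follows essentially the same route as the paper: reduce via Lemma~\ref{lem:1} and the kernel-composition formula \eqref{eq:29} to the inequality $\norm{\mathsf{h}}_2 \leq \norm{\mathsf{f}}_2 \norm{\mathsf{g}}_2$ for kernels in $L^2(G\times G)$. The only cosmetic difference is that the paper carries out the pointwise Cauchy--Schwarz estimate for the kernel composition explicitly, whereas you package the same computation as the standard Hilbert--Schmidt bound $\norm{AB}_{\mathrm{HS}}\leq\norm{A}_{\mathrm{HS}}\norm{B}_{\mathrm{HS}}$.
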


\begin{proof}
Let $\mathsf{f},\mathsf{g} \in C^\infty_c(G \times G)$ correspond to $f$ and $g$
as in \eqref{eq:5}. Then, in accordance to \eqref{eq:23}, a function
$\mathsf{h} \in C^\infty_c(G \times G)$ corresponding to the product $f \star g$
will be given by the formula \eqref{eq:29}. The functions
$\mathsf{f},\mathsf{g},\mathsf{h}$ are elements of the Hilbert space
$L^2(G \times G,\dd{m} \times \dd{m})$. By virtue of Lemma~\ref{lem:1} their
\nbr{L^2}norms are equal to the \nbr{L^2}norms of $f$, $g$ and $f \star g$.
Therefore, the inequality \eqref{eq:16} is equivalent to the following
inequality
\begin{equation}
\norm{\mathsf{h}}_2 \leq \norm{\mathsf{f}}_2 \norm{\mathsf{g}}_2.
\label{eq:24}
\end{equation}

For fixed $a,b \in G$ we can apply the Schwartz inequality to functions
$\mathsf{f}(a,\sdot),\mathsf{g}(\sdot,b) \in L^2(G,\dd{m})$ receiving the
following inequality
\begin{equation}
\Abs{\int_G \mathsf{f}(a,c) \mathsf{g}(c,b) \ud{m(c)}}^2 \leq
    \int_G \abs{\mathsf{f}(a,c)}^2 \ud{m(c)}
    \int_G \abs{\mathsf{g}(d,b)}^2 \ud{m(d)}.
\end{equation}
With its use we calculate that
\begin{align}
\norm{\mathsf{h}}_2^2 & = \int_G \int_G \abs{\mathsf{h}(a,b)}^2
    \ud{m(a)}\ud{m(b)}
= \int_G \int_G \Abs{\int_G \mathsf{f}(a,c) \mathsf{g}(c,b) \ud{m(c)}}^2
    \ud{m(a)}\ud{m(b)} \nonumber \\
& \leq \int_G \int_G \left(\int_G \abs{\mathsf{f}(a,c)}^2 \ud{m(c)}
    \int_G \abs{\mathsf{g}(d,b)}^2 \ud{m(d)}\right) \ud{m(a)}\ud{m(b)}
    \nonumber \\
& = \int_G \int_G \abs{\mathsf{f}(a,c)}^2 \ud{m(a)}\ud{m(c)}
    \int_G \int_G \abs{\mathsf{g}(d,b)}^2 \ud{m(d)}\ud{m(b)}
= \norm{\mathsf{f}}_2^2 \norm{\mathsf{g}}_2^2,
\end{align}
which proves \eqref{eq:24} and consequently \eqref{eq:16}.
\end{proof}

The consequence of the inequality \eqref{eq:16} is continuity of the
\nbr{\star}product in the \nbr{L^2}norm. From this property and the fact that
$\mathcal{F}(M)$ is dense in $\mathcal{L}(M)$ we can continuously extend the
\nbr{\star}product to the whole Hilbert space $\mathcal{L}(M)$. Note, that since
complex-conjugation is continuous in the \nbr{L^2}norm it will remain an
involution for the extended \nbr{\star}product. Hence, $\mathcal{L}(M)$ is an
involutive algebra with an inner product $(\sdot,\sdot)$, which satisfies the
following properties:
\begin{enumerate}[(i)]
\item for every $f \in \mathcal{L}(M)$ the maps $g \mapsto f \star g$ and
$g \mapsto g \star f$ are bounded on $\mathcal{L}(M)$ (multiplication is a
bounded operator),
\item $(f \star g,h) = (g,\bar{f} \star h)$ and $(g \star f,h) =
(g,h \star \bar{f})$ (involution is the adjoint),
\item $(f,g) = (\bar{g},\bar{f})$ (involution is an antilinear isometry),
\item $\mathcal{L}(M) \star \mathcal{L}(M)$ is linearly dense in
$\mathcal{L}(M)$.
\end{enumerate}
Therefore, $\mathcal{L}(M)$ is a Hilbert algebra. Observe, that for
$f,g \in \mathcal{F}(M)$, by virtue of the property (\ref{item:1d}) of
Theorem~\ref{thm:1}, we can write the scalar product of $f,g$ in the form
\begin{equation}
(f,g) = \int_M \bar{f} \star g \ud{l}.
\end{equation}
We can also extend the trace functional $\tr$ introduced in \eqref{eq:25} to the
whole space $\mathcal{L}^2(M) = \mathcal{L}(M) \star \mathcal{L}(M)$. It then
follows that
\begin{equation}
(f,g) = \tr(\bar{f} \star g), \quad f,g \in \mathcal{L}(M).
\end{equation}

\subsection{$C^*$-algebra of observables and states}
\label{subsec:3.3}
In Section~\ref{subsec:3.2} we presented an extension of the algebra
$\mathcal{F}(M)$ to the Hilbert algebra $\mathcal{L}(M)$. In what follows we
will extend $\mathcal{L}(M)$ to a \nbr{C^*}algebra. Let us introduce on the
space $\mathcal{L}(M)$ the following norm:
\begin{equation}
\norm{f} = \sup\{\norm{f \star h}_2 \mid h \in \mathcal{F}(M),\ 
    \norm{h}_2 = 1\}.
\end{equation}
This is a \nbr{C^*}norm, i.e.\ it satisfies
\begin{enumerate}[(i)]
\item $\norm{f \star g} \leq \norm{f}\norm{g}$,
\item $\norm{\bar{f}} = \norm{f}$,
\item $\norm{\bar{f} \star f} = \norm{f}^2$,
\end{enumerate}
for $f,g \in \mathcal{L}(M)$. Indeed, this follows directly from the fact that
the map $h \mapsto f \star h$ is a bounded linear operator on the Hilbert space
$\mathcal{L}(M)$ defined on a dense domain $\mathcal{F}(M)$, and the norm
$\norm{f}$ is just the operator norm of this operator.

Since $\norm{f}$ is the smallest constant $C$ satisfying the inequality
\begin{equation}
\norm{f \star h}_2 \leq C\norm{h}_2 \text{ for all $h \in \mathcal{F}(M)$},
\end{equation}
it is clear from \eqref{eq:16} that $\norm{f} \leq \norm{f}_2$, and so
convergence in \nbr{L^2}norm implies convergence in the norm $\norm{\sdot}$.

The spaces $\mathcal{F}(M)$ and $\mathcal{L}(M)$ are not complete with respect
to the \nbr{C^*}norm $\norm{\sdot}$, thus $\mathcal{F}(M)$ and $\mathcal{L}(M)$
are only pre-$C^*$-algebras. However, they can be completed to a
\nbr{C^*}algebra. This completion will be denoted by $\mathcal{A}(M)$. The
algebra $\mathcal{A}(M)$ is a \nbr{C^*}algebra of observables.

In what follows we will explicitly denote the dependence of the algebra
$\mathcal{A}(M)$ on $\hbar$ by writing $\mathcal{A}_\hbar(M)$. Moreover, we will
denote by $\mathcal{A}_0(M)$ the commutative \nbr{C^*}algebra of
\nbr{\mathbb{C}}valued continuous functions on $M$ which vanish at infinity with
the usual supremum norm $\norm{f}_0 = \sup_{x \in M} \abs{f(x)}$. The space
$\mathcal{F}_0(M)$ introduced in Section~\ref{subsec:3.1} is dense in
$\mathcal{A}_0(M)$. It happens that the field of \nbr{C^*}algebras,
$\hbar \mapsto \mathcal{A}_\hbar(M)$, is a strict deformation quantization of
the symplectic manifold $M$. In fact, there holds:

\begin{theorem}
For any $f,g \in \mathcal{F}_0(M)$ there exists $\hbar_0 > 0$ such that for
every $\hbar \in \mathbb{R}$, $\abs{\hbar} < \hbar_0$ the functions
$f,g \in \mathcal{A}_\hbar(M)$ and
\begin{enumerate}[(i)]
\item the map $\hbar \mapsto \norm{f}_\hbar$ is continuous on
$(-\hbar_0,\hbar_0)$,
\item $\norm{f \star_\hbar g - f \cdot g}_\hbar \to 0$ as $\hbar \to 0$,
\item $\norm{\lshad f,g \rshad_\hbar - \{f,g\}}_\hbar \to 0$ as $\hbar \to 0$.
\end{enumerate}
\end{theorem}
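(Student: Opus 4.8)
The plan is to first express the $C^*$-norm as an operator norm on $L^2(G,\ud m)$, then treat (ii) and (iii) by combining the expansion \eqref{eq:18} with Schur's test, while the continuity in (i) — and in particular its value at $\hbar=0$ — requires a separate semiclassical argument. By Lemma~\ref{lem:1} the assignment $f\mapsto\mathsf{f}$ is an isometry of $(\mathcal{F}(M),\norm{\sdot}_2)$ onto $(C^\infty_c(G\times G),\norm{\sdot}_{L^2})$, and by \eqref{eq:29} it carries $\star_\hbar$-multiplication into the kernel composition $\mathsf{h}(a,b)=\int_G\mathsf{f}(a,c)\mathsf{g}(c,b)\ud{m(c)}$. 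For each fixed $b$ this composition is the integral operator $K_{\mathsf{f}_\hbar}$ on $L^2(G,\ud m)$ with kernel $\mathsf{f}_\hbar$ acting on $\mathsf{g}(\sdot,b)$, so the definition of $\norm{f}_\hbar$ as the operator norm of left $\star_\hbar$-multiplication gives $\norm{f}_\hbar=\norm{K_{\mathsf{f}_\hbar}}_{\mathrm{op}}$. Although $\norm{\sdot}_\hbar\le\norm{\sdot}_2$ by \eqref{eq:16}, for a fixed $f\in\mathcal{F}_0(M)$ the quantity $\norm{f}_2$ grows without bound as $\hbar\to0$ (like $\abs{2\pi\hbar}^{-n/2}$), so this crude bound is useless in the classical limit; I would instead control $\norm{K_{\mathsf{f}_\hbar}}_{\mathrm{op}}$ through Schur's test, by the two quantities $\sup_a\int_G\abs{\mathsf{u}(a,b)}\ud{m(b)}$ and $\sup_b\int_G\abs{\mathsf{u}(a,b)}\ud{m(a)}$, which stay finite since after the substitutions $b=a\exp(X)$ and $Y=\hbar^{-1}X$ they become the fixed $L^1$-norms of the momentum Fourier transforms.

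For (ii) and (iii) I would exploit the expansion \eqref{eq:18} (shrinking $\hbar_0$ so that $f,g$ and the relevant products lie in $\mathcal{F}_\hbar(M)$ for $\abs{\hbar}<\hbar_0$). Set $u_\hbar=f\star_\hbar g-f\cdot g$; since $f\cdot g\in\mathcal{F}_0(M)$, $u_\hbar$ lies in $\mathcal{F}_\hbar(M)\subset\mathcal{A}_\hbar(M)$, and its momentum Fourier transform is the difference $\tilde{u}_\hbar=\tilde{f}\odot_\hbar\tilde{g}-\tilde{f}*\tilde{g}$ of the ($\hbar$-scaled) twisted convolution \eqref{eq:4} and the ordinary convolution. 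Because $\diamond_\hbar$, $L_\hbar$ and the exponential shifts depend smoothly on $\hbar$ and converge to their values at $\hbar=0$ uniformly on the supports (which stay inside a fixed compact set), the expansion gives $\sup_a\int\abs{\tilde{u}_\hbar(a,Y)}\ud{Y}=O(\hbar)$, and Schur's test yields $\norm{f\star_\hbar g-f\cdot g}_\hbar=O(\hbar)\to0$. For (iii) the decisive observation is that the symmetric second-order terms $B_2(f,g)+B_2(g,f)$ in \eqref{eq:18} cancel in the antisymmetric combination, so that $f\star_\hbar g-g\star_\hbar f=i\hbar\{f,g\}+O(\hbar^3)$ and hence $\lshad f,g\rshad_\hbar-\{f,g\}=O(\hbar^2)$. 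Since $\{f,g\}\in\mathcal{F}_0(M)$ — its momentum Fourier transform is smooth and compactly supported by \eqref{eq:54} — the same Schur estimate gives $\norm{\lshad f,g\rshad_\hbar-\{f,g\}}_\hbar=O(\hbar^2)\to0$.

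For (i), continuity at any $\hbar\ne0$ is a direct consequence of the reduction: the operator norm is $1$-Lipschitz in the kernel, so $\abs{\norm{f}_{\hbar_1}-\norm{f}_{\hbar_2}}\le\norm{K_{\mathsf{f}_{\hbar_1}-\mathsf{f}_{\hbar_2}}}_{\mathrm{op}}$, and $\hbar\mapsto\mathsf{f}_\hbar$ is smooth, hence continuous in the Schur quantities, away from $0$. It remains to prove continuity at $\hbar=0$, i.e.\ $\lim_{\hbar\to0}\norm{f}_\hbar=\norm{f}_0=\sup_x\abs{f(x)}$. For the lower bound I would argue by localization: fixing $(q_0,p_0)$ with $\abs{f(q_0,p_0)}$ close to $\sup\abs{f}$ and a normalized coherent state $\psi_\hbar\in L^2(G,\ud m)$ concentrated at $q_0$ on a scale $\sqrt{\hbar}$ and modulated by the momentum $p_0/\hbar$, one verifies by concentration of the kernel near the diagonal that $K_{\mathsf{f}_\hbar}\psi_\hbar=f(q_0,p_0)\psi_\hbar+o(1)$; hence $\norm{f}_\hbar\ge\norm{K_{\mathsf{f}_\hbar}\psi_\hbar}/\norm{\psi_\hbar}\to\abs{f(q_0,p_0)}$, so that $\liminf_{\hbar\to0}\norm{f}_\hbar\ge\sup\abs{f}$.

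The upper bound $\limsup_{\hbar\to0}\norm{f}_\hbar\le\sup\abs{f}$ is the main obstacle, precisely because no $L^1$- or $L^2$-estimate of the kernel is sharp enough to detect the supremum of the symbol. I expect to obtain it by a G\aa rding-type positivity argument: writing $M_0=\sup\abs{f}$ and, for $\epsilon>0$, $M_0^2+\epsilon-\abs{f}^2=w_\epsilon^2$ with $w_\epsilon$ real and bounded, one approximates $w_\epsilon$ inside the unitized algebra so that $\overline{w_\epsilon}\star_\hbar w_\epsilon\ge0$, and uses (ii) to conclude $\bar{f}\star_\hbar f\le(M_0^2+\epsilon)\cdot1+r_\hbar$ with $\norm{r_\hbar}_\hbar\to0$; the $C^*$-identity $\norm{f}_\hbar^2=\norm{\bar{f}\star_\hbar f}_\hbar$ then gives $\norm{f}_\hbar^2\le M_0^2+\epsilon+o(1)$, and $\epsilon\to0$ finishes. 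Equivalently, one may realize $K_{\mathsf{f}_\hbar}$ as an approximate Berezin (anti-Wick) quantization $\int_M f(x)P_x\ud{x}+o(1)$ against a coherent-state resolution of the identity $\int_M P_x\ud{x}=1$ (with $P_x$ the rank-one projection onto the coherent state at $x$), for which $\norm{K_{\mathsf{f}_\hbar}}_{\mathrm{op}}\le\sup\abs{f}+o(1)$ is automatic. In either route the delicate points are that the square root $w_\epsilon$ does not vanish at infinity and so is not itself in $\mathcal{F}_0(M)$ — forcing a cutoff that must be reconciled with the $\hbar$-dependent norms — and the uniform control of the two limits $\hbar\to0$ and $\epsilon\to0$; this is where I anticipate the real work lies.
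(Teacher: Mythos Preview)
The paper does not actually give a proof of this theorem: immediately after the statement it only remarks that it ``can be proved with the help of the operator representation introduced in Section~\ref{sec:4} and through similar considerations as in \cite{Landsman:1993a,Landsman:1993b}.'' Your proposal is precisely in this spirit --- you pass to the integral-kernel picture via Theorem~\ref{thm:3}, identify $\norm{f}_\hbar$ with the operator norm of $K_{\mathsf{f}_\hbar}$ on $L^2(G,\ud m)$, and then run Schur-type estimates together with a coherent-state/G\aa rding argument at $\hbar=0$, which is exactly how Landsman proceeds on $\T^*\mathcal{Q}$. So there is no divergence of strategy to report; you have simply supplied the details the paper omits.

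Your outline is sound and your diagnosis of the difficulties is accurate. A few remarks. The Schur reduction for (ii) and (iii) is correct: after $b=a\exp(X)$ and the rescaling $X=\hbar Y$ the Jacobian factor $|\hbar|^{-n}$ cancels and the remaining integrand involves $\tilde u_\hbar(a\exp(\tfrac12\hbar Y),Y)$ against bounded weights $F(\hbar Y)^{-1}|\det\phi(\ad_{\hbar Y})|$, all supported in a fixed compact set in $Y$; smooth dependence on $\hbar$ then gives $O(\hbar)$ (resp.\ $O(\hbar^2)$) uniformly in $a$. Your observation that the $\hbar^2$ term in \eqref{eq:18} is symmetric and hence drops from the commutator is exactly what makes (iii) go through. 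For (i), the Lipschitz argument for $\hbar\ne0$ is fine; for the limit at $\hbar=0$ the square-root obstacle you flag is real, and in Landsman's treatment it is handled either by the Berezin/anti-Wick comparison you sketch or, equivalently, by applying (ii) to $\bar f\star_\hbar f$ and invoking a positivity-of-quantization lemma --- in either version one needs only an approximation of $w_\epsilon$ in the sup norm on a fixed compact set, which avoids the unitization issue you worry about since the operator $K_{\mathsf{f}_\hbar}$ is already compactly supported in its kernel variables.
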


The above theorem can be proved with the help of the operator representation
introduced in Section~\ref{sec:4} and through similar considerations as in
\cite{Landsman:1993a,Landsman:1993b}.

The \nbr{C^*}algebra of observables $\mathcal{A}(M)$ can be used to define
states of the system. From definition a state is a continuous linear functional
$\Lambda \colon \mathcal{A}(M) \to \mathbb{C}$, which is positively defined and
normalized to unity, i.e.
\begin{enumerate}[(i)]
\item $\norm{\Lambda} = 1$,
\item $\Lambda(\bar{f} \star f) \geq 0$ for every $f \in \mathcal{A}(M)$.
\end{enumerate}
The set of all states is convex. Extreme points of this set are called pure
states. These are states which cannot be written as convex linear
combinations of some other states. In other words $\Lambda_\text{pure}$ is a
pure state if and only if there do not exist two different states $\Lambda_1$
and $\Lambda_2$ such that $\Lambda_\text{pure} = p\Lambda_1 + (1 - p)\Lambda_2$
for some $p \in (0,1)$.

The expectation value of an observable $f \in \mathcal{A}(M)$ in a state
$\Lambda$ is from definition equal
\begin{equation}
\braket{f}_\Lambda = \Lambda(f).
\end{equation}
If $f$ is self-adjoint, i.e.\ $\bar{f} = f$, then $\braket{f}_\Lambda \in
\mathbb{R}$.

Similarly as in classical mechanics, where states can be characterized in terms
of probabilistic distribution functions on phase space, in quantum mechanics we
can also associate with states certain phase-space functions called
quasi-probabilistic distribution functions. The next two theorems provide such
characterization. Their proofs follow directly from the operator representation
introduced in Section~\ref{sec:4}.

\begin{theorem}
\label{thm:5}
If $\rho \in \mathcal{L}(M)$ satisfies
\begin{enumerate}[(i)]
\item\label{item:2a} $\bar{\rho} = \rho$,
\item\label{item:2b} $\int_M \rho \ud{l} = 1$,
\item\label{item:2c} $\int_M \bar{f} \star f \star \rho \ud{l} \geq 0$
for every $f \in \mathcal{F}(M)$,
\end{enumerate}
then the functional
\begin{equation}
\Lambda_\rho(f) = \int_M f \star \rho \ud{l} \equiv \int_M f \cdot \rho \ud{l}
\label{eq:42}
\end{equation}
is a state. Vice verse, for every state $\Lambda$ there exists a unique function
$\rho \in \mathcal{L}(M)$ satisfying properties (\ref{item:2a})--(\ref{item:2c})
such that $\Lambda = \Lambda_\rho$.
\end{theorem}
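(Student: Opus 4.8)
The plan is to prove both implications by passing to the operator representation, reading off the dictionary directly from the kernel correspondence already established in Section~\ref{subsec:2.6} and Section~\ref{subsec:3.1}. Each $f \in \mathcal{F}(M)$ corresponds bijectively to a kernel $\mathsf{f} \in C^\infty_c(G \times G)$ (formulas \eqref{eq:5} and \eqref{eq:44}), hence to the integral operator $\hat f$ on $L^2(G,\dd{m})$ with kernel $\mathsf{f}$. Formula \eqref{eq:29} says $\star$ becomes operator composition, $(f \star g)^{\widehat{\phantom{x}}} = \hat f \hat g$; property~(\ref{item:1c}) of Theorem~\ref{thm:1}, combined with $V_a(b) = -V_b(a)$ and $F(X) = F(-X)$, shows complex conjugation becomes the adjoint, so that $\bar f \mapsto \hat f^*$; Lemma~\ref{lem:1} identifies $\norm{\sdot}_2$ with the Hilbert--Schmidt norm, so $\mathcal{L}(M)$ is the space of Hilbert--Schmidt operators and $\mathcal{A}(M)$ is its operator-norm completion, the compact operators $\mathcal{K}(L^2(G,\dd{m}))$; and evaluating \eqref{eq:25} on the diagonal (using $V_a(a) = 0$ and $F(0) = 1$) gives $\tr(f) = \int_G \mathsf{f}(a,a) \ud{m(a)} = \Tr(\hat f)$, the operator trace. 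Under this dictionary, and using property~(\ref{item:1d}) of Theorem~\ref{thm:1} in the form $\int_M f \star \rho \ud{l} = \int_M f \cdot \rho \ud{l}$, the functional \eqref{eq:42} reads $\Lambda_\rho(f) = \Tr(\hat f \hat\rho)$.

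For the forward direction I would translate the three hypotheses on $\rho$ into properties of $\hat\rho$ and recognize a density matrix. Condition~(\ref{item:2a}) gives $\hat\rho^* = \hat\rho$ and condition~(\ref{item:2b}) gives $\Tr(\hat\rho) = 1$. For condition~(\ref{item:2c}) I would test it on rank-one kernels $\mathsf{f}(a,b) = \phi(a)\overline{\psi(b)}$ with $\phi,\psi \in C^\infty_c(G)$, for which $\hat f \xi = (\psi,\xi)\phi$ and hence $\hat f^* \hat f \xi = \norm{\phi}_2^2 (\psi,\xi)\psi$, so $0 \leq \Tr(\hat f^* \hat f \hat\rho) = \norm{\phi}_2^2 (\psi, \hat\rho \psi)$; since $C^\infty_c(G)$ is dense in $L^2(G,\dd{m})$ and $\hat\rho$ is bounded, this forces $\hat\rho \geq 0$. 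A positive Hilbert--Schmidt operator with finite trace is trace-class, so $\hat\rho$ is a density matrix. Consequently $\Lambda_\rho(f) = \Tr(\hat f \hat\rho)$ extends from $\mathcal{F}(M)$ to a functional on $\mathcal{A}(M) = \mathcal{K}(L^2(G,\dd{m}))$ with $\abs{\Lambda_\rho(f)} \leq \norm{\hat f}\Tr(\hat\rho) = \norm{f}$; it is positive because $\Lambda_\rho(\bar f \star f) = \Tr(\hat f^* \hat f \hat\rho) \geq 0$; and testing on finite-rank projections $P_n$ increasing to the identity gives $\Lambda_\rho(P_n) = \Tr(P_n \hat\rho) \to 1$, so $\norm{\Lambda_\rho} = 1$. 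Hence $\Lambda_\rho$ is a state.

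For the converse I would invoke the structure of states on the compact operators: the Banach-space predual of $\mathcal{K}(H)$ is the trace-class operators, under which every state corresponds to a unique density matrix $\hat\rho$ with $\Lambda(A) = \Tr(A\hat\rho)$. A trace-class operator is Hilbert--Schmidt, so $\hat\rho$ lies in the image of $\mathcal{L}(M)$ and determines a unique $\rho \in \mathcal{L}(M)$. Running the dictionary backwards, self-adjointness, unit trace, and positivity of $\hat\rho$ become exactly properties~(\ref{item:2a})--(\ref{item:2c}), and $\Lambda = \Lambda_\rho$. Uniqueness of $\rho$ follows from injectivity of $f \mapsto \hat f$ together with uniqueness of the density matrix representing $\Lambda$.

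The main obstacle is the bookkeeping at the top, not the two implications, which then reduce to the standard correspondence between density matrices and states on $\mathcal{K}(H)$. Concretely one must rigorously justify the operator representation and the identification $\mathcal{A}(M) \cong \mathcal{K}(L^2(G,\dd{m}))$, which is the content of Section~\ref{sec:4}, and one must control the fact that $\tr$ and the pairing in \eqref{eq:42} are a priori only improper integrals: I would verify that the extension of $\tr$ to $\mathcal{L}(M) \star \mathcal{L}(M)$ is genuinely the operator trace and that pairing a compact operator against a trace-class operator makes $\Lambda_\rho$ well defined and continuous on all of $\mathcal{A}(M)$, not merely on $\mathcal{F}(M)$. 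The one analytic subtlety inside the proof proper is the density step promoting $(\psi, \hat\rho \psi) \geq 0$ from $\psi \in C^\infty_c(G)$ to all of $L^2(G,\dd{m})$, together with the observation that finiteness of the trace upgrades the a priori Hilbert--Schmidt $\hat\rho$ to a trace-class operator.
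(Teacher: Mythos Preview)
Your proposal is correct and follows exactly the route the paper takes: the paper does not prove Theorem~\ref{thm:5} in Section~\ref{subsec:3.3} but states that it ``follows directly from the operator representation introduced in Section~\ref{sec:4},'' and in Section~\ref{subsec:4.2} it spells out that under the isomorphism $\mathcal{A}(M) \cong \mathcal{K}(\mathcal{H})$ of Theorem~\ref{thm:3} the conditions on $\rho$ become precisely the density-operator conditions on $\hat{\rho}$, after which the standard bijection between states on $\mathcal{K}(\mathcal{H})$ and density operators finishes the argument. Your write-up is simply a more detailed version of the same argument, with the rank-one test for positivity and the finite-rank approximation for $\norm{\Lambda_\rho} = 1$ being the natural way to fill in what the paper leaves implicit.
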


\begin{theorem}
\label{thm:6}
A state $\Lambda_\rho$ is pure if and only if the corresponding function $\rho$
is idempotent, i.e.
\begin{equation}
\rho \star \rho = \rho.
\end{equation}
\end{theorem}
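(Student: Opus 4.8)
The plan is to carry the statement over to the operator representation of Section~\ref{sec:4}, where it reduces to the classical fact that a density matrix describes a pure state exactly when it is a rank-one projection. First I would record the properties of that representation. It assigns to each $f \in \mathcal{L}(M)$ an operator $\hat f$ on $\mathcal{H} = L^2(G,\dd{m})$ whose integral kernel is the function $\mathsf{f}$ of \eqref{eq:5}; by Lemma~\ref{lem:1} this assignment is an isometry onto the Hilbert--Schmidt class, and the change of variables inverting \eqref{eq:5} (namely \eqref{eq:44}) shows it is in fact a bijection of $\mathcal{L}(M)$ onto all Hilbert--Schmidt operators. Formula \eqref{eq:29} makes it multiplicative, $\widehat{f \star g} = \hat f\,\hat g$, a short computation with \eqref{eq:5} gives $\widehat{\bar f} = \hat f^*$, and evaluating the kernel on the diagonal gives $\Tr\hat f = \tr f$. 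Consequently $\Lambda_\rho(f) = \int_M f \star \rho \ud l = \tr(f \star \rho) = \Tr(\hat f\,\hat\rho)$, and conditions (\ref{item:2a})--(\ref{item:2c}) of Theorem~\ref{thm:5} say precisely that $\hat\rho$ is a density operator: $\hat\rho = \hat\rho^*$, $\hat\rho \geq 0$, $\Tr\hat\rho = 1$.

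Next I would reduce purity to an extremality property. By Theorem~\ref{thm:5} the map $\rho \mapsto \Lambda_\rho$ is a bijection of the set $\mathcal{D}$ of functions obeying (\ref{item:2a})--(\ref{item:2c}) onto the set of states, and it is affine because $\Lambda_\rho(f)$ depends linearly on $\rho$. An affine bijection preserves extreme points, so $\Lambda_\rho$ is pure if and only if $\rho$ is an extreme point of $\mathcal{D}$. Through the representation $\mathcal{D}$ corresponds bijectively and affinely to the convex set of density operators, whence $\Lambda_\rho$ is pure if and only if $\hat\rho$ is an extreme point of that set.

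It then remains to prove the standard lemma that a density operator is extreme exactly when it is idempotent. For the easy direction, if $\hat\rho^2 = \hat\rho = \hat\rho^*$ then $\hat\rho$ is an orthogonal projection, of rank one since $\Tr\hat\rho = 1$; writing $\hat\rho = t\hat\rho_1 + (1-t)\hat\rho_2$ with density operators $\hat\rho_i$ and $t \in (0,1)$ and pairing against unit vectors orthogonal to the range of $\hat\rho$ forces, by positivity and normalization, $\hat\rho_1 = \hat\rho_2 = \hat\rho$, so $\hat\rho$ is extreme. Conversely, if $\hat\rho$ is not idempotent some eigenvalue $\lambda_m \in (0,1)$, and since the eigenvalues sum to $1$ there is a further positive eigenvalue $\lambda_k$ with orthogonal spectral projection $P_k$; perturbing as $\hat\rho \pm \varepsilon(P_m - P_k)$ for small $\varepsilon > 0$ yields two distinct density operators whose average is $\hat\rho$, so $\hat\rho$ is not extreme. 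Finally, since $f \mapsto \hat f$ is injective and multiplicative, $\hat\rho^2 = \hat\rho$ is equivalent to $\rho \star \rho = \rho$, which completes the chain of equivalences.

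The hard part will not be the extreme-point lemma, which is classical, but making the transport watertight: checking that $f \mapsto \hat f$ really maps $\mathcal{L}(M)$ onto the entire Hilbert--Schmidt class, so that every density operator, and in particular each competitor $\hat\rho_i$ and each spectral projection used in the splitting, is genuinely realized by an element of $\mathcal{D}$; and confirming that the positivity condition (\ref{item:2c}), tested only over $f \in \mathcal{F}(M)$, already forces $\hat\rho \geq 0$ as an operator. Both points are settled by the surjectivity just mentioned together with the density of $\mathcal{F}(M)$ in $\mathcal{L}(M)$, which lets one approximate the rank-one test operators $\lvert\phi\rangle\langle\phi\rvert$ by admissible kernels.
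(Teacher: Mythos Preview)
Your proposal is correct and follows exactly the route the paper indicates: the paper states only that the result ``follows directly from the operator representation introduced in Section~\ref{sec:4}'' and later, in Section~\ref{subsec:4.2}, spells out that $\rho \mapsto \hat\rho$ is a bijection onto density operators, so that pure states correspond to rank-one projections $\hat\rho = \mathcal{W}(\varphi,\varphi)^\wedge$. You have simply written out the details of this transport---the affine bijection, the classical extreme-point lemma for density operators, and the multiplicativity $\widehat{\rho\star\rho} = \hat\rho^2$---and correctly flagged the one point requiring care, namely the surjectivity of $f \mapsto \hat f$ onto the full Hilbert--Schmidt class, which the paper establishes after Theorem~\ref{thm:3}.
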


From \eqref{eq:42} the expectation value of an observable
$f \in \mathcal{F}(M)$ in a state $\Lambda_\rho$ can be written in a form
\begin{equation}
\braket{f}_\rho = \int_M f(x) \rho(x) \ud{l(x)}.
\end{equation}
In the next section we will extend $\mathcal{F}(M)$ to an algebra of
distributions $\mathcal{F}_\star(M)$. The above formula will then extend, in a
direct way, to general observables $f \in \mathcal{F}_\star(M)$ and states
$\rho \in \mathcal{F}(M)$.

\subsection{Extension of the $\star$-product to an algebra of distributions}
\label{subsec:3.4}
In what follows we will extend the \nbr{\star}product to a suitable space of
distributions. The algebra $\mathcal{F}(M)$ will play the role of the space of
test functions. The following construction is based on
\cite{Gracia-Bondia:1988,Karasev:2005}. We will assume that the group $G$ is
compact. In such case $\mathcal{F}(M)$ is a Fr\'echet algebra. The following
considerations will also hold for non-compact groups, although, then we do not
have a topology on $\mathcal{F}(M)$ so we would not be dealing with any
continuity issues.

We will denote by $\mathcal{F}'(M)$ the space of continuous linear functionals
on $\mathcal{F}(M)$, i.e.\ distributions. The dual space $\mathcal{F}'(M)$ is
endowed with the strong dual topology, that of uniform convergence on bounded
subsets of $\mathcal{F}(M)$. For $f \in \mathcal{F}'(M)$ we will denote by
$\braket{f,h}$ the value of the functional $f$ at $h \in \mathcal{F}(M)$. We
will identify functions $f \in \mathcal{F}(M)$ with the following functionals
\begin{equation}
h \mapsto \int_M f(x)h(x) \ud{x}.
\label{eq:12}
\end{equation}
These functionals are continuous on $\mathcal{F}(M)$. Indeed, if
$\mathsf{f},\mathsf{h} \in C^\infty_c(G \times G)$ are integral kernels
corresponding to $f$ and $h$, then by Lemma~\ref{lem:1}
\begin{equation}
\Abs{\int_M f(x)h(x) \ud{x}} = \abs{2\pi\hbar}^n \Abs{\int_{G \times G}
    \mathsf{f}(a,b) \mathsf{h}(b,a) \ud{m(a)}\ud{m(b)}}
\leq \abs{2\pi\hbar}^n \int_{G \times G} \abs{\mathsf{f}(a,b)}\ud{m(a)}\ud{m(b)}
    \norm{h}_{0,0},
\end{equation}
which proves the continuity of the functionals \eqref{eq:12}.
Thus we may write $\mathcal{F}(M) \subset \mathcal{F}'(M)$. By the formula
\eqref{eq:12} we can also identify other functions with distributions provided
that the above integral, treated as an improper integral from
Section~\ref{subsec:2.3}, is well defined for every test function.

For $f \in \mathcal{F}'(M)$ and $g \in \mathcal{F}(M)$ we define
$f \star g \in \mathcal{F}'(M)$ and $g \star f \in \mathcal{F}'(M)$ by the
formulas
\begin{equation}
\braket{f \star g,h} = \braket{f,g \star h}, \quad
\braket{g \star f,h} = \braket{f,h \star g} \quad
\text{for every $h \in \mathcal{F}(M)$}.
\label{eq:13}
\end{equation}
The consistency of the above definition is guaranteed by the property
(\ref{item:1d}) of Theorem~\ref{thm:1}. Note, that the maps
$g \mapsto f \star g$ and $g \mapsto g \star f$ are continuous from
$\mathcal{F}(M)$ to $\mathcal{F}'(M)$, since the maps
$g \mapsto \braket{f,g \star h}$ and $g \mapsto \braket{f,h \star g}$ are
continuous, uniformly for $h$ in a bounded subset of $\mathcal{F}(M)$
(by the joint continuity of $g$ and $h$).

Denote by $\mathcal{F}_\star(M)$ the following subspace of distributions:
\begin{equation}
\mathcal{F}_\star(M) = \{f \in \mathcal{F}'(M) \mid
    \text{$f \star g$ and $g \star f \in \mathcal{F}(M)$ for every
    $g \in \mathcal{F}(M)$}\}.
\end{equation}
In particular, $\mathcal{F}(M) \subset \mathcal{F}_\star(M)$. For
$f \in \mathcal{F}_\star(M)$ the maps $g \mapsto f \star g$ and
$g \mapsto g \star f$ are continuous from $\mathcal{F}(M)$ to $\mathcal{F}(M)$
by the closed graph theorem. Thus, for $f,g \in \mathcal{F}_\star(M)$ we may
define their \nbr{\star}product by the formula
\begin{equation}
\braket{f \star g,h} = \braket{f,g \star h} = \braket{g,h \star f} \quad
\text{for every $h \in \mathcal{F}(M)$}.
\label{eq:14}
\end{equation}
Straightforward calculations with the use of \eqref{eq:13} and \eqref{eq:14}
verify that $f \star g \in \mathcal{F}_\star(M)$ and the associativity of the
\nbr{\star}product. Note, that $1 \in \mathcal{F}_\star(M)$ and
$f \star 1 = 1 \star f = f$ for every $f \in \mathcal{F}_\star(M)$.

The involution in $\mathcal{F}(M)$ can be extended to the algebra
$\mathcal{F}_\star(M)$ in a natural way:
\begin{equation}
\braket{\bar{f},h} = \overline{\braket{f,\bar{h}}} \quad
\text{for every $h \in \mathcal{F}(M)$}
\end{equation}
and $f \in \mathcal{F}_\star(M)$. Thus, $\mathcal{F}_\star(M)$ is an involutive
algebra with unity, being a natural extension of the algebra $\mathcal{F}(M)$.

In what follows we will show that all smooth functions polynomial in fiber
variables $p_j$, i.e.\ functions of the form
\begin{equation}
f(q,p) = \sum_{l=0}^k f^{i_1 i_2 \dotsc i_l}(q) p_{i_1} p_{i_2} \dotsm p_{i_l}
\end{equation}
for $k \geq 0$ and $f^{i_1 i_2 \dotsc i_l} \in C^\infty(G)$, are in
$\mathcal{F}_\star(M)$.

\begin{theorem}
If $f(q)$ is a smooth function on $G$, then $f \in \mathcal{F}_\star(M)$.
\end{theorem}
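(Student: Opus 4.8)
The plan is to verify the two requirements in the definition of $\mathcal{F}_\star(M)$: that $f$ is a continuous functional on $\mathcal{F}(M)$, and that $f \star g, g \star f \in \mathcal{F}(M)$ for every $g \in \mathcal{F}(M)$. For the first I would feed $f$ into the identification \eqref{eq:12} and exploit that, since $f$ depends only on $q$, one has $\widetilde{f\psi}(q,X) = f(q)\tilde{\psi}(q,X)$ for $\psi \in \mathcal{F}(M)$. The improper-integral computation of Section~\ref{subsec:2.3} then collapses the pairing to a diagonal evaluation of the kernel,
\[
\int_M f(q)\psi(q,p)\ud{x} = \abs{2\pi\hbar}^n\int_G f(q)\tilde{\psi}(q,0)\ud{m(q)} = \abs{2\pi\hbar}^n\int_G f(a)\mathsf{\psi}(a,a)\ud{m(a)},
\]
where $\tilde{\psi}(q,0) = \mathsf{\psi}(q,q)$ follows from \eqref{eq:5} and $F(0) = 1$. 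Since $G$ is compact, $f$ is bounded, and the right-hand side is dominated by $\abs{2\pi\hbar}^n m(G)\,\sup_{q \in G}\abs{f(q)}\,\norm{\psi}_{0,0}$; hence the functional is continuous and $f \in \mathcal{F}'(M)$.

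For the second requirement the guiding picture is that the momentum Fourier transform of $f$ is $f(q)\,\delta(X)$, so that $f$ acts as the integral operator ``multiplication by $f$'' whose kernel sits on the diagonal with value $f(a)$. Substituting $\tilde{f} = f\,\delta$ into \eqref{eq:2} is consistent with this: the facts $0 \diamond Y = Y$, $L(0,Y) = 1$ and $F(0) = 1$ collapse the $X$-integral and leave the integrand $f(q\exp(-\tfrac{1}{2}Y))\tilde{g}(q,Y)e^{-\frac{i}{\hbar}\braket{p,Y}}$. Comparing with \eqref{eq:44}, or more efficiently feeding the diagonal kernel into the composition rule \eqref{eq:29}, I would then guess that $f \star g$ and $g \star f$ are the elements of $\mathcal{F}(M)$ with integral kernels $(a,b) \mapsto f(a)\mathsf{g}(a,b)$ and $(a,b) \mapsto \mathsf{g}(a,b)f(b)$. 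Each is a product of a smooth function with an element of $C^\infty_c(G \times G)$, hence lies in $C^\infty_c(G \times G)$, so the candidates genuinely belong to $\mathcal{F}(M)$.

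To make this rigorous without literally manipulating the Dirac delta, I would fix the candidate $\phi \in \mathcal{F}(M)$ with kernel $f(a)\mathsf{g}(a,b)$ and check $\braket{f \star g,h} = \braket{\phi,h}$ for all $h \in \mathcal{F}(M)$. By \eqref{eq:13} the left side equals $\braket{f,g \star h}$; applying the diagonal pairing above to $\psi = g \star h$ and expanding its kernel via \eqref{eq:29} yields the iterated integral of $f(a)\mathsf{g}(a,c)\mathsf{h}(c,a)$. The right side $\braket{\phi,h}$, evaluated through the bilinear pairing of two $\mathcal{F}(M)$-elements used in Section~\ref{subsec:3.4} (a consequence of Lemma~\ref{lem:1}), gives $\int_{G \times G} f(a)\mathsf{g}(a,b)\mathsf{h}(b,a)\ud{m(a)}\ud{m(b)}$, which agrees after relabelling $b \to c$. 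A parallel computation starting from $\braket{g \star f,h} = \braket{f,h \star g}$ identifies $g \star f$ with the kernel $\mathsf{g}(a,b)f(b)$, completing the argument.

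The kernel bookkeeping is routine; the one point demanding care is precisely the passage from the heuristic ``$f$ is $f(a)$ on the diagonal'' to the verified distributional identities, because $f$ is not itself an element of $\mathcal{F}(M)$, so the two pairing formulas must be invoked only for genuine test functions $\psi,h \in \mathcal{F}(M)$. Once the diagonal pairing identity and the composition rule \eqref{eq:29} are in hand, the matching of the two iterated integrals is a direct relabelling, and the smoothness and compact support of the resulting kernels are immediate from those of $\mathsf{g}$.
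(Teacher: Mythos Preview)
Your proposal is correct and follows essentially the same approach as the paper: both establish continuity via the diagonal evaluation $\braket{f,h}=\abs{2\pi\hbar}^n\int_G f(a)\mathsf{h}(a,a)\ud{m(a)}$, and both identify $f\star g$ as the element with kernel $f(a)\mathsf{g}(a,b)$ by computing $\braket{f,g\star h}$. The only difference is organizational: the paper expands $(\tilde g\odot\tilde h)(q,0)$ on the Fourier side to first obtain the explicit formula \eqref{eq:40} and then read off the kernel, whereas you go directly through the kernel composition rule \eqref{eq:29}, which is slightly more economical.
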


\begin{proof}
Let $h \in \mathcal{F}(M)$. Since $f(q)\tilde{h}(q,0)$ is a smooth compactly
supported function on $G$ the integral in \eqref{eq:12} will be finite and
define a proper linear functional on $\mathcal{F}(M)$ (even if $G$ is not
compact). We will show that this functional is continuous. We calculate that
\begin{align}
\abs{\braket{f,h}} & = \Abs{\int_M f(q)h(q,p) \ud{m(q)}\ud{p}}
= \abs{2\pi\hbar}^n \Abs{\int_G f(q)\tilde{h}(q,0) \ud{m(q)}}
= \abs{2\pi\hbar}^n \Abs{\int_G f(q)\mathsf{h}(q,q) \ud{m(q)}} \nonumber \\
& \leq \abs{2\pi\hbar}^n \int_G \abs{f(q)} \ud{m(q)} \norm{h}_{0,0},
\end{align}
where $\mathsf{h} \in C^\infty_c(G \times G)$ is an integral kernel
corresponding to $h$. The integral in the last term is finite because $G$ is
compact. Therefore, the functional $\braket{f,\sdot}$ will be continuous.

Now, we will show that $f \in \mathcal{F}_\star(M)$, also in the case when $G$
is not compact. For $g,h \in \mathcal{F}(M)$ we have
\begin{align}
\braket{f \star g,h} & = \braket{f,g \star h} = \int_G \left(
    \int_{\mathfrak{g}^*} f(q) (g \star h)(q,p) \ud{p} \right) \ud{m(q)}
= \abs{2\pi\hbar}^n \int_G f(q) (\tilde{g} \odot \tilde{h})(q,0) \ud{m(q)}
    \nonumber \\
& = \abs{2\pi\hbar}^n \int_G \left( \int_{\mathfrak{g}} f(q)
    \tilde{g}\bigl(q\exp(\tfrac{1}{2}Y),Y\bigr)
    \tilde{h}\bigl(q\exp\bigl(\tfrac{1}{2}Y\bigr),-Y\bigr)
    \det \Ad_{\exp(-\frac{1}{2}Y)} \ud{Y} \right) \ud{m(q)} \nonumber \\
& = \abs{2\pi\hbar}^n \int_{\mathfrak{g}} \left( \int_G
    f\bigl(q\exp(-\tfrac{1}{2}Y)\bigr) \tilde{g}(q,Y)
    \tilde{h}(q,-Y) \ud{m(q)} \right) \ud{Y} \nonumber \\
& = \int_{G \times \mathfrak{g}^*} \left( \int_{\mathfrak{g}}
    f\bigl(q\exp(-\tfrac{1}{2}Y)\bigr) \tilde{g}(q,Y)
    e^{-\frac{i}{\hbar}\braket{p,Y}} \ud{Y} \right) h(q,p)
    \ud{m(q)}\ud{p}.
\end{align}
Hence
\begin{equation}
(f \star g)(q,p) = \int_{\mathfrak{g}} f\bigl(q\exp(-\tfrac{1}{2}X)\bigr)
    \tilde{g}(q,X) e^{-\frac{i}{\hbar}\braket{p,X}} \ud{X}.
\label{eq:40}
\end{equation}
If $\mathsf{g} \in C^\infty_c(G \times G)$ corresponds to $\tilde{g}$ as in
\eqref{eq:5}, then
\begin{equation}
(f \star g)^\sim\bigl(a\exp(\tfrac{1}{2}V_a(b)),V_a(b)\bigr) F(V_a(b))^{-1} =
    f(a) \mathsf{g}(a,b)
\end{equation}
is a smooth compactly supported function on $G \times G$. Thus $f \star g \in
\mathcal{F}(M)$. Similarly we can prove that $g \star f \in \mathcal{F}(M)$.
Therefore $f \in \mathcal{F}_\star(M)$.
\end{proof}

\begin{theorem}
\label{thm:2}
If $p_j$ is a fiber variable corresponding to a basis $\{X_i\}$ in
$\mathfrak{g}$, then $p_j \in \mathcal{F}_\star(M)$.
\end{theorem}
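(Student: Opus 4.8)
The plan is to establish two facts: that $p_j$ defines a continuous linear functional on $\mathcal{F}(M)$, so that $p_j \in \mathcal{F}'(M)$, and that $p_j \star g$ and $g \star p_j$ lie in $\mathcal{F}(M)$ for every $g \in \mathcal{F}(M)$. The whole argument rests on one observation: under the momentum Fourier transform, multiplication by the fiber variable $\braket{p,X_j}$ turns into differentiation, and this differentiation can be rewritten in terms of the left-invariant derivatives $L_{X_j}^{(1)},L_{X_j}^{(2)}$ of the integral kernel on $G \times G$. I would work throughout with the kernels $\mathsf{f} \in C^\infty_c(G\times G)$ and exploit the pairing identity $\int_M f(x)h(x)\ud{x} = \abs{2\pi\hbar}^n \int_{G\times G}\mathsf{f}(a,b)\mathsf{h}(b,a)\ud{m(a)}\ud{m(b)}$ already used for \eqref{eq:12}.

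First I would identify $p_j$ with the functional $h \mapsto \int_M \braket{p,X_j} h(q,p)\ud{m(q)}\ud{p}$. Integrating by parts in the defining Fourier integral for $h$ gives $\int_{\mathfrak{g}^*}\braket{p,X_j} h(q,p)\ud{p} = -i\hbar\abs{2\pi\hbar}^n \partial_j \tilde{h}(q,0)$, where $\partial_j$ denotes the derivative in the direction $X_j$ on $\mathfrak{g}$. Substituting $\tilde{h}(q,X) = \mathsf{h}(q\exp(-\tfrac12 X),q\exp(\tfrac12 X))F(X)$ (the inverse of \eqref{eq:5}) and using that $F$ is even, so $F(0)=1$ and its first derivative at $0$ vanishes, I would obtain $\partial_j\tilde{h}(q,0) = \tfrac12\bigl[(L_{X_j}^{(2)}\mathsf{h})(q,q) - (L_{X_j}^{(1)}\mathsf{h})(q,q)\bigr]$. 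Hence $\abs{\braket{p_j,h}}$ is dominated by a constant times $m(G)(\norm{h}_{1,0}+\norm{h}_{0,1})$, which proves continuity and membership in $\mathcal{F}'(M)$. As in the previous theorem, the point $X=0$ lies well inside $\mathcal{O}$, so no boundary issues arise, and the functional is well defined even for non-compact $G$ because the kernel has compact support.

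Next I would compute $p_j \star g$ through the duality \eqref{eq:13}: $\braket{p_j\star g,h} = \braket{p_j,g\star h}$. By \eqref{eq:29} the kernel of $g\star h$ is $\int_G \mathsf{g}(a,c)\mathsf{h}(c,b)\ud{m(c)}$, so applying the formula from the previous step to $g\star h$ expresses $\braket{p_j,g\star h}$ as a double integral over $G\times G$ involving $L_{X_j}^{(1)}$ and $L_{X_j}^{(2)}$ acting on $\mathsf{g}$ and $\mathsf{h}$. Since a compact group is unimodular, the formal adjoint of $L_{X_j}$ with respect to the Haar measure is $-L_{X_j}$, and integrating by parts moves the $L_{X_j}^{(2)}$ derivative off $\mathsf{h}$ and onto $\mathsf{g}$. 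Comparing the result with the pairing identity then identifies the kernel of $p_j\star g$ as $i\hbar\, L_{X_j}^{(1)}\mathsf{g}$; an entirely analogous computation starting from $\braket{g\star p_j,h} = \braket{p_j,h\star g}$ gives the kernel $-i\hbar\, L_{X_j}^{(2)}\mathsf{g}$ for $g\star p_j$. Both kernels are smooth and compactly supported, being left-invariant derivatives of $\mathsf{g}\in C^\infty_c(G\times G)$, so by the one-to-one correspondence between $\mathcal{F}(M)$ and $C^\infty_c(G\times G)$ we conclude $p_j\star g, g\star p_j\in\mathcal{F}(M)$ and therefore $p_j\in\mathcal{F}_\star(M)$.

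The hard part is bookkeeping rather than conceptual: converting multiplication by $\braket{p,X_j}$ into a clean left-invariant derivative of the kernel requires the substitution carrying the $F(X)$ factor, which is harmless only because $F$ is even and so contributes nothing to first order at $X=0$, and the integration-by-parts step genuinely uses the unimodularity of the compact group. Keeping the factors of $i\hbar$ and the signs consistent across the two slots of the kernel is where an error is most likely to creep in.
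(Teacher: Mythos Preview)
Your argument is correct and follows essentially the same strategy as the paper: identify $\braket{p_j,h}$ with a derivative of $\tilde{h}$ at $X=0$, translate this into left-invariant derivatives of the kernel $\mathsf{h}$, then use the kernel composition \eqref{eq:29} to identify the kernel of $p_j\star g$ (and similarly $g\star p_j$) as a smooth compactly supported function.

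The one substantive difference is how you pass from two-slot to one-slot differentiation. You keep the symmetric form $\partial_j\tilde{h}(q,0)=\tfrac12\bigl[(L_{X_j}^{(2)}\mathsf{h})(q,q)-(L_{X_j}^{(1)}\mathsf{h})(q,q)\bigr]$ and then integrate by parts in $q$, using that the formal adjoint of $L_{X_j}$ is $-L_{X_j}$ for unimodular $G$. The paper instead performs the change of variables $q\mapsto q\exp(-\tfrac12 X)$ before differentiating; this introduces a Jacobian $\det\Ad_{\exp(-\frac12 X)}$ whose derivative at $0$ is $-\tfrac12 C^k_{jk}$, and produces the kernel $i\hbar\bigl(L_{X_j}^{(1)}\mathsf{g}-\tfrac12 C^k_{jk}\mathsf{g}\bigr)$. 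For compact (hence unimodular) $G$ the extra term vanishes and the two computations agree, but the paper's route works verbatim for non-unimodular $G$ as well, which is why the paper emphasizes ``also in the case when $G$ is not compact.'' Your integration-by-parts step does not extend to that case without picking up precisely the same $C^k_{jk}$ correction as a boundary/modular term.
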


\begin{proof}
Let $h \in \mathcal{F}(M)$. Since for a fixed $q \in G$ the function
$X \mapsto \tilde{h}(q,X)$ is smooth in the neighborhood of 0, we have
\begin{equation}
\braket{p_j,h} = \int_G \left(
    \int_{\mathfrak{g}^*} p_j(q,p) h(q,p) \ud{p} \right) \ud{m(q)}
= -i\hbar \abs{2\pi\hbar}^n \int_G \frac{\partial}{\partial X_j}
    \tilde{h}(q,X) \bigg|_{X = 0} \ud{m(q)}.
\end{equation}
The function $\tfrac{\partial}{\partial X_j} \tilde{h}(q,X) \big|_{X = 0}$ is
smooth and compactly supported on $G$, therefore its integral will be finite.
Thus, the functional $\braket{p_j,\sdot}$ is well defined (even when $G$ is not
compact). We will show that this functional is continuous. Let
$\mathsf{h} \in C^\infty_c(G \times G)$ correspond to $\tilde{h}$ as in
\eqref{eq:5}. Then we calculate that
\begin{align}
\braket{p_j,h} & = i\hbar \abs{2\pi\hbar}^n \frac{\partial}{\partial X_j} \int_G
    \tilde{h}(q,-X) \ud{m(q)} \bigg|_{X = 0} \nonumber \\
& = i\hbar\abs{2\pi\hbar}^n \frac{\partial}{\partial X_j}
    \int_G \mathsf{h}\bigl(q\exp(\tfrac{1}{2}X),q\exp(-\tfrac{1}{2}X)\bigr)
    F(X) \ud{m(q)} \bigg|_{X = 0} \nonumber \\
& = i\hbar \abs{2\pi\hbar}^n \frac{\partial}{\partial X_j} \int_G
    \mathsf{h}(q\exp(X),q) F(X) \det\Ad_{\exp(-\frac{1}{2}X)} \ud{m(q)}
    \bigg|_{X = 0}.
\label{eq:31}
\end{align}
For $\varphi \in C^\infty(G)$ we have that
\begin{equation}
\frac{\partial}{\partial X_j} \varphi\bigl(q\exp(X)\bigr) \bigg|_{X = 0} =
    \T_q \varphi\bigl(\T_e L_q X_j\bigr) = \T_q \varphi\bigl(L_{X_j}(q)\bigr)
= L_{X_j} \varphi(q),
\label{eq:27}
\end{equation}
where $L_{X_j}$ is a left-invariant vector field corresponding to $X_j$. If
$X = u^i X_i$ is an expansion of $X$ in the basis $\{X_i\}$ and $\{X^i\}$ is
a dual basis to $\{X_i\}$, then
\begin{equation}
\Tr(\ad_X) = u^i \Tr(\ad_{X_i}) = u^i \braket{X^j,\ad_{X_i} X_j}
= u^i \braket{X^j,C^k_{ij} X_k} = C^k_{ij} u^i \delta^j_k = C^k_{ik} u^i.
\end{equation}
Hence
\begin{equation}
\frac{\partial}{\partial X_j} \det\Ad_{\exp(-\frac{1}{2}X)} \bigg|_{X = 0} =
    \frac{\partial}{\partial X_j} e^{-\frac{1}{2}\Tr(\ad_X)} \bigg|_{X = 0}
= \frac{\partial}{\partial u^j} e^{-\frac{1}{2}C^k_{ik} u^i} \bigg|_{u = 0}
= -\frac{1}{2} C^k_{jk}.
\label{eq:28}
\end{equation}
Since $F'(0) = 0$ we get, with the use of \eqref{eq:27} and \eqref{eq:28}, that
\begin{equation}
\frac{\partial}{\partial X_j} \mathsf{h}(q\exp(X),q) F(X)
    \det\Ad_{\exp(-\frac{1}{2}X)} \bigg|_{X = 0} = L_{X_j}^{(1)} \mathsf{h}(q,q)
    - \frac{1}{2}C^k_{jk} \mathsf{h}(q,q),
\label{eq:32}
\end{equation}
where $L_{X_j}^{(1)}$ is a vector fields on $G \times G$ defined as in
\eqref{eq:30}. Because compact groups are unimodular $C^k_{jk} = 0$ and we get
from \eqref{eq:31} and \eqref{eq:32} that
\begin{equation}
\abs{\braket{p_j,h}} \leq (2\pi)^n \abs{\hbar}^{n+1} m(G) \norm{h}_{k,0}
\end{equation}
for $k = (0,\dotsc,0,1,0,\dotsc,0)$ where there is 1 on the \nbr{j}th place.
This shows the continuity of the functional $\braket{p_j,\sdot}$.

Now, we will show that $p_j \in \mathcal{F}_\star(M)$, also in the case when $G$
is not compact. For $g,h \in \mathcal{F}(M)$ we have
\begin{align}
\braket{p_j \star g,h} & = \braket{p_j,g \star h} = \int_G \left(
    \int_{\mathfrak{g}^*} p_j(q,p) (g \star h)(q,p) \ud{p} \right) \ud{m(q)}
    \nonumber \\
& = \abs{2\pi\hbar}^n \int_G i\hbar \frac{\partial}{\partial X_j}
    (\tilde{g} \odot \tilde{h})(q,-X) \bigg|_{X = 0} \ud{m(q)}.
\end{align}
Let $\mathsf{g},\mathsf{h} \in C^\infty_c(G \times G)$ correspond to $\tilde{g}$
and $\tilde{h}$ as in \eqref{eq:5}. Then we can write
\begin{align}
\braket{p_j \star g,h} & = i\hbar\abs{2\pi\hbar}^n \frac{\partial}{\partial X_j}
    \int_G \int_G \mathsf{g}\bigl(q\exp(\tfrac{1}{2}X),c\bigr)
    \mathsf{h}\bigl(c,q\exp(-\tfrac{1}{2}X)\bigr) F(X) \ud{m(c)}\ud{m(q)}
    \bigg|_{X = 0} \nonumber \\
& = i\hbar \abs{2\pi\hbar}^n \frac{\partial}{\partial X_j}
    \int_G \int_G \mathsf{g}(q\exp(X),c) \mathsf{h}(c,q) F(X)
    \det\Ad_{\exp(-\frac{1}{2}X)} \ud{m(c)}\ud{m(q)} \bigg|_{X = 0} \nonumber \\
& = \abs{2\pi\hbar}^n \int_G \int_G \mathsf{f}(q,c) \mathsf{h}(c,q)
    \ud{m(c)}\ud{m(q)},
\end{align}
where in accordance to \eqref{eq:32}
\begin{equation}
\mathsf{f}(q,c) = i\hbar\frac{\partial}{\partial X_j} \mathsf{g}(q\exp(X),c)
    F(X) \det\Ad_{\exp(-\frac{1}{2}X)} \bigg|_{X = 0}
= i\hbar\left(L_{X_j}^{(1)} \mathsf{g}(q,c)
    - \frac{1}{2}C^k_{jk}\mathsf{g}(q,c)\right),
\end{equation}
so that $\mathsf{f} \in C^\infty_c(G \times G)$. By Lemma~\ref{lem:1} we get
that
\begin{equation}
\braket{p_j \star g,h} = \int_M f(x)h(x) \ud{x},
\end{equation}
where $f \in \mathcal{F}(M)$ is a function corresponding to $\mathsf{f}$.
Therefore, $f = p_j \star g$ which shows that $p_j \star g \in \mathcal{F}(M)$.
Analogically we can prove that $g \star p_j \in \mathcal{F}(M)$. Thus
$p_j \in \mathcal{F}_\star(M)$.
\end{proof}

\begin{theorem}
If $f(q)$ is a smooth function on $G$, then for every $g \in \mathcal{F}(M)$
the following expansion in $\hbar$ holds
\begin{equation}
f \star g = \sum_{k=0}^\infty \frac{1}{k!} \left(-\frac{i\hbar}{2}\right)^k
    L_{X_{i_1}} L_{X_{i_2}} \dotsm L_{X_{i_k}}f Z^{i_1} Z^{i_2} \dotsm Z^{i_k}g,
\label{eq:41}
\end{equation}
where $L_{X_i}$ are left-invariant vector fields corresponding to a basis
$\{X_i\}$ in $\mathfrak{g}$.
\end{theorem}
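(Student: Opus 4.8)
The plan is to start from the closed integral formula for the star-product of a purely position-dependent function, namely equation \eqref{eq:40},
\begin{equation*}
(f \star g)(q,p) = \int_{\mathfrak{g}} f\bigl(q\exp(-\tfrac{1}{2}X)\bigr)
    \tilde{g}(q,X) e^{-\frac{i}{\hbar}\braket{p,X}} \ud{X},
\end{equation*}
and to reduce the claimed expansion to two independent moves: a Taylor expansion of the position factor $f\bigl(q\exp(-\tfrac{1}{2}X)\bigr)$ in the Lie-algebra variable $X$ about $X=0$, and a Fourier-side identity that trades each power of $X$ for a $Z$-derivative together with a power of $\hbar$. Since $g \in \mathcal{F}(M)$, the momentum Fourier transform $\tilde{g}(q,\sdot)$ is smooth with support in a fixed compact subset of $\mathcal{O}$, so every integral in sight is finite and differentiation under the integral sign is justified.

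First I would record the differential identity behind \eqref{eq:27} in iterated form. Because $t \mapsto q\exp(tX)$ is the integral curve through $q$ of the left-invariant vector field $L_X$ attached to $X \in \mathfrak{g}$, one has $\frac{\dd}{\dd t}\varphi\bigl(q\exp(tX)\bigr) = (L_X\varphi)\bigl(q\exp(tX)\bigr)$ for every $\varphi \in C^\infty(G)$, hence by induction $\frac{\dd^k}{\dd t^k}\varphi\bigl(q\exp(tX)\bigr)\big|_{t=0} = (L_X^k\varphi)(q)$. Expanding $X = u^i X_i$, so that $L_X^k = u^{i_1}\dotsm u^{i_k}\, L_{X_{i_1}}\dotsm L_{X_{i_k}}$, Taylor's theorem applied at $t = -\tfrac{1}{2}$ gives
\begin{equation*}
f\bigl(q\exp(-\tfrac{1}{2}X)\bigr) = \sum_{k=0}^{N} \frac{1}{k!}\left(-\frac{1}{2}\right)^{k} u^{i_1}\dotsm u^{i_k}\, L_{X_{i_1}}\dotsm L_{X_{i_k}}f(q) + R_N(q,X),
\end{equation*}
where $R_N$ vanishes to order $N+1$ at $X=0$.

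Next I would insert this into \eqref{eq:40} and move each monomial $u^{i_1}\dotsm u^{i_k}$ from the position factor onto the exponential. Since $\braket{p,X} = u^i p_i$ and $Z^j$ acts as $\partial/\partial p_j$, one has $u^j e^{-\frac{i}{\hbar}\braket{p,X}} = i\hbar\, Z^j e^{-\frac{i}{\hbar}\braket{p,X}}$, and therefore $u^{i_1}\dotsm u^{i_k} e^{-\frac{i}{\hbar}\braket{p,X}} = (i\hbar)^k Z^{i_1}\dotsm Z^{i_k} e^{-\frac{i}{\hbar}\braket{p,X}}$. The operators $Z^{i_1}\dotsm Z^{i_k}$ are independent of the integration variable $X$, so they may be pulled outside; using $\int_{\mathfrak{g}} \tilde{g}(q,X) e^{-\frac{i}{\hbar}\braket{p,X}} \ud{X} = g(q,p)$ together with $\left(-\tfrac{1}{2}\right)^k (i\hbar)^k = \left(-\tfrac{i\hbar}{2}\right)^k$, the $k$-th term becomes exactly $\frac{1}{k!}\left(-\frac{i\hbar}{2}\right)^{k} L_{X_{i_1}}\dotsm L_{X_{i_k}}f\, Z^{i_1}\dotsm Z^{i_k}g$, which is the summand in \eqref{eq:41}.

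The point I expect to be the main obstacle is the treatment of the remainder, and hence the precise sense in which the infinite series holds. Using a Hadamard-type factorization on the compact support of $\tilde{g}$ I would write $R_N(q,X) = \sum_{\abs{\alpha}=N+1} u^\alpha g_\alpha(q,X)$ with smooth, compactly supported $g_\alpha$; the same Fourier-side identity then shows the corresponding remainder integral is of order $\hbar^{N+1}$. Thus \eqref{eq:41} is valid as an asymptotic expansion in $\hbar$, the partial sum through order $k$ differing from $f \star g$ by $O(\hbar^{k+1})$, and it becomes a genuine convergent identity precisely when $f$ is real-analytic, so that the Taylor series of $f\bigl(q\exp(-\tfrac{1}{2}X)\bigr)$ converges on $\supp\tilde{g}$. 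The only care required is the uniform control of $R_N$ and its factors, which is immediate from the smoothness of $f$ and the compactness of $\supp\tilde{g}$.
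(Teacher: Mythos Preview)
Your approach is essentially identical to the paper's: both start from the closed formula \eqref{eq:40}, use the iterated identity $\frac{\dd^k}{\dd t^k}f\bigl(q\exp(tX)\bigr)\big|_{t=0} = L_X^k f(q)$ to Taylor-expand $f\bigl(q\exp(-\tfrac{1}{2}X)\bigr)$ about $X=0$, and then read off \eqref{eq:41} from the integral. Your version is in fact more explicit than the paper's---you spell out the Fourier-side step $u^j e^{-\frac{i}{\hbar}\braket{p,X}} = i\hbar\, Z^j e^{-\frac{i}{\hbar}\braket{p,X}}$ and you are more careful about the remainder, correctly noting that for merely smooth $f$ the series is asymptotic in $\hbar$, whereas the paper simply writes the infinite Taylor series as an equality and substitutes.
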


\begin{proof}
First observe that the following equality holds
\begin{equation}
\frac{\dd^k}{\dd t^k} f\bigl(q\exp(tX)\bigr) \bigg|_{t = 0} =
    \underbrace{L_X \dotsm L_X}_k f(q).
\end{equation}
Expanding function $t \mapsto f(q\exp(tX))$ in a Taylor series and using the
above equality results in
\begin{equation}
f\bigl(q\exp(tX)\bigr) = \sum_{k=0}^\infty \frac{1}{k!} \frac{\dd^k}{\dd t^k}
    f\bigl(q\exp(tX)\bigr) \bigg|_{t = 0} t^k
= \sum_{k=0}^\infty \frac{1}{k!} t^k \underbrace{L_X \dotsm L_X}_k f(q).
\end{equation}
For $X = u^i X_i$ and $t = -\tfrac{1}{2}$ this gives
\begin{equation}
f\bigl(q\exp(-\tfrac{1}{2}X)\bigr) = \sum_{k=0}^\infty \frac{1}{k!}
    \left(-\frac{1}{2}\right)^k L_{X_{i_1}} L_{X_{i_2}} \dotsm L_{X_{i_k}}f(q)
    u^{i_1} u^{i_2} \dotsm u^{i_k}.
\end{equation}
From the above result and the integral formula \eqref{eq:40} for the product
$f \star g$ we get \eqref{eq:41}.
\end{proof}

\begin{theorem}
If $p_j$ is a fiber variable corresponding to a basis $\{X_i\}$ in
$\mathfrak{g}$, then for every $g \in \mathcal{F}(M)$ the following expansion
in $\hbar$ holds
\begin{align}
& p_j \star g = p_j g + \sum_{k=1}^\infty \frac{(i\hbar)^k}{k!} \biggl(
    \frac{1 - 2^k}{2^{k-1}} B_k C^{j_2}_{i_2 j} C^{j_3}_{i_3 j_2}
    C^{j_4}_{i_4 j_3} \dotsm C^{j_k}_{i_k j_{k-1}}
    Z^{i_2} Z^{i_3} \dotsm Z^{i_k} Y_{j_k} g \nonumber \\
& \quad {} + B_k C^{j_1}_{i_1 j} C^{j_2}_{i_2 j_1} C^{j_3}_{i_3 j_2} \dotsm
    C^{j_k}_{i_k j_{k-1}} p_{j_k} Z^{i_1} Z^{i_2} \dotsm Z^{i_k} g \nonumber \\
& \quad {} - \frac{1}{2} \sum_{l=1}^{[k/2]} \binom{k}{2l} B_{k - 2l} B_{2l}
    C^{j_1}_{j_k j_{2l}} C^{j_2}_{i_2 j_1} C^{j_3}_{i_3 j_2} \dotsm
    C^{j_{2l}}_{i_{2l} j_{2l-1}} C^{j_{2l+1}}_{i_{2l+1} j}
    C^{j_{2l+2}}_{i_{2l+2} j_{2l+1}} C^{j_{2l+3}}_{i_{2l+3} j_{2l+2}} \dotsm
    C^{j_k}_{i_k j_{k-1}} Z^{i_2} Z^{i_3} \dotsm Z^{i_k} g \biggr),
\label{eq:39}
\end{align}
where $B_k$ is the \nbr{k}th Bernoulli number and $[k/2]$ denotes the nearest
integer number smaller or equal to $k/2$.
\end{theorem}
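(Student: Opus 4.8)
The plan is to reduce the computation to a single Lie-algebra integral and then expand the integrand in powers of the integration variable. I would start from the defining integral \eqref{eq:2} with $f = p_j$. Since $p_j(q,p) = \braket{p,X_j}$ does not depend on $q$ and is linear in $p$, its momentum Fourier transform is the distribution $\tilde p_j(X) = -i\hbar\,\frac{\partial}{\partial X_j}\delta(X)$ supported at $0 \in \mathfrak{g}$. Inserting this into \eqref{eq:2} and integrating by parts in the first Lie-algebra variable collapses the double integral to
\begin{equation}
(p_j \star g)(q,p) = i\hbar \int_{\mathfrak{g}} \frac{\partial}{\partial X_j}\bigg|_{X=0}
\Bigl[ \tilde g\bigl(q\exp(\tfrac12(X \diamond Y))\exp(-\tfrac12 Y),Y\bigr)\,
e^{-\frac{i}{\hbar}\braket{p,X \diamond Y}}\, L(X,Y) \Bigr] \ud Y.
\end{equation}
(Rigorously this is justified through the pairing $\braket{p_j \star g,h} = \braket{p_j, g \star h}$ exactly as in the proof of Theorem~\ref{thm:2}, $p_j$ being a genuine polynomially bounded function.) The second ingredient is the dictionary that converts the integration variable back into phase-space operators: writing $Y = u^i X_i$ one has
\begin{equation}
\int_{\mathfrak{g}} u^{i_1} \dotsm u^{i_k}\, (L_{X_m}\tilde g)(q,Y)\,
e^{-\frac{i}{\hbar}\braket{p,Y}} \ud Y = (i\hbar)^k Z^{i_1} \dotsm Z^{i_k} Y_m g,
\end{equation}
since multiplication by $u^i$ under the inverse momentum Fourier transform produces a factor $i\hbar Z^i$, the left-invariant derivative $L_{X_m}$ lifts to $Y_m$, and $[Z^i,Y_m] = 0$ (with $L_{X_m}$ and $Y_m$ simply absent when the derivative does not fall on the group argument of $\tilde g$).

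Applying the Leibniz rule at $X=0$ splits the integrand into three terms, one per factor, and I expect these to produce respectively the second, first and third lines of \eqref{eq:39}. For the exponential the key computation is the $X$-derivative of the Baker--Campbell--Hausdorff product, $\frac{\partial}{\partial X_j}\big|_{X=0}(X \diamond Y) = \phi(-\ad_Y)^{-1} X_j = \frac{\ad_Y}{e^{\ad_Y}-1}X_j = \sum_{n\ge 0}\frac{B_n}{n!}(\ad_Y)^n X_j$, which is exactly where the Bernoulli numbers enter, through the generating function $x/(e^x-1)$. Pairing with $p$ and expanding $(\ad_Y)^n X_j$ in the basis turns $\braket{p,(\ad_Y)^n X_j}$ into a chain of structure constants times $p_{j_n}$ times an $n$-fold product of $u$'s; the $n=0$ term yields the leading $p_j g$ and the terms $n\ge 1$, after applying the dictionary, reproduce the second line.

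For the argument of $\tilde g$ I would differentiate the group element $q\exp(\tfrac12(X \diamond Y))\exp(-\tfrac12 Y)$ at $X=0$; combining $\T_X\exp = \T_e L_{\exp(X)}\circ\phi(\ad_X)$, the identity $\Ad_{\exp(\frac12 Y)} = e^{\frac12\ad_Y}$ and the previous formula for $\frac{\partial}{\partial X_j}(X\diamond Y)|_0$ collapses the result to the left-invariant field $L_{W_j(Y)}$ with
\begin{equation}
W_j(Y) = \frac{e^{\frac12\ad_Y}-1}{e^{\ad_Y}-1}\, X_j
= \frac{1}{e^{\frac12\ad_Y}+1}\, X_j
= \sum_{n\ge 1}\frac{(1-2^n)B_n}{n!\,2^{n-1}}(\ad_Y)^{n-1} X_j.
\end{equation}
The coefficient $(1-2^n)B_n/2^{n-1}$ is precisely that of the first line of \eqref{eq:39}; expanding $(\ad_Y)^{n-1}X_j$ into the chain $C^{j_2}_{i_2 j}\dotsm C^{j_n}_{i_n j_{n-1}}$ and using the dictionary (with $L_{X_{j_n}}\mapsto Y_{j_n}$) gives the first line.

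The remaining and most delicate term is the $X$-derivative of $L(X,Y)$, and this will be the main obstacle. Because $F'(0)=0$ only the denominator $F(X\diamond Y)$ contributes, and $\frac{\partial}{\partial X_j}L(X,Y)\big|_{0} = -\tfrac12\,\frac{\dd}{\dd s}\Tr\ln\lambda(\ad_{Y+sV_j})\big|_{s=0} = -\tfrac12\Tr\bigl[(\ln\lambda)'(\ad_Y)\,\ad_{V_j}\bigr]$, where $V_j = \frac{\partial}{\partial X_j}(X\diamond Y)|_0$ and $(\ln\lambda)'(x) = \frac{\psi(x)-1}{x} = \sum_{l\ge 1}\frac{B_{2l}}{(2l)!}x^{2l-1}$ carries the even Bernoulli numbers. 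The hard part is the bookkeeping: one must (i) justify the trace-derivative identity $\frac{\dd}{\dd s}\Tr f(A+sB)\big|_0 = \Tr[f'(A)B]$ for the non-commuting pair $A=\ad_Y$, $B=\ad_{V_j}$; (ii) use the Jacobi identity to handle $\ad_{V_j}$ with $V_j = \sum_m \frac{B_m}{m!}(\ad_Y)^m X_j$, so that the trace closes the two open chains of structure constants into the single loop factor $C^{j_1}_{j_k j_{2l}}$; and (iii) combine the generating-function coefficients $B_{2l}/(2l)!$ from $(\ln\lambda)'$ and $B_{k-2l}/(k-2l)!$ from $V_j$ into $\frac{1}{k!}\binom{k}{2l}B_{2l}B_{k-2l}$ via $\frac{1}{(2l)!\,(k-2l)!} = \frac{1}{k!}\binom{k}{2l}$. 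Collecting the three contributions by total power of $\hbar$ and matching the global prefactor $\frac{(i\hbar)^k}{k!}$ then yields \eqref{eq:39}.
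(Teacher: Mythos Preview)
Your proposal is correct and follows essentially the same strategy as the paper: both arrive at the single integral \eqref{eq:35}, apply the Leibniz rule to the three factors $\tilde g$, $e^{-\frac{i}{\hbar}\braket{p,X\diamond Y}}$, $L(X,Y)$, and compute each $X$-derivative at $0$ via the BCH formula $\tfrac{\partial}{\partial X_j}(X\diamond Y)|_{0}=\sum_n\frac{B_n}{n!}\ad_Y^n X_j$ together with the appropriate generating functions, then translate back via the Fourier dictionary $u^i\mapsto i\hbar Z^i$. Your treatment of the $\tilde g$-term via the identity $\frac{e^{x/2}-1}{e^x-1}=\frac{1}{e^{x/2}+1}$ is a bit more streamlined than the paper's route through Bernoulli polynomials $B_k(\tfrac12)-B_k(0)$, but the two computations are equivalent and the overall architecture is the same.
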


\begin{proof}
From the analysis of the proof of Theorem~\ref{thm:2} we can deduce that the
integral kernel $\mathsf{f}$ of $p_j \star g$ will be given by the formula
\begin{equation}
\mathsf{f}(a,b) = i\hbar\frac{\partial}{\partial X_j} \tilde{g}\bigl(
    a\exp(X)\exp(\tfrac{1}{2}V_{a\exp(X)}(b)),V_{a\exp(X)}(b)\bigr)
    F(V_{a\exp(X)}(b))^{-1} F(X) \det\Ad_{\exp(-\frac{1}{2}X)} \bigg|_{X = 0}.
\end{equation}
Therefore, from the equality
\begin{equation}
V_{q\exp(-\frac{1}{2}Y)\exp(X)}\bigl(q\exp(\tfrac{1}{2}Y)\bigr) =
    (-X) \diamond Y,
\end{equation}
we get
\begin{align}
(p_j \star g)(q,p) & = \int_{\mathcal{O}} i\hbar\frac{\partial}{\partial X_j}
    \tilde{g}\bigl(q\exp(-\tfrac{1}{2}Y)\exp(X)
    \exp(\tfrac{1}{2}((-X) \diamond Y)),(-X) \diamond Y\bigr)
    F((-X) \diamond Y)^{-1} F(X) \nonumber \\
& \quad \times \det\Ad_{\exp(-\frac{1}{2}X)} \bigg|_{X = 0}
    e^{-\frac{i}{\hbar}\braket{p,Y}} F(Y) \ud{Y} \nonumber \\
& = \int_{\mathfrak{g}} i\hbar\frac{\partial}{\partial X_j}
    \tilde{g}\bigl(q\exp(\tfrac{1}{2}(X \diamond Y))\exp(-\tfrac{1}{2}Y),Y\bigr)
    e^{-\frac{i}{\hbar}\braket{p,X \diamond Y}} L(X,Y) \bigg|_{X = 0} \ud{Y},
\label{eq:35}
\end{align}
where $\tfrac{\partial}{\partial X_j}$ refers to a differentiation with respect
to $X$ variable.

Observe, that from the integral version of the Baker-Campbell-Hausdorff formula
\begin{equation}
X \diamond Y = Y + \left(\int_0^1 \varphi(e^{s\ad_X} e^{\ad_Y}) \ud{s}\right)X,
\end{equation}
where $\varphi(x) = \frac{\ln x}{x - 1}$. Using this formula we can calculate
that
\begin{align}
\frac{\partial}{\partial X_j} (X \diamond Y) & =
    \frac{\dd}{\dd{t}} \bigl((X + tX_j) \diamond Y\bigr) \bigg|_{t = 0}
= \frac{\dd}{\dd{t}} \left(Y + \left(\int_0^1 \varphi(e^{s\ad_{X + tX_j}}
    e^{\ad_Y}) \ud{s}\right)(X + tX_j)\right) \bigg|_{t = 0} \nonumber \\
& = \frac{\dd}{\dd{t}} \left(Y + \left(\int_0^1 \varphi(e^{s\ad_{X + tX_j}}
    e^{\ad_Y}) \ud{s}\right)X + t\left(\int_0^1 \varphi(e^{s\ad_{X + tX_j}}
    e^{\ad_Y}) \ud{s}\right)X_j\right) \bigg|_{t = 0} \nonumber \\
& = \left(\frac{\dd}{\dd{t}} \int_0^1 \varphi(e^{s\ad_{X + tX_j}} e^{\ad_Y})
    \ud{s} \bigg|_{t = 0}\right)X
    + \left(\int_0^1 \varphi(e^{s\ad_X} e^{\ad_Y}) \ud{s}\right)X_j.
\end{align}
From this we get that
\begin{equation}
\frac{\partial}{\partial X_j} (X \diamond Y) \bigg|_{X = 0} =
    \varphi(e^{\ad_Y})X_j
= \sum_{k=0}^\infty \frac{B_k}{k!} \ad_Y^k X_j,
\label{eq:34}
\end{equation}
since $\varphi(e^x) = \frac{x}{e^x - 1} = \sum_{k=0}^\infty \frac{B_k}{k!} x^k$
is a generating function for the Bernoulli numbers.
If $Y = v^i X_i$ is an expansion of $Y$ in the basis $\{X_i\}$, then by
\eqref{eq:34} we get
\begin{equation}
\frac{\partial}{\partial X_j} e^{-\frac{i}{\hbar}\braket{p,X \diamond Y}}
    \bigg|_{X = 0} = -\frac{i}{\hbar} \sum_{k=0}^\infty \frac{B_k}{k!}
    C^{j_1}_{i_1 j} C^{j_2}_{i_2 j_1} C^{j_3}_{i_3 j_2} \dotsm
    C^{j_k}_{i_k j_{k-1}} p_{j_k} v^{i_1} v^{i_2} \dotsm v^{i_k}
    e^{-\frac{i}{\hbar}\braket{p,Y}}.
\label{eq:36}
\end{equation}
We have that
\begin{equation}
F(X) = \exp\left(\frac{1}{2}\Tr f(\ad_X)\right) \quad \text{for
$f(x) = \ln(\lambda(x)) = \ln\left(\frac{2}{x}\sinh\frac{x}{2}\right)$}
\end{equation}
and
\begin{equation}
\frac{\partial}{\partial X_j} F(X) = \frac{1}{2}F(X)
    \frac{\partial}{\partial X_j} \Tr f(\ad_X)
= \frac{1}{2} F(X) \Tr\bigl(f'(\ad_X) \circ \ad_{X_j}\bigr).
\end{equation}
Since
\begin{equation}
f'(x) = \frac{1}{2} \coth\frac{x}{2} - \frac{1}{x}
= \sum_{k=1}^\infty \frac{B_{2k} x^{2k-1}}{(2k)!}
\end{equation}
we get
\begin{align}
\frac{\partial}{\partial X_j} F(X) & = \frac{1}{2}F(X) \sum_{k=1}^\infty
    \frac{B_{2k}}{(2k)!}\Tr\bigl(\ad_X^{2k-1} \circ \ad_{X_j}\bigr) \nonumber \\
& = \frac{1}{2}F(X) \sum_{k=1}^\infty \frac{B_{2k}}{(2k)!} C^{j_1}_{j j_{2k}}
    C^{j_2}_{i_2 j_1} C^{j_3}_{i_3 j_2} \dotsm C^{j_{2k}}_{i_{2k} j_{2k-1}}
    u^{i_2} u^{i_3} \dotsm u^{i_{2k}}
\label{eq:33}
\end{align}
for $X = u^i X_i$. We calculate that
\begin{align}
\frac{\partial}{\partial X_j} L(X,Y) \bigg|_{X = 0} & =
    \frac{\partial}{\partial X_j} F(X)F(Y)F(X \diamond Y)^{-1} \bigg|_{X = 0}
= F(Y) \frac{\partial}{\partial X_j} F(X \diamond Y)^{-1} \bigg|_{X = 0}
    \nonumber \\
& = -F(Y)^{-1} \frac{\partial}{\partial X_j} F(X \diamond Y) \bigg|_{X = 0}
= -F(Y)^{-1} \frac{\partial F}{\partial X_i}(Y) \frac{\partial}{\partial X_j}
    (X \diamond Y)^i \bigg|_{X = 0},
\end{align}
where $(X \diamond Y)^i$ denotes the \nbr{i}th component of $X \diamond Y$ in
the expansion with respect to the basis $\{X_i\}$. Combining \eqref{eq:34} and
\eqref{eq:33} gives
\begin{align}
\frac{\partial}{\partial X_j} L(X,Y) \bigg|_{X = 0} & = -\frac{1}{2}
    \sum_{k=0}^\infty \sum_{l=1}^{[k/2]} \frac{B_{k - 2l}}{(k - 2l)!}
    \frac{B_{2l}}{(2l)!} C^{j_1}_{j_k j_{2l}} C^{j_2}_{i_2 j_1}
    C^{j_3}_{i_3 j_2} \dotsm C^{j_{2l}}_{i_{2l} j_{2l-1}} \nonumber \\
& \quad \times C^{j_{2l+1}}_{i_{2l+1} j}
    C^{j_{2l+2}}_{i_{2l+2} j_{2l+1}} C^{j_{2l+3}}_{i_{2l+3} j_{2l+2}} \dotsm
    C^{j_k}_{i_k j_{k-1}} v^{i_2} v^{i_3} \dotsm v^{i_k}.
\label{eq:37}
\end{align}
For $\varphi \in C^\infty(G)$ we have
\begin{align}
& \frac{\partial}{\partial X_j} \varphi\bigl(q\exp(\tfrac{1}{2}(X \diamond Y))
    \exp(-\tfrac{1}{2}Y)\bigr) \bigg|_{X = 0} =
    \frac{\partial}{\partial X_j} \varphi\bigl(q\exp(-\tfrac{1}{2}Y)
    \exp(\tfrac{1}{2}Y)\exp(\tfrac{1}{2}(X \diamond Y))
    \exp(-\tfrac{1}{2}Y)\bigr) \bigg|_{X = 0} \nonumber \\
& \quad = \frac{\partial}{\partial X_j}
    \varphi\left(q\exp\left(-\frac{1}{2}Y\right)
    \exp\biggl(\frac{1}{2}\sum_{k=0}^\infty \frac{1}{k!}
    \left(\frac{1}{2}\right)^k \ad_Y^k (X \diamond Y)\biggr)\right)
    \Bigg|_{X = 0}.
\end{align}
By \eqref{eq:1a} and \eqref{eq:34} we get
\begin{align}
& \frac{\partial}{\partial X_j} \varphi\bigl(q\exp(\tfrac{1}{2}(X \diamond Y))
    \exp(-\tfrac{1}{2}Y)\bigr) \bigg|_{X = 0} =
\T_q \varphi\left(\T_e L_q \biggl(\phi(\ad_{\frac{1}{2}Y}) \frac{1}{2}
    \sum_{k=0}^\infty \sum_{l=0}^\infty \frac{1}{k!}
    \left(\frac{1}{2}\right)^k \frac{B_l}{l!} \ad_Y^{k+l} X_j\biggr)\right)
    \nonumber \\
& \quad = \T_q \varphi\left(\T_e L_q \biggl(\phi(\ad_{\frac{1}{2}Y}) \frac{1}{2}
    \sum_{r=0}^\infty \frac{1}{r!} B_r(\tfrac{1}{2}) \ad_Y^r X_j\biggr)\right),
\end{align}
where
\begin{equation}
B_r(\tfrac{1}{2}) = \sum_{l=0}^r \binom{r}{l}
    \left(\frac{1}{2}\right)^{r-l} B_l = \left(\frac{1}{2^{r-1}} - 1\right) B_r
\end{equation}
is the value of the Bernoulli polynomial at $\tfrac{1}{2}$. An expansion of the
function $\phi$ in a Taylor series results in
\begin{align}
& \frac{\partial}{\partial X_j} \varphi\bigl(q\exp(\tfrac{1}{2}(X \diamond Y))
    \exp(-\tfrac{1}{2}Y)\bigr) \bigg|_{X = 0} = \nonumber \\
& \quad = -\sum_{k=0}^\infty
    \frac{1}{(k + 1)!} \sum_{r=0}^k \binom{k+1}{r}
    \left(-\frac{1}{2}\right)^{k+1-r} B_r(\tfrac{1}{2}) 
    \T_q \varphi\bigl(\T_e L_q(\ad_Y^k X_j)\bigr) \nonumber \\
& \quad = \sum_{k=1}^\infty \frac{1}{k!} \bigl(B_k(\tfrac{1}{2}) - B_k(0)\bigr)
    \T_q \varphi\bigl(\T_e L_q(\ad_Y^{k-1} X_j)\bigr),
\end{align}
where we have used the translation property of Bernoulli polynomials. Using the
above result we get
\begin{align}
& \frac{\partial}{\partial X_j}
    \tilde{g}\bigl(q\exp(\tfrac{1}{2}(X \diamond Y))\exp(-\tfrac{1}{2}Y),Y\bigr)
    \bigg|_{X = 0} = \nonumber \\
& \quad = \sum_{k=1}^\infty \frac{1}{k!} \frac{1 - 2^k}{2^{k-1}} B_k
    C^{j_2}_{i_2 j} C^{j_3}_{i_3 j_2} C^{j_4}_{i_4 j_3} \dotsm
    C^{j_k}_{i_k j_{k-1}} (Y_{j_k} g)^\sim(q,Y) v^{i_2} v^{i_3} \dotsm v^{i_k}.
\label{eq:38}
\end{align}
Applying \eqref{eq:36}, \eqref{eq:37} and \eqref{eq:38} to \eqref{eq:35} we
receive \eqref{eq:39}.
\end{proof}

Now we can show that all smooth functions polynomial in fiber variables $p_j$
belong to $\mathcal{F}_\star(M)$. First we will prove that polynomials in $p_j$
are in $\mathcal{F}_\star(M)$. Assume that for a given $k \geq 1$ all monomials
$p_{i_1} p_{i_2} \dotsm p_{i_k}$ of order $k$ are in $\mathcal{F}_\star(M)$.
Then, the expansion \eqref{eq:39} will also hold for these monomials in place of
$g$ and we can see that the product $p_{i_{k+1}} \star p_{i_1} p_{i_2} \dotsm
p_{i_k}$ will be of the form of a monomial $p_{i_1} p_{i_2} \dotsm p_{i_k}
p_{i_{k+1}}$ plus some polynomial of order $k$. Thus monomials $p_{i_1} p_{i_2}
\dotsm p_{i_k} p_{i_{k+1}}$ of order $k + 1$ will also belong to
$\mathcal{F}_\star(M)$. Since in Theorem~\ref{thm:2} we showed that all $p_j$
are in $\mathcal{F}_\star(M)$, then by induction we see that all monomials, and
consequently all polynomials, are in $\mathcal{F}_\star(M)$. Next, assume that
for a given $k \geq 0$ all smooth functions which are polynomial of order $k$ in
$p_j$ are in $\mathcal{F}_\star(M)$. The expansion \eqref{eq:41} also holds for
$g$ replaced by an arbitrary polynomial in $p_j$. Then, we can see that the
\nbr{\star}product of an arbitrary smooth function $f(q)$ on $G$ and a monomial
$p_{i_1} p_{i_2} \dotsm p_{i_{k+1}}$ of order $k + 1$ will be in the form of
a function $f(q) p_{i_1} p_{i_2} \dotsm p_{i_{k+1}}$ plus some smooth function
polynomial in $p_j$ of order $k$. Thus smooth functions polynomial in $p_j$ of
order $k + 1$, and by induction of all orders, will be in
$\mathcal{F}_\star(M)$.

Note, that from \eqref{eq:39} we get in particular that
\begin{equation}
p_i \star p_j = p_i p_j + \frac{1}{2}i\hbar C^k_{ij} p_k
    + \frac{1}{24} \hbar^2 C^k_{il} C^l_{jk}.
\end{equation}
From this we receive the following commutation relation
\begin{equation}
\lshad p_i,p_j \rshad = C^k_{ij} p_k
\end{equation}
being an analog of its classical counterpart \eqref{eq:48}.

\subsection{Time evolution}
\label{subsec:3.5}
For completeness of the quantization procedure we present a short description
of the time evolution of a quantum system. The time evolution is governed by
a Hamiltonian function $H \in \mathcal{F}_\star(M)$ which is, similarly as in
classical mechanics, some distinguished observable. The equation describing the
time evolution of an observable $A \in \mathcal{F}_\star(M)$ can be received
from its classical counterpart by replacing the Poisson bracket
$\{\sdot,\sdot\}$ with its deformation $\lshad\sdot,\sdot\rshad$:
\begin{equation}
\frac{\dd A}{\dd t}(t) - \lshad A(t),H \rshad = 0, \quad A(0) = A.
\end{equation}
If $A \in \mathcal{A}(M)$ and $A(t)$ is its time development, then the time
evolution of a state $\Lambda$ can be given by the formula
\begin{equation}
\Lambda(t)(A) = \Lambda(A(t)).
\end{equation}
In particular, if $\Lambda = \Lambda_\rho$ then we receive the following formula
for the time development of the pseudo-probabilistic distribution function
$\rho$:
\begin{equation}
\frac{\partial \rho}{\partial t}(t) - \lshad H,\rho(t) \rshad = 0, \quad
\rho(0) = \rho.
\end{equation}

\section{Operator representation}
\label{sec:4}
\subsection{Representation of the algebra of observables $\mathcal{A}(M)$}
\label{subsec:4.1}
By virtue of Gelfand-Naimark theorem the \nbr{C^*}algebra of observables
$\mathcal{A}(M)$ can be isometrically represented as a subalgebra of the
\nbr{C^*}algebra $\mathcal{B}(\mathcal{H})$ of bounded linear operators on a
certain Hilbert space $\mathcal{H}$. In what follows we will present an explicit
construction of this representation for $\mathcal{H} = L^2(G,\dd{m})$. We will
in fact receive a position representation of a quantum system, where the Hilbert
space $\mathcal{H}$ will play the role of the space of wave functions.

Let $\mathcal{H} = L^2(G,\dd{m})$ be a Hilbert space of \nbr{\mathbb{C}}valued
square integrable functions on $G$ with a scalar product given by
\begin{equation}
(\varphi,\psi) = \int_G \overline{\varphi(q)} \psi(q) \ud{m(q)},
\end{equation}
where $\dd{m}$ is the left-invariant Haar measure on $G$. For a function
$f \in \mathcal{F}(M)$ we define an operator $\hat{f}$ acting in $\mathcal{H}$
as an integral operator given by an integral kernel $\mathsf{f}$ corresponding
to $f$ according to the formula \eqref{eq:5}:
\begin{equation}
\hat{f}\psi(a) = \int_G \mathsf{f}(a,b) \psi(b) \ud{m(b)}.
\label{eq:43}
\end{equation}
Using \eqref{eq:5} and performing the following change of variables under the
integral sign: $b \mapsto X = V_a(b)$, formula \eqref{eq:43} takes the form
\begin{equation}
\hat{f}\psi(q) = \int_{\mathfrak{g}} \tilde{f}
    \left(q\exp(\tfrac{1}{2}X), X\right) \psi\bigl(q\exp(X)\bigr)
    F(X) \det\Ad_{\exp(-\frac{1}{2}X)} \ud{X}.
\end{equation}

\begin{theorem}
\label{thm:3}
For $f,g \in \mathcal{F}(M)$
\begin{enumerate}[(i)]
\item\label{item:3a} the map $f \mapsto \hat{f}$ is a linear isomorphism of
$\mathcal{F}(M)$ onto the space of integral operators whose integral kernels
$\mathsf{f} \in C^\infty_c(G \times G)$,
\item\label{item:3b} $\widehat{f \star g} = \hat{f} \hat{g}$,
\item\label{item:3c} $\hat{\bar{f}} = \hat{f}^\dagger$,
\item\label{item:3d} $\hat{f}$ is a trace class operator and
$\Tr(\hat{f}) = \tr(f)$,
\item\label{item:3e} the Hilbert-Schmidt scalar product of operators $\hat{f}$
and $\hat{g}$ is equal $(\hat{f},\hat{g}) \equiv \Tr(\hat{f}^\dagger \hat{g})
= (f,g)$,
\item\label{item:3f} $\norm{\hat{f}} = \norm{f}$.
\end{enumerate}
\end{theorem}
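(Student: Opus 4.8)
The plan is to push the entire proof down to the level of integral kernels, using the bijection $f \leftrightarrow \mathsf{f}$ between $\mathcal{F}(M)$ and $C^\infty_c(G \times G)$ together with the composition formula \eqref{eq:29} and Lemma~\ref{lem:1}. For (\ref{item:3a}), linearity of $f \mapsto \hat{f}$ is immediate, since by \eqref{eq:44} (the inverse of \eqref{eq:5}) the kernel $\mathsf{f}$ depends linearly on $f$. The correspondence $f \leftrightarrow \mathsf{f}$ is already a bijection $\mathcal{F}(M) \to C^\infty_c(G \times G)$, so it remains only to recall that an integral operator determines its kernel uniquely: thus $\hat{f} = \hat{g}$ forces $\mathsf{f} = \mathsf{g}$ and hence $f = g$, giving injectivity, while surjectivity onto integral operators with kernels in $C^\infty_c(G \times G)$ is built into the definition \eqref{eq:43}. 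For (\ref{item:3b}), the kernel of $f \star g$ is the composition $\mathsf{h}(a,b) = \int_G \mathsf{f}(a,c)\mathsf{g}(c,b)\ud{m(c)}$ by \eqref{eq:29}; a Fubini interchange, legitimate because $\mathsf{f}$ and $\mathsf{g}$ are smooth and compactly supported, shows that $\hat{f}\hat{g}$ is the integral operator with exactly this kernel, so $\widehat{f \star g} = \hat{f}\hat{g}$.

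For (\ref{item:3c}) I would compute the kernel of $\bar{f}$ explicitly. From the definition of the momentum Fourier transform one has $(\bar{f})^\sim(q,X) = \overline{\tilde{f}(q,-X)}$, and $F$ is symmetric. Combining this with $V_a(b) = -V_b(a)$ and the identity $a\exp(\tfrac{1}{2}V_a(b)) = b\exp(-\tfrac{1}{2}V_a(b))$ (both consequences of $a\exp(V_a(b)) = b$), the kernel of $\bar{f}$ at $(a,b)$ reduces to $\overline{\mathsf{f}(b,a)}$, which is precisely the kernel of the adjoint; hence $\hat{\bar{f}} = \hat{f}^\dagger$. For (\ref{item:3e}), since $\mathsf{f} \in C^\infty_c(G \times G) \subset L^2(G \times G)$ each $\hat{f}$ is Hilbert--Schmidt, and the Hilbert--Schmidt inner product equals $\int_{G \times G}\overline{\mathsf{f}(a,b)}\mathsf{g}(a,b)\ud{m(a)}\ud{m(b)} = (\mathsf{f},\mathsf{g})$, which is $(f,g)$ by Lemma~\ref{lem:1}.

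Part (\ref{item:3d}) is where the real work lies. Using (\ref{item:3e}), $\hat{f}$ is Hilbert--Schmidt, but to obtain trace class I would factor $\hat{f} = \hat{g}\hat{h}$ as a product of two Hilbert--Schmidt operators, which is possible because the smooth compactly supported kernel $\mathsf{f}$ factors as a composition $\mathsf{f} = \mathsf{g} \circ \mathsf{h}$ of two smooth compactly supported kernels (alternatively one inserts a sufficiently negative power of an elliptic operator, which is Hilbert--Schmidt). Being a product of two Hilbert--Schmidt operators, $\hat{f}$ is then trace class, and the classical trace formula for trace-class integral operators with continuous kernel gives $\Tr(\hat{f}) = \int_G \mathsf{f}(a,a)\ud{m(a)}$. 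Finally $V_a(a) = \exp^{-1}(e) = 0$ and $F(0) = 1$ yield $\mathsf{f}(a,a) = \tilde{f}(a,0)$, so $\Tr(\hat{f}) = \int_G \tilde{f}(a,0)\ud{m(a)} = \int_M f\ud{l} = \tr(f)$ by the improper-integral identity of Section~\ref{subsec:2.3}.

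For (\ref{item:3f}), transferring the $C^*$-norm $\norm{f} = \sup\{\norm{f \star h}_2 \mid \norm{h}_2 = 1\}$ to kernels via Lemma~\ref{lem:1} turns it into $\sup\{\norm{\mathsf{f} \circ \mathsf{h}}_{L^2(G \times G)} \mid \norm{\mathsf{h}}_{L^2(G \times G)} = 1\}$. The assignment $\mathsf{h} \mapsto \mathsf{f} \circ \mathsf{h}$ is $\hat{f}$ acting in the first variable with the second as a parameter, i.e.\ $\hat{f} \otimes I$ on $L^2(G) \otimes L^2(G)$, whose operator norm equals $\norm{\hat{f}}$; since $C^\infty_c(G \times G)$ is dense in $L^2(G \times G)$, the supremum over kernels coincides with the full operator norm, giving $\norm{f} = \norm{\hat{f}}$. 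I expect the trace-class step in (\ref{item:3d}) to be the main obstacle: Hilbert--Schmidt membership is automatic, but promoting it to trace class (and justifying the diagonal trace formula) is the one place requiring genuine operator-theoretic input, namely the factorization of the kernel and the continuous-kernel trace formula.
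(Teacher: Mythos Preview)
Your proposal is correct and follows essentially the same kernel-based approach as the paper: the paper likewise reduces (\ref{item:3a})--(\ref{item:3c}) and (\ref{item:3e})--(\ref{item:3f}) to the bijection $f \leftrightarrow \mathsf{f}$, formula \eqref{eq:29}, and Lemma~\ref{lem:1}, and for (\ref{item:3f}) uses exactly the identity $\norm{\hat{f}} = \sup\{\norm{\hat{f}\hat{g}}_2 : \norm{\hat{g}}_2 = 1\}$ that underlies your $\hat{f}\otimes I$ description. For (\ref{item:3d}) the paper simply asserts that an integral operator with smooth compactly supported kernel is trace class with trace $\int_G \mathsf{f}(q,q)\ud{m(q)}$, so your factorization/elliptic argument is in fact more careful than what the paper provides (the kernel-factorization route would need justification, but the elliptic alternative you mention is standard and suffices).
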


\begin{proof}
(\ref{item:3a}) This property follows immediately from the definition of the
space $\mathcal{F}(M)$.

(\ref{item:3b}) From the proof of Theorem~\ref{thm:1} it follows that the
integral kernel corresponding to the product $f \star g$ is exactly equal to
the integral kernel of $\hat{f}\hat{g}$.

(\ref{item:3c}) Let $\mathsf{f}(a,b)$ be an integral kernel corresponding to
$f$. Then the integral kernel of $\hat{f}^\dagger$ will be equal
\begin{align}
\overline{\mathsf{f}(b,a)} & =
    \overline{\tilde{f}\bigl(b\exp(\tfrac{1}{2}V_b(a)),V_b(a)\bigr)
    F(V_b(a))^{-1}}
= \tilde{\bar{f}}\bigl(b\exp(V_b(a))\exp(-\tfrac{1}{2}V_b(a)),-V_b(a)\bigr)
    F(V_b(a))^{-1} \nonumber \\
& = \tilde{\bar{f}}\bigl(a\exp(\tfrac{1}{2}V_a(b)),V_a(b)\bigr) F(V_a(b))^{-1},
\end{align}
which is just the integral kernel corresponding to $\bar{f}$.

(\ref{item:3d}) Since $\hat{f}$ is an integral operator whose integral kernel is
smooth and compactly supported it will be of trace class and its trace will be
expressed by the formula
\begin{align}
\Tr(\hat{f}) & = \int_G \mathsf{f}(q,q) \ud{m(q)}
= \int_G \tilde{f}(q,0) \ud{m(q)}
= \frac{1}{\abs{2\pi\hbar}^n} \int_G \int_{\mathfrak{g}^*} f(q,p)\ud{m(q)}\ud{p}
= \frac{1}{\abs{2\pi\hbar}^n} \int_M f(x) \ud{x} \nonumber \\
& = \tr(f).
\end{align}

(\ref{item:3e}) This property follows immediately from Lemma~\ref{lem:1}.

(\ref{item:3f}) The operator norm of a bounded operator $\hat{f}$ can be
expressed in terms of the Hilbert-Schmidt norm $\norm{\sdot}_2$ according to the
formula
\begin{equation}
\norm{\hat{f}} = \sup\{\norm{\hat{f}\hat{g}}_2 \mid
\text{$\hat{g}$ is a Hilbert-Schmidt operator and $\norm{\hat{g}}_2 = 1$}\}.
\end{equation}
Using properties (\ref{item:3b}) and (\ref{item:3e}) and the fact that
$\mathcal{F}(M)$ is dense in $\mathcal{L}(M)$ we get the result.
\end{proof}

The above theorem states that the map $f \mapsto \hat{f}$ is a faithful
\nbr{*}representation of the algebra $\mathcal{F}(M)$ on the Hilbert space
$\mathcal{H}$. From property (\ref{item:3e}) this representation can be extended
to the algebra $\mathcal{L}(M)$ resulting in a Hilbert algebra isomorphism
of $\mathcal{L}(M)$ onto the space of Hilbert-Schmidt operators
$\mathcal{B}_2(\mathcal{H})$. Moreover, from property (\ref{item:3f}) we can
further extend this representation to a representation of the \nbr{C^*}algebra
$\mathcal{A}(M)$, which will give us a \nbr{C^*}algebra isomorphism of
$\mathcal{A}(M)$ onto the \nbr{C^*}algebra of compact operators
$\mathcal{K}(\mathcal{H})$.

\subsection{Wigner functions}
\label{subsec:4.2}
In what follows we will introduce generalized Wigner functions which can be used
to characterize states of the quantum system. For $\varphi,\psi \in \mathcal{H}$
we define a Wigner function $\mathcal{W}(\varphi,\psi) \in \mathcal{L}(M)$ as a
phase-space function corresponding to the operator
$\hat{f} = \psi(\varphi,\sdot)$. The integral kernel of this operator is equal
$\mathsf{f}(a,b) = \psi(a)\overline{\varphi(b)}$, thus by \eqref{eq:44} the
Wigner function $\mathcal{W}(\varphi,\psi)$ will be expressed by the formula
\begin{equation}
\mathcal{W}(\varphi,\psi)(q,p) = \int_{\mathcal{O}}
    \overline{\varphi\bigl(q\exp(\tfrac{1}{2}X)\bigr)}
    \psi\bigl(q\exp(-\tfrac{1}{2}X)\bigr) e^{-\frac{i}{\hbar}\braket{p,X}}
    F(X) \ud{X}.
\label{eq:45}
\end{equation}
The following properties of the functions $\mathcal{W}(\varphi,\psi)$ are an
immediate consequence of Theorem~\ref{thm:3} and formula \eqref{eq:45}.

\begin{theorem}
\label{thm:4}
For $\varphi,\psi,\phi,\chi \in \mathcal{H}$ and
$f \in \mathcal{F}(M)$
\begin{enumerate}[(i)]
\item\label{item:4a} $\overline{\mathcal{W}(\varphi,\psi)} =
\mathcal{W}(\psi,\varphi)$,
\item\label{item:4b} $\int_M \mathcal{W}(\varphi,\psi) \ud{l} =
(\varphi,\psi)$,
\item\label{item:4c} $\bigl(\mathcal{W}(\varphi,\psi),\mathcal{W}(\phi,\chi)
\bigr) = \overline{(\varphi,\phi)}(\psi,\chi)$,
\item\label{item:4d} $\mathcal{W}(\varphi,\psi) \star \mathcal{W}(\phi,\chi)
= (\varphi,\chi) \mathcal{W}(\phi,\psi)$,
\item\label{item:4e} $f \star \mathcal{W}(\varphi,\psi) =
\mathcal{W}(\varphi,\hat{f}\psi)$ and $\mathcal{W}(\varphi,\psi) \star f =
\mathcal{W}(\hat{f}^\dagger\varphi,\psi)$,
\item\label{item:4f} $\frac{1}{\abs{2\pi\hbar}^n} \int_{\mathfrak{g}^*}
\mathcal{W}(\varphi,\varphi)(q,p) \ud{p} = \abs{\varphi(q)}^2$.
\end{enumerate}
\end{theorem}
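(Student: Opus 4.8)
The plan is to translate every statement except the last into the operator picture furnished by Theorem~\ref{thm:3} together with its extension to the Hilbert algebra isomorphism $\mathcal{L}(M) \cong \mathcal{B}_2(\mathcal{H})$. By construction $\mathcal{W}(\varphi,\psi)$ corresponds to the rank-one operator $\hat{f} = \psi(\varphi,\sdot)$, whose integral kernel $\mathsf{f}(a,b) = \psi(a)\overline{\varphi(b)}$ lies in $L^2(G \times G)$; hence $\mathcal{W}(\varphi,\psi) \in \mathcal{L}(M)$ and the properties of Theorem~\ref{thm:3}, read in the extended representation, all apply. Each of (\ref{item:4a})--(\ref{item:4e}) then reduces to a one-line identity among outer products $\psi(\varphi,\sdot)$, so the only work is to compose such operators and read the result off as a Wigner function.

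Concretely, first I would record the adjoint and composition rules. The adjoint of $\psi(\varphi,\sdot)$ is $\varphi(\psi,\sdot)$, so (\ref{item:4a}) follows from $\hat{\bar{f}} = \hat{f}^\dagger$ (Theorem~\ref{thm:3}(\ref{item:3c})). The trace of $\psi(\varphi,\sdot)$ equals $(\varphi,\psi)$ by Parseval, which with $\Tr(\hat{f}) = \tr(f) = \int_M f \ud{l}$ (Theorem~\ref{thm:3}(\ref{item:3d})) gives (\ref{item:4b}). Applying $\psi(\varphi,\sdot) \circ \chi(\phi,\sdot)$ to a vector yields $(\varphi,\chi)\,\psi(\phi,\sdot)$, so $\widehat{f \star g} = \hat{f}\hat{g}$ (Theorem~\ref{thm:3}(\ref{item:3b})) is exactly (\ref{item:4d}). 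For the Hilbert-Schmidt scalar product (Theorem~\ref{thm:3}(\ref{item:3e})) I would compute $\hat{f}^\dagger\hat{g} = (\psi,\chi)\,\varphi(\phi,\sdot)$ and take its trace $(\phi,\varphi)$, producing $\overline{(\varphi,\phi)}(\psi,\chi)$, which is (\ref{item:4c}). Finally (\ref{item:4e}) uses $\hat{f}\circ\psi(\varphi,\sdot) = (\hat{f}\psi)(\varphi,\sdot)$ and $\psi(\varphi,\sdot)\circ\hat{f} = \psi(\hat{f}^\dagger\varphi,\sdot)$, the latter obtained by moving $\hat{f}$ across the inner product as its adjoint.

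Property (\ref{item:4f}) is the only one I would handle directly from the integral formula \eqref{eq:45}. The plan is to observe that integrating \eqref{eq:45} in $p$ and dividing by $\abs{2\pi\hbar}^n$ is precisely the value of the momentum Fourier transform $\mathcal{W}(\varphi,\varphi)^\sim$ at $X = 0$. Reading off \eqref{eq:45} at $X = 0$ gives $\overline{\varphi(q)}\varphi(q)F(0)$, and since $\lambda(0) = 1$ one has $F(0) = 1$, leaving $\abs{\varphi(q)}^2$.

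The main obstacle, such as it is, is purely bookkeeping: keeping the sesquilinearity convention straight (the inner product is antilinear in its first slot) so that the conjugations in (\ref{item:4a}) and (\ref{item:4c}) land on the correct arguments, and confirming that the rank-one operators involved are Hilbert-Schmidt, so that their products are trace class and the trace and scalar-product formulas of Theorem~\ref{thm:3} genuinely apply to $\mathcal{W}(\varphi,\psi) \in \mathcal{L}(M)$ rather than only to $\mathcal{F}(M)$. No nontrivial analytic estimate is needed.
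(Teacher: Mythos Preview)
Your proposal is correct and follows exactly the approach the paper indicates: the paper states that properties (\ref{item:4a})--(\ref{item:4f}) are ``an immediate consequence of Theorem~\ref{thm:3} and formula \eqref{eq:45}'' and gives no further detail, so your plan of reading (\ref{item:4a})--(\ref{item:4e}) off the rank-one operator identities via the Hilbert algebra isomorphism $\mathcal{L}(M)\cong\mathcal{B}_2(\mathcal{H})$, and handling (\ref{item:4f}) directly from \eqref{eq:45} at $X=0$, is precisely what is intended. Your closing remark about extending the trace and scalar-product formulas from $\mathcal{F}(M)$ to $\mathcal{L}(M)$ is the one point worth making explicit, and the paper does supply that extension immediately after Theorem~\ref{thm:3}.
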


We can define a tensor product of the Hilbert space $\mathcal{H}$ and its dual
$\mathcal{H}^*$ in terms of the Wigner transform $\mathcal{W}$:
\begin{equation}
\varphi^* \otimes \psi = \mathcal{W}(\varphi,\psi),
\end{equation}
where $\varphi \mapsto \varphi^*$ is an anti-linear isomorphism of $\mathcal{H}$
onto $\mathcal{H}^*$ appearing in the Riesz representation theorem. The map
$\otimes \colon \mathcal{H}^* \times \mathcal{H} \to \mathcal{L}(M)$ is
clearly bilinear and from property (\ref{item:4c}) from Theorem~\ref{thm:4}
it satisfies
\begin{equation}
(\varphi^* \otimes \psi,\phi^* \otimes \chi) = (\varphi^*,\phi^*)(\psi,\chi).
\end{equation}
Moreover, since the set of generalized Wigner functions
$\mathcal{W}(\varphi,\psi)$ is linearly dense in $\mathcal{L}(M)$ the map
$\otimes$ indeed defines a tensor product of $\mathcal{H}^*$ and
$\mathcal{H}$ equal to $\mathcal{L}(M)$.

For $f \in \mathcal{F}(M)$ we can treat $f \star {}$ as an operator
on the Hilbert space $\mathcal{L}(M)$. Then, by property (\ref{item:4e}) from
Theorem~\ref{thm:4} we get that
\begin{equation}
f \star {} = \hat{1} \otimes \hat{f}.
\end{equation}

Note, that the operator representation of the algebra $\mathcal{L}(M)$ gives a
one to one correspondence between states $\rho \in \mathcal{L}(M)$ and density
operators $\hat{\rho}$, i.e.\ trace class operators satisfying
\begin{enumerate}[(i)]
\item $\hat{\rho}^\dagger = \hat{\rho}$,
\item $\Tr(\hat{\rho}) = 1$,
\item $(\varphi,\hat{\rho}\varphi) \geq 0$ for every $\varphi \in \mathcal{H}$.
\end{enumerate}
Indeed, every density operator $\hat{\rho}$ gives rise to a continuous linear
functional on the \nbr{C^*}algebra $\mathcal{K}(\mathcal{H})$ of compact
operators, which is positively defined and normalized to unity. Such functional
will be given by the formula $\hat{f} \mapsto \Tr(\hat{f}\hat{\rho})$, and every
continuous positive linear functional on $\mathcal{K}(\mathcal{H})$ normalized
to unity will be of such form for some density operator $\hat{\rho}$. From this
observation we also immediately get Theorems~\ref{thm:5} and \ref{thm:6}
characterizing states in terms of quasi-probabilistic distribution functions.

From the correspondence between states and density operators we can see that
a function $\rho \in \mathcal{L}(M)$ is a state if and only if it is in the form
\begin{equation}
\rho = \sum_k p_k \mathcal{W}(\varphi_k,\varphi_k),
\end{equation}
where $\varphi_k \in \mathcal{H}$, $\norm{\varphi_k} = 1$, $p_k \geq 0$, and
$\sum_k p_k = 1$. In particular, $\rho \in \mathcal{L}(M)$ is a pure state if
and only if
\begin{equation}
\rho = \mathcal{W}(\varphi,\varphi)
\end{equation}
for some normalized vector $\varphi \in \mathcal{H}$.

The operator representation can be extended to the space of distributions
$\mathcal{F}'(M)$ and to the algebra $\mathcal{F}_\star(M)$ in the following
manner. For $\varphi,\psi \in C^\infty_c(G)$ the Wigner function
$\mathcal{W}(\varphi,\psi) \in \mathcal{F}(M)$. Let $f \in \mathcal{F}'(M)$,
then we can define the following bilinear form
\begin{equation}
(\varphi,\psi) \mapsto \braket{f,\mathcal{W}(\varphi,\psi)}, \quad
\varphi,\psi \in C^\infty_c(G).
\label{eq:63}
\end{equation}
If this form happens to be continuous with respect to the first variable, then
it uniquely defines (possibly unbounded) operator $\hat{f}$ with a dense domain
$C^\infty_c(G)$ by the formula
\begin{equation}
(\varphi,\hat{f}\psi) = \frac{1}{\abs{2\pi\hbar}^n}
    \braket{f,\mathcal{W}(\varphi,\psi)}, \quad
\varphi,\psi \in C^\infty_c(G).
\end{equation}
In the particular case, when $f \in \mathcal{F}(M)$, this definition of the
operator $\hat{f}$ coincides with the definition \eqref{eq:43}.

In what follows we will show that for every $f \in \mathcal{F}_\star(M)$ the
bilinear form \eqref{eq:63} is continuous with respect to the first variable.
From this will immediately follow that the whole algebra $\mathcal{F}_\star(M)$
can be represented as an algebra of (possibly unbounded) operators on
$\mathcal{H}$. From properties (\ref{item:4c}) and (\ref{item:4d}) from
Theorem~\ref{thm:4} we get that
\begin{equation}
\mathcal{W}(\varphi,\psi) = \frac{1}{\norm{\psi}^2}
    \mathcal{W}(\psi,\psi) \star \mathcal{W}(\varphi,\psi), \quad
\norm{\mathcal{W}(\varphi,\psi)}_2 = \norm{\varphi} \norm{\psi}.
\end{equation}
Using these equalities and Schwartz inequality we calculate that for
$f \in \mathcal{F}_\star(M)$ and $\varphi,\psi \in C^\infty_c(G)$
\begin{align}
\frac{1}{\abs{2\pi\hbar}^n} \abs{\braket{f,\mathcal{W}(\varphi,\psi)}} & =
    \frac{1}{\abs{2\pi\hbar}^n} \frac{1}{\norm{\psi}^2}
    \abs{\braket{f,\mathcal{W}(\psi,\psi) \star \mathcal{W}(\varphi,\psi)}}
= \frac{1}{\abs{2\pi\hbar}^n} \frac{1}{\norm{\psi}^2}
    \abs{\braket{f \star \mathcal{W}(\psi,\psi),\mathcal{W}(\varphi,\psi)}}
    \nonumber \\
& = \frac{1}{\norm{\psi}^2} \Abs{\int_M (f \star \mathcal{W}(\psi,\psi))(x)
    \mathcal{W}(\varphi,\psi)(x) \ud{l(x)}}
= \frac{1}{\norm{\psi}^2} \Abs{\bigl(\overline{f \star \mathcal{W}(\psi,\psi)},
    \mathcal{W}(\varphi,\psi)\bigr)} \nonumber \\
& \leq \frac{1}{\norm{\psi}^2} \norm{f \star \mathcal{W}(\psi,\psi)}_2
    \norm{\mathcal{W}(\varphi,\psi)}_2
= \frac{\norm{f \star \mathcal{W}(\psi,\psi)}_2}{\norm{\psi}} \norm{\varphi},
\end{align}
where we have used the fact that since $f \in \mathcal{F}_\star(M)$ then
$f \star \mathcal{W}(\psi,\psi) \in \mathcal{F}(M)$. This proves continuity of
the bilinear form \eqref{eq:63} with respect to the variable $\varphi$.
Note, that the operator $\hat{f}$ corresponding to $f \in \mathcal{F}_\star(M)$
takes values in $C^\infty_c(G)$ and the property (\ref{item:4e}) from
Theorem~\ref{thm:4} still holds for $f \in \mathcal{F}_\star(M)$ and
$\varphi,\psi \in C^\infty_c(G)$.

\subsection{Examples of operators}
\label{subsec:4.3}
In what follows we will derive formulas for operators corresponding to couple
particular functions on phase space.

\begin{theorem}
Let $f \in C^\infty(G)$, then $\hat{f}$ is an operator of multiplication by
the function $f$, i.e.
\begin{equation}
\hat{f}\psi(q) = f(q)\psi(q), \quad \psi \in C^\infty_c(G).
\end{equation}
\end{theorem}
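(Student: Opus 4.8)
The plan is to use the integral formula for $f \star g$ derived earlier for functions $f$ depending only on the base point $q$, namely \eqref{eq:40}, and specialize it to a Wigner-type action on a wave function $\psi$. Since the theorem states that $\hat{f}$ is multiplication by $f$ for $f \in C^\infty(G)$, and we already know from the previous theorem that such $f$ belongs to $\mathcal{F}_\star(M)$, the natural route is to exploit the already-established formula
\begin{equation}
(f \star g)(q,p) = \int_{\mathfrak{g}} f\bigl(q\exp(-\tfrac{1}{2}X)\bigr)
    \tilde{g}(q,X) e^{-\frac{i}{\hbar}\braket{p,X}} \ud{X}
\end{equation}
together with the correspondence between $\star$-multiplication and operator composition, property (\ref{item:3b}) of Theorem~\ref{thm:3}.

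First I would recall that the operator $\hat{f}$ for $f \in \mathcal{F}_\star(M)$ is defined through the bilinear form $\braket{f,\mathcal{W}(\varphi,\psi)}$ and satisfies property (\ref{item:4e}) from Theorem~\ref{thm:4}, which reads $f \star \mathcal{W}(\varphi,\psi) = \mathcal{W}(\varphi,\hat{f}\psi)$. The cleanest approach is to compute $f \star \mathcal{W}(\varphi,\psi)$ directly using formula \eqref{eq:40} and the explicit form \eqref{eq:45} of the Wigner function, and then read off $\hat{f}\psi$ by comparing with the integral formula for $\mathcal{W}(\varphi,\hat{f}\psi)$. Concretely, I would substitute $\tilde{g}(q,X)$ with the momentum Fourier transform of $\mathcal{W}(\varphi,\psi)$, whose form is visible from \eqref{eq:45} as $\tilde{g}(q,X) = \overline{\varphi(q\exp(\tfrac{1}{2}X))}\,\psi(q\exp(-\tfrac{1}{2}X))\,F(X)$ for $X \in \mathcal{O}$. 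Plugging this into \eqref{eq:40} and using the defining substitution $q \mapsto q\exp(-\tfrac{1}{2}X)$ in the base-point argument of $f$, the factor $f(q\exp(-\tfrac{1}{2}X))$ should combine with the $\psi(q\exp(-\tfrac{1}{2}X))$ factor to produce exactly $\mathcal{W}(\varphi, f\psi)$, where $f\psi$ denotes pointwise multiplication.

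The key step is thus to verify that
\begin{equation}
f \star \mathcal{W}(\varphi,\psi) = \mathcal{W}(\varphi, f \cdot \psi),
\end{equation}
which, compared against property (\ref{item:4e}), forces $\hat{f}\psi = f \cdot \psi$. The comparison hinges on matching the argument of $f$ with the argument of $\psi$ appearing inside the Wigner integral: in \eqref{eq:45} the function $\psi$ is evaluated at $q\exp(-\tfrac{1}{2}X)$, and in \eqref{eq:40} the function $f$ is likewise evaluated at $q\exp(-\tfrac{1}{2}X)$, so the two naturally coalesce into a product $(f\psi)(q\exp(-\tfrac{1}{2}X))$ under the integral. No change of variables or Jacobian manipulation is needed beyond recognizing this coincidence of arguments.

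\textbf{The main obstacle} I anticipate is purely bookkeeping rather than conceptual: one must be careful that the $\overline{\varphi(q\exp(\tfrac{1}{2}X))}$ and $F(X)$ factors in the momentum Fourier transform of $\mathcal{W}(\varphi,\psi)$ are correctly identified as the $\tilde{g}$ in formula \eqref{eq:40}, since \eqref{eq:40} was derived assuming $g \in \mathcal{F}(M)$ whereas here $g = \mathcal{W}(\varphi,\psi)$ with $\varphi,\psi \in C^\infty_c(G)$. One should confirm that the derivation of \eqref{eq:40} only used the structure of $\tilde{g}$ and not any finer property, so that it extends verbatim to Wigner functions. Once this is granted, the identity drops out immediately and the conclusion $\hat{f}\psi(q) = f(q)\psi(q)$ follows at once. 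Alternatively, and perhaps more transparently, one can bypass \eqref{eq:40} altogether: since $f \in C^\infty(G)$ and the integral kernel of $\hat{f}$ should be $f(a)\delta(a^{-1}b)$ in a distributional sense, the cleanest verification is to check the defining relation $(\varphi,\hat{f}\psi) = \frac{1}{\abs{2\pi\hbar}^n}\braket{f,\mathcal{W}(\varphi,\psi)}$ directly by integrating \eqref{eq:45} against $f(q)$ over $p$, using $\int_{\mathfrak{g}^*} e^{-\frac{i}{\hbar}\braket{p,X}}\ud{p} = \abs{2\pi\hbar}^n\delta(X)$ and $F(0)=1$, which collapses the $X$-integral to $X=0$ and yields $(\varphi, f\psi)$ outright.
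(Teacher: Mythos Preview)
Your alternative approach at the end --- checking the defining relation $(\varphi,\hat{f}\psi) = \frac{1}{\abs{2\pi\hbar}^n}\braket{f,\mathcal{W}(\varphi,\psi)}$ by integrating out $p$ to collapse the $X$-integral to $X=0$ --- is exactly the paper's proof, carried out in three lines: the $p$-integral produces $\widetilde{\mathcal{W}}(\varphi,\psi)(q,0) = \overline{\varphi(q)}\psi(q)$, and integration over $G$ against $f(q)$ gives $(\varphi,f\psi)$.

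Your primary approach via \eqref{eq:40} and the identity $f \star \mathcal{W}(\varphi,\psi) = \mathcal{W}(\varphi,f\psi)$ is also correct but longer. The obstacle you flag resolves immediately: the paper explicitly records that $\mathcal{W}(\varphi,\psi) \in \mathcal{F}(M)$ whenever $\varphi,\psi \in C^\infty_c(G)$, so formula \eqref{eq:40} applies verbatim with $g = \mathcal{W}(\varphi,\psi)$. Once that is granted, matching the arguments $q\exp(-\tfrac{1}{2}X)$ of $f$ and $\psi$ gives the claimed identity, and comparison with property (\ref{item:4e}) (which the paper extends to $\mathcal{F}_\star(M)$ before this section) yields $\hat{f}\psi = f\psi$. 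This route has the minor advantage of producing the stronger statement $f \star \mathcal{W}(\varphi,\psi) = \mathcal{W}(\varphi,f\psi)$ as a byproduct, but the paper's direct computation is shorter and avoids invoking the extension of (\ref{item:4e}) to $\mathcal{F}_\star(M)$.
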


\begin{proof}
For $\varphi,\psi \in C^\infty_c(G)$
\begin{align}
\frac{1}{\abs{2\pi\hbar}^n} \braket{f,\mathcal{W}(\varphi,\psi)} & =
    \frac{1}{\abs{2\pi\hbar}^n} \int_G \left( \int_{\mathfrak{g}^*} f(q)
    \mathcal{W}(\varphi,\psi)(q,p) \ud{p} \right) \ud{m(q)}
= \int_G f(q) \widetilde{\mathcal{W}}(\varphi,\psi)(q,0) \ud{m(q)} \nonumber \\
& = \int_G f(q) \overline{\varphi(q)} \psi(q) \ud{m(q)}
= (\varphi,f\psi).
\end{align}
\end{proof}

\begin{theorem}
Let $p_j$ be a fiber variable corresponding to a basis $\{X_i\}$ in
$\mathfrak{g}$, then
\begin{equation}
\hat{p}_j = i\hbar\left(L_{X_j} - \frac{1}{2}C^k_{jk}\right),
\label{eq:49}
\end{equation}
where $L_{X_j}$ is a left-invariant vector field corresponding to $X_j$.
In particular, for unimodular group $G$
\begin{equation}
\hat{p}_j = i\hbar L_{X_j}.
\end{equation}
\end{theorem}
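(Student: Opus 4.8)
The plan is to obtain $\hat p_j$ directly from its defining relation as the operator attached to the distribution $p_j$, which belongs to $\mathcal{F}_\star(M)$ by Theorem~\ref{thm:2}. Since the bilinear form \eqref{eq:63} was shown to be continuous in its first argument for every element of $\mathcal{F}_\star(M)$, the operator $\hat p_j$ is well defined on the dense domain $C^\infty_c(G)$ through
\begin{equation*}
(\varphi,\hat p_j \psi) = \frac{1}{\abs{2\pi\hbar}^n} \braket{p_j,\mathcal{W}(\varphi,\psi)}, \quad \varphi,\psi \in C^\infty_c(G).
\end{equation*}
Thus everything reduces to evaluating the pairing $\braket{p_j,\mathcal{W}(\varphi,\psi)}$ and reading off the operator acting on $\psi$.

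First I would reuse the computation already carried out in the proof of Theorem~\ref{thm:2}, where it was shown that for $h \in \mathcal{F}(M)$ with kernel $\mathsf{h} \in C^\infty_c(G \times G)$ the pairing equals $i\hbar\abs{2\pi\hbar}^n \int_G \frac{\partial}{\partial X_j}\mathsf{h}(q\exp(X),q) F(X)\det\Ad_{\exp(-\frac{1}{2}X)}\big|_{X=0}\ud{m(q)}$, and that by \eqref{eq:32} the integrand equals $L_{X_j}^{(1)}\mathsf{h}(q,q) - \frac{1}{2}C^k_{jk}\mathsf{h}(q,q)$. This formula is exactly the input I need; the only remaining ingredient is the integral kernel of the Wigner function, which by Section~\ref{subsec:4.2} is $\mathsf{h}(a,b) = \psi(a)\overline{\varphi(b)}$.

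Next I would substitute this kernel. Because $L_{X_j}^{(1)}$ differentiates only in the first slot, $L_{X_j}^{(1)}\mathsf{h}(a,b) = (L_{X_j}\psi)(a)\overline{\varphi(b)}$, so on the diagonal $\mathsf{h}(q,q) = \psi(q)\overline{\varphi(q)}$ and $L_{X_j}^{(1)}\mathsf{h}(q,q) = (L_{X_j}\psi)(q)\overline{\varphi(q)}$. The factor $\abs{2\pi\hbar}^n$ cancels against the one in the definition of $\hat p_j$, and I obtain
\begin{equation*}
(\varphi,\hat p_j\psi) = i\hbar \int_G \overline{\varphi(q)}\Bigl((L_{X_j}\psi)(q) - \tfrac{1}{2}C^k_{jk}\psi(q)\Bigr)\ud{m(q)} = \Bigl(\varphi,\, i\hbar\bigl(L_{X_j}-\tfrac{1}{2}C^k_{jk}\bigr)\psi\Bigr).
\end{equation*}
As this holds for all $\varphi \in C^\infty_c(G)$, the operator identity \eqref{eq:49} follows. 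The unimodular case is immediate, since for unimodular $G$ one has $\Tr(\ad_{X_j}) = C^k_{jk} = 0$ and the correction term vanishes, leaving $\hat p_j = i\hbar L_{X_j}$.

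I do not expect a genuine obstacle here: the statement is essentially a corollary of the kernel identity \eqref{eq:32} established inside the proof of Theorem~\ref{thm:2}. The only point demanding care is the bookkeeping—confirming that $L_{X_j}^{(1)}$ acts solely on the $\psi$-factor of the Wigner kernel, and that the resulting integral is read as the scalar product $(\varphi,\hat p_j\psi)$ with $\hat p_j$ an operator on $\psi$ rather than on $\varphi$, so that the sign of $i\hbar$ and the $-\tfrac{1}{2}C^k_{jk}$ term come out correctly.
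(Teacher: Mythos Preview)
Your proposal is correct and follows essentially the same route as the paper's own proof: both compute $(\varphi,\hat p_j\psi)=\abs{2\pi\hbar}^{-n}\braket{p_j,\mathcal{W}(\varphi,\psi)}$, convert the $p_j$-pairing into the $X_j$-derivative at $X=0$ of $\mathsf{h}(q\exp X,q)F(X)\det\Ad_{\exp(-\frac12 X)}$, and read off $L_{X_j}-\tfrac12 C^k_{jk}$ via \eqref{eq:27}, \eqref{eq:28} and $F'(0)=0$. The only difference is organizational---you quote the intermediate identity \eqref{eq:32} from the proof of Theorem~\ref{thm:2} and then specialize to the Wigner kernel $\mathsf{h}(a,b)=\psi(a)\overline{\varphi(b)}$, whereas the paper redoes that short computation directly with the Wigner function.
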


\begin{proof}
For $\varphi,\psi \in C^\infty_c(G)$
\begin{align}
\frac{1}{\abs{2\pi\hbar}^n} \braket{p_j,\mathcal{W}(\varphi,\psi)} & =
    \frac{1}{\abs{2\pi\hbar}^n} \int_G \left( \int_{\mathfrak{g}^*} p_j(q,p)
    \mathcal{W}(\varphi,\psi)(q,p) \ud{p} \right) \ud{m(q)} \nonumber \\
& = -i\hbar \int_G \frac{\partial}{\partial X_j}
    \widetilde{\mathcal{W}}(\varphi,\psi)(q,X) \bigg|_{X = 0} \ud{m(q)}
    \nonumber \\
& = i\hbar \frac{\partial}{\partial X_j} \int_G
    \overline{\varphi\bigl(q\exp(-\tfrac{1}{2}X)\bigr)}
    \psi\bigl(q\exp(\tfrac{1}{2}X)\bigr) F(X) \ud{m(q)} \bigg|_{X = 0}
    \nonumber \\
& = i\hbar \frac{\partial}{\partial X_j} \int_G \overline{\varphi(q)}
    \psi\bigl(q\exp(X)\bigr) F(X) \det\Ad_{\exp(-\frac{1}{2}X)} \ud{m(q)}
    \bigg|_{X = 0}.
\end{align}
Using \eqref{eq:27} and \eqref{eq:28}, and the fact that $F'(0) = 0$ we get
\begin{equation}
\frac{1}{\abs{2\pi\hbar}^n} \braket{p_j,\mathcal{W}(\varphi,\psi)} =
    \int_G \overline{\varphi(q)} i\hbar\left(L_{X_j}\psi(q)
    - \frac{1}{2}C^k_{jk}\psi(q)\right) \ud{m(q)},
\end{equation}
which proves \eqref{eq:49}.
\end{proof}

Note, that the operators $\hat{p}_j$ satisfy the following commutation relations
\begin{equation}
[\hat{p}_i,\hat{p}_j] = i\hbar C^k_{ij} \hat{p}_k.
\end{equation}
To a function $f(q,p) = f^i(q) p_i$ ($f^i \in C^\infty(G)$) linear in fiber
variables corresponds the following operator
\begin{equation}
\hat{f} = \frac{1}{2}f^i \hat{p}_i + \frac{1}{2} \hat{p}_i f^i,
\label{eq:50}
\end{equation}
and to a function $f(q,p) = f^{ij}(q) p_i p_j$ ($f^{ij} \in C^\infty(G)$ being
symmetric with respect to indices $i,j$) quadratic in fiber variables
corresponds the operator
\begin{equation}
\hat{f} = \frac{1}{4} f^{ij} \hat{p}_i \hat{p}_j
    + \frac{1}{2} \hat{p}_i f^{ij} \hat{p}_j
    + \frac{1}{4} \hat{p}_i \hat{p}_j f^{ij}
    - \frac{1}{24} \hbar^2 C^k_{il} C^l_{jk} f^{ij}.
\label{eq:51}
\end{equation}
Indeed, with the help of the expansion \eqref{eq:39} we can write
\begin{equation}
\begin{split}
f^i p_i & = \frac{1}{2}f^i \star p_i + \frac{1}{2} p_i \star f^i, \\
f^{ij} p_i p_j & = \frac{1}{4} f^{ij} \star p_i \star p_j
    + \frac{1}{2} p_i \star f^{ij} \star p_j
    + \frac{1}{4} p_i \star p_j \star f^{ij}
    - \frac{1}{24} \hbar^2 C^k_{il} C^l_{jk} f^{ij},
\end{split}
\end{equation}
from which follow \eqref{eq:50} and \eqref{eq:51}. Observe, that \eqref{eq:50}
is a symmetrically ordered function of operators $f^i$ and $\hat{p}_i$, and
\eqref{eq:51} is a symmetrically ordered function of operators $f^{ij}$ and
$\hat{p}_i$ plus an additional term $-\tfrac{1}{24} \hbar^2 C^k_{il} C^l_{jk}
f^{ij}$ which can be treated as a quantum correction to the potential.

\subsection{Quantizer}
\label{subsec:4.4}
The operator representation can be expressed in terms of a family of operators
$\{\Delta_x \mid x \in M\}$ on the Hilbert space $\mathcal{H}$ called a
quantizer. Let us define
\begin{equation}
\Delta_x \psi(a) = \frac{1}{\abs{\pi\hbar}^n} \sqrt{J_q(a)} e^{-\frac{2i}{\hbar}
    \braket{p,V_q(a)}} \psi(s_q(a)), \quad
x = (q,p),\ q \in G,\ p \in \mathfrak{g}^*,
\label{eq:47}
\end{equation}
where $s_q$ is the reflection about point $q$ defined in \eqref{eq:46} and $J_q$
was defined in Section~\ref{subsec:2.5}. The function $\Delta_x \psi(a)$ is well
defined by the formula \eqref{eq:47} for $a \in L_q(\mathcal{U})$ and, since
$G \setminus L_q(\mathcal{U})$ is of measure zero, it can be uniquely extended
to an element of $L^2(G,\dd{m})$. Therefore, the operator $\Delta_x$ is well
defined.

\begin{theorem}
The operators $\Delta_x$ are self-adjoint on domains
\begin{equation}
\mathcal{D}(\Delta_x) = \{\psi \in L^2(G,\dd{m}) \mid \int_G j_q \abs{\psi}^2
\ud{m} < \infty\},
\end{equation}
i.e.\ $\Delta_x^\dagger = \Delta_x$.
\end{theorem}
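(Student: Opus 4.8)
The plan is to treat $\Delta_x$ as a weighted composition operator built from the involution $s_q$, to reduce the self-adjointness to a single pointwise identity, and then to argue that the domain $\mathcal{D}(\Delta_x)$ displayed in the statement is precisely the maximal domain on which the formal adjoint coincides with $\Delta_x$. Writing $\alpha(a) = \frac{1}{\abs{\pi\hbar}^n}\sqrt{J_q(a)}\,e^{-\frac{2i}{\hbar}\braket{p,V_q(a)}}$, so that $\Delta_x\psi(a) = \alpha(a)\,\psi(s_q(a))$, everything will come down to the behaviour of $\alpha$ and of the measure under the reflection $s_q$.

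First I would collect the geometric identities that do the work, all of which are available from Section~\ref{subsec:2.5}: the sign flip $V_q(s_q(a)) = -V_q(a)$, the $s_q$-invariance of $j_q$, namely $j_q(s_q(a)) = j_q(a)$, and $\frac{\dd m(s_q)}{\dd m}\big|_a = \det\Ad_{q^{-1}a}$. Since weakly exponential groups are connected, $\det\Ad_{q^{-1}a} > 0$, so all square roots are real and positive, and $s_q^2 = \id$ gives $\frac{\dd m(s_q)}{\dd m}\big|_{s_q(a)} = \big(\frac{\dd m(s_q)}{\dd m}\big|_a\big)^{-1}$. Combining the phase flip, the invariance of $j_q$, and this inverse relation (recall $J_q = j_q\,\frac{\dd m(s_q)}{\dd m}$) produces the single crucial relation $\overline{\alpha(s_q(a))}\,\frac{\dd m(s_q)}{\dd m}\big|_a = \alpha(a)$, which encodes the formal self-adjointness of $\Delta_x$.

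Next I would pin down the domain and prove symmetry. A change of variables $a = s_q(b)$ in $\int_G\abs{\Delta_x\psi}^2\ud{m}$, using the Jacobian together with $j_q(s_q(b)) = j_q(b)$, yields $\norm{\Delta_x\psi}^2 = \abs{\pi\hbar}^{-2n}\int_G j_q\abs{\psi}^2\ud{m}$; hence $\Delta_x\psi\in L^2$ exactly when $\psi\in\mathcal{D}(\Delta_x)$, and $C^\infty_c(G)\subset\mathcal{D}(\Delta_x)$ is a dense core. The same substitution applied to $(\Delta_x\psi,\phi)$ for $\psi,\phi\in\mathcal{D}(\Delta_x)$, followed by the crucial relation, gives $(\Delta_x\psi,\phi) = (\psi,\Delta_x\phi)$, so $\Delta_x\subseteq\Delta_x^\dagger$.

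The one step that needs genuine care, and which I expect to be the main obstacle, is the reverse inclusion $\mathcal{D}(\Delta_x^\dagger)\subseteq\mathcal{D}(\Delta_x)$, since it is what upgrades symmetry to self-adjointness rather than mere symmetry on a non-maximal domain. I would take $\phi\in\mathcal{D}(\Delta_x^\dagger)$ with $\Delta_x^\dagger\phi = \eta\in L^2$, test the defining identity $(\Delta_x\psi,\phi) = (\psi,\eta)$ against $\psi\in C^\infty_c(G)$, and run the substitution $a = s_q(b)$ with the crucial relation to rewrite the left side as $\int_G\overline{\psi}\,\big(\alpha\cdot(\phi\circ s_q)\big)\ud{m}$. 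The factor $\alpha\cdot(\phi\circ s_q)$ is locally integrable (since $\phi\in L^2$ and $\alpha$ is locally bounded a.e.), so testing against all $\psi\in C^\infty_c(G)$ forces $\alpha\cdot(\phi\circ s_q) = \eta$ almost everywhere; as $\eta\in L^2$, the formal expression $\Delta_x\phi$ is square integrable, and the domain computation then gives $\int_G j_q\abs{\phi}^2 < \infty$, i.e. $\phi\in\mathcal{D}(\Delta_x)$ with $\Delta_x\phi = \eta$. This yields $\Delta_x^\dagger\subseteq\Delta_x$, and together with symmetry, $\Delta_x^\dagger = \Delta_x$. Throughout, the only subtlety to monitor is that $V_q$ and $j_q$ are defined only on $L_q(\mathcal{U})$, whose complement is of measure zero, so every change of variables and pointwise identity is a legitimate almost-everywhere statement in $L^2(G,\dd{m})$.
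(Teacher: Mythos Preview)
Your proposal is correct and follows essentially the same route as the paper: compute $\norm{\Delta_x\psi}^2$ to identify $\mathcal{D}(\Delta_x)$ as the maximal domain, establish the pointwise identity $\sqrt{J_q(a)} = \sqrt{J_q(s_q(a))}\,\frac{\dd m(s_q)}{\dd m}\big|_a$ (your ``crucial relation'' for $\alpha$) together with $V_q\circ s_q = -V_q$ to get symmetry, and then for $\phi\in\mathcal{D}(\Delta_x^\dagger)$ test the adjoint relation against a dense family to force $\Delta_x^\dagger\phi = \alpha\cdot(\phi\circ s_q)\in L^2$, whence $\phi\in\mathcal{D}(\Delta_x)$. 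The only cosmetic difference is that you test against $C^\infty_c(G)$ while the paper tests against all of $\mathcal{D}(\Delta_x)$; both suffice.
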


\begin{proof}
The domain $\mathcal{D}(\Delta_x)$ is a dense vector subspace of
$L^2(G,\dd{m})$. It is the biggest domain on which operator $\Delta_x$ is well
defined. Indeed, we calculate that for $\psi \in L^2(G,\dd{m})$
\begin{align}
\norm{\Delta_x\psi}^2 & = \int_G \abs{\Delta_x\psi(a)}^2 \ud{m(a)}
= \frac{1}{\abs{\pi\hbar}^{2n}} \int_G \abs{\psi(s_q(a))}^2 j_q(a)
    \frac{\dd{m(s_q)}}{\dd{m}}\bigg|_a \ud{m(a)} \nonumber \\
& = \frac{1}{\abs{\pi\hbar}^{2n}} \int_G \abs{\psi(a)}^2 j_q(a) \ud{m(a)}.
\end{align}
Thus $\Delta_x\psi \in L^2(G,\dd{m})$ if and only if
$\psi \in \mathcal{D}(\Delta_x)$.

Now we will prove that $\Delta_x$ are self-adjoint on $\mathcal{D}(\Delta_x)$.
Let $\varphi,\psi \in \mathcal{D}(\Delta_x)$. We get
\begin{equation}
(\varphi,\Delta_x\psi) = \int_G \overline{\varphi(a)} \Delta_x\psi(a) \ud{m(a)}
= \frac{1}{\abs{\pi\hbar}^n} \int_G \overline{\varphi(a)} \psi(s_q(a))
    e^{-\frac{2i}{\hbar}\braket{p,V_q(a)}} \sqrt{J_q(a)} \ud{m(a)}.
\end{equation}
Since
\begin{equation}
\frac{\dd{m(s_q)}}{\dd{m}}\bigg|_a \frac{\dd{m(s_q)}}{\dd{m}}\bigg|_{s_q(a)} = 1
\end{equation}
we get that
\begin{equation}
\sqrt{J_q(a)} = \sqrt{J_q(s_q(a))} \frac{\dd{m(s_q)}}{\dd{m}}\bigg|_a.
\end{equation}
From the above identity and the fact that $V_q(a) = -V_q(s_q(a))$ we receive
\begin{align}
(\varphi,\Delta_x\psi) & = \frac{1}{\abs{\pi\hbar}^n} \int_G
    \overline{\varphi(a)}\psi(s_q(a)) e^{\frac{2i}{\hbar}\braket{p,V_q(s_q(a))}}
    \sqrt{J_q(s_q(a))} \frac{\dd{m(s_q)}}{\dd{m}}\bigg|_a \ud{m(a)} \nonumber \\
& = \frac{1}{\abs{\pi\hbar}^n} \int_G \overline{\varphi(s_q(a))} \psi(a)
    e^{\frac{2i}{\hbar}\braket{p,V_q(a)}} \sqrt{J_q(a)} \ud{m(a)}
= (\Delta_x\varphi,\psi).
\end{align}
Hence operators $\Delta_x$ are symmetric. To prove that they are self-adjoint
we have to show that $\mathcal{D}(\Delta_x) = \mathcal{D}(\Delta_x^\dagger)$.
Clearly $\mathcal{D}(\Delta_x) \subset \mathcal{D}(\Delta_x^\dagger)$ since
$\Delta_x$ are symmetric, so let $\psi \in \mathcal{D}(\Delta_x^\dagger)$.
For every $\varphi \in \mathcal{D}(\Delta_x)$ from the definition of
adjointness: $(\Delta_x\varphi,\psi) = (\varphi,\Delta_x^\dagger \psi)$ and
previous calculations we get
\begin{equation}
\int_G \overline{\varphi(a)} \left( \frac{1}{\abs{\pi\hbar}^n} \psi(s_q(a))
    e^{-\frac{2i}{\hbar}\braket{p,V_q(a)}} \sqrt{J_q(a)}
    - \Delta_x^\dagger \psi(a) \right) \ud{m(a)} = 0.
\end{equation}
The above equality holds for every $\varphi \in \mathcal{D}(\Delta_x)$, hence
\begin{equation}
\Delta_x^\dagger \psi(a) = \frac{1}{\abs{\pi\hbar}^n} \psi(s_q(a))
    e^{-\frac{2i}{\hbar}\braket{p,V_q(a)}} \sqrt{J_q(a)}.
\end{equation}
Since $\Delta_x^\dagger \psi \in L^2(G,\dd{m})$ also the right hand side of the
above equality has to be square integrable. Thus
$\psi \in \mathcal{D}(\Delta_x)$.
\end{proof}

\begin{theorem}
Let $f \in \mathcal{F}(M)$ and $\hat{f}$ be the corresponding operator on
$\mathcal{H}$. Then
\begin{equation}
\hat{f} = \int_M f(x) \Delta_x \ud{x}, \quad
f(x) = \abs{2\pi\hbar}^n \Tr(\Delta_x \hat{f}).
\label{eq:53}
\end{equation}
\end{theorem}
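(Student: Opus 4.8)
The plan is to prove the two halves of \eqref{eq:53} separately, in each case reducing a Lie-group change of variables to a scalar identity among the functions $\phi$, $\psi$ and $\lambda$ from Section~\ref{subsec:2.5}. For the first formula I would apply $\int_M f(x)\Delta_x\ud{x}$ to a test function $\psi$ and substitute the integral representation \eqref{eq:44} of $f$ in terms of its kernel $\mathsf f$, then integrate over $p$ first. The factor $e^{-\frac{2i}{\hbar}\braket{p,V_q(a)}}$ from \eqref{eq:47} and the factor $e^{-\frac{i}{\hbar}\braket{p,X}}$ from \eqref{eq:44} combine, and the $p$-integral yields $\abs{2\pi\hbar}^n\delta(X+2V_q(a))$ with the Fourier normalization of Section~\ref{subsec:2.3}, collapsing the $X$-integral to $X=-2V_q(a)$. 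Since $a=q\exp(V_q(a))$ and $s_q(a)=q\exp(-V_q(a))$, the two arguments of $\mathsf f$ become $a$ and $s_q(a)$, and $F(-2V_q(a))=F(2V_q(a))$ by symmetry of $F$; one is left with $2^n\int_G\sqrt{J_q(a)}\,F(2V_q(a))\,\mathsf f(a,s_q(a))\psi(s_q(a))\ud{m(q)}$. Comparing with $\hat f\psi(a)=\int_G\mathsf f(a,b)\psi(b)\ud{m(b)}$ from \eqref{eq:43}, the claim reduces to the Jacobian statement that, under $b=s_q(a)$ for fixed $a$, $2^n\sqrt{J_q(a)}\,F(2V_q(a))\ud{m(q)}=\ud{m(b)}$.

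To establish this I would parametrize $q=a\exp(Y)$, so that $V_q(a)=-Y$ and $b=s_q(a)=a\exp(2Y)$. By \eqref{eq:58}, $\ud{m(q)}=\abs{\det\phi(\ad_Y)}\ud{Y}$ and $\ud{m(b)}=2^n\abs{\det\phi(2\ad_Y)}\ud{Y}$, while $J_q(a)=\abs{\det\psi(\ad_Y)}\,e^{-\Tr(\ad_Y)}\chi_{\mathcal O}(2Y)$ (using $q^{-1}a=\exp(-Y)$, that $\psi$ is even, and that $\mathcal O$ is symmetric) and $F(2Y)=\sqrt{\abs{\det\lambda(2\ad_Y)}}$. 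Writing determinants as products over the eigenvalues $x$ of $\ad_Y$, the required equality reduces to the scalar relation $\sqrt{\psi(x)\lambda(2x)}\,e^{-x/2}\phi(x)=\phi(2x)$, which follows from $\psi(x)\lambda(2x)=\tfrac14(e^{x/2}+e^{-x/2})^2$ together with $\phi(2x)=\tfrac12(1+e^{-x})\phi(x)$. The factor $\chi_{\mathcal O}(2Y)$ only restricts $b$ to the full-measure set $a\mathcal U$, hence may be dropped from the integral.

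For the dequantization formula I would compute the integral kernel of the composite $\Delta_x\hat f$, namely $K(a,b)=\abs{\pi\hbar}^{-n}\sqrt{J_q(a)}\,e^{-\frac{2i}{\hbar}\braket{p,V_q(a)}}\mathsf f(s_q(a),b)$, which is continuous and compactly supported; thus $\Delta_x\hat f$ is trace class and $\Tr(\Delta_x\hat f)=\int_G K(a,a)\ud{m(a)}$. Substituting $a=q\exp(X)$ (so $V_q(a)=X$, $s_q(a)=q\exp(-X)$, $q^{-1}a=\exp(X)$) and then $X\mapsto X'=2X$ turns the diagonal integral into $\abs{2\pi\hbar}^{-n}\int_{\mathcal O}\mathsf f(q\exp(-\tfrac12 X'),q\exp(\tfrac12 X'))\,e^{-\frac{i}{\hbar}\braket{p,X'}}\sqrt{\abs{\det\psi(\ad_{X'/2})}}\,e^{\frac12\Tr(\ad_{X'/2})}\abs{\det\phi(\ad_{X'/2})}\ud{X'}$. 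Matching against \eqref{eq:44} reduces the claim to $\sqrt{\abs{\det\psi(\ad_Y)}}\,e^{\frac12\Tr(\ad_Y)}\abs{\det\phi(\ad_Y)}=F(2Y)$, i.e.\ the scalar identity $\psi(x)\,e^{x}\phi(x)^2=\lambda(2x)$, which is checked exactly as above.

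I expect the difficulty to be bookkeeping rather than conceptual. The delicate points are getting the powers of $2$ and the $e^{\pm\frac12\Tr(\ad)}$ factors right (they enter with opposite signs in the two halves, which is a useful consistency check), legitimizing the order of integration and the improper $p$-integral over $\mathfrak g^*$, handling the measure-zero exceptional sets so that the substitutions $b=s_q(a)$ and $a=q\exp(X)$ are valid almost everywhere, and confirming that the eigenvalue identities for $\phi$, $\psi$ and $\lambda$ assemble into the stated determinant identities, for which one uses that the non-real eigenvalues of $\ad_Y$ occur in conjugate pairs so that the absolute values factor through the product.
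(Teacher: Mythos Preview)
Your proposal is correct and follows essentially the same route as the paper: integrate out $p$ first, then perform the change of variables $q\leftrightarrow a\exp(\tfrac12 X)$ and reduce the Jacobian bookkeeping to a scalar identity among $\phi$, $\psi$, $\lambda$ (your identities $\sqrt{\psi(x)\lambda(2x)}\,e^{-x/2}\phi(x)=\phi(2x)$ and $\psi(x)e^{x}\phi(x)^2=\lambda(2x)$ are equivalent reformulations of the paper's $\psi(x)\phi(x)^2=\phi(2x)$). The only cosmetic difference is that you carry the kernel $\mathsf f$ throughout while the paper works with $\tilde f$, and one small inaccuracy to fix is that $K(a,b)$ is not literally continuous because of the $\chi_{\mathcal O}$ in $J_q$, though this does not affect the diagonal computation (the paper is equally informal on this point).
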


\begin{proof}
For $\psi \in C^\infty_c(G)$ and $a \in G$
\begin{align}
\int_M f(x) \Delta_x \psi(a) \ud{x}
& = \frac{1}{\abs{\pi\hbar}^n} \int_M f(q,p) \sqrt{J_q(a)}
    e^{-\frac{2i}{\hbar}\braket{p,V_q(a)}} \psi(s_q(a)) \ud{x} \nonumber \\
& = \frac{1}{\abs{\pi\hbar}^n} \int_G \left(\int_{\mathfrak{g}^*} f(q,p)
    e^{-\frac{2i}{\hbar}\braket{p,V_q(a)}} \ud{p} \right) \psi(s_q(a))
    \sqrt{J_q(a)} \ud{m(q)} \nonumber \\
& = 2^n \int_{L_a(\mathcal{U})} \tilde{f}(q,-2V_q(a)) \psi(s_q(a)) \sqrt{J_q(a)}
    \ud{m(q)}.
\end{align}
Performing the following change of variables under the integral sign
$X \mapsto q = a\exp(\tfrac{1}{2}X)$ we get
\begin{equation}
\int_M f(x) \Delta_x \psi(a) \ud{x} = \int_{2\mathcal{O}}
    \tilde{f}\bigl(a\exp(\tfrac{1}{2}X),X\bigr) \psi\bigl(a\exp(X)\bigr)
    \sqrt{J_{a\exp(\frac{1}{2}X)}(a)}
    \frac{\dd m(\exp)}{\dd X}\bigg|_{\frac{1}{2}X} \ud{X}.
\end{equation}
We calculate that
\begin{align}
\sqrt{J_{a\exp(\frac{1}{2}X)}(a)} \frac{\dd m(\exp)}{\dd X}\bigg|_{\frac{1}{2}X}
& = \sqrt{\abs{\det \psi(\ad_{\frac{1}{2}X})
    \det \Ad_{\exp(-\frac{1}{2}X)}}}
    \abs{\det \phi(\ad_{\frac{1}{2}X})} \chi_{\mathcal{O}}(X) \nonumber \\
& = \sqrt{\abs{\det \phi(\ad_X)
    \det \Ad_{\exp(\frac{1}{2}X)}}} \sqrt{\det \Ad_{\exp(X)}}
    \chi_{\mathcal{O}}(X) \nonumber \\
& = F(X) \det \Ad_{\exp(-\frac{1}{2}X)} \chi_{\mathcal{O}}(X)
\end{align}
since $\psi(x)\phi^2(x) = \phi(2x)$. Thus, because of the fact that
$\tilde{f}$ has support in $G \times \overline{\mathcal{O}}$, we get
\begin{equation}
\int_M f(x) \Delta_x \psi(a) \ud{x} = \int_{\mathfrak{g}}
    \tilde{f}\bigl(a\exp(\tfrac{1}{2}X),X\bigr) \psi\bigl(a\exp(X)\bigr)
    F(X) \det \Ad_{\exp(-\frac{1}{2}X)} \ud{X} = \hat{f}\psi(a)
\end{equation}
which proves the first equality in \eqref{eq:53}.

Let $\mathsf{f} \in C^\infty_c(G \times G)$ be an integral kernel of $\hat{f}$.
Then an integral kernel of the operator $\Delta_x \hat{f}$ is equal
\begin{equation}
(a,b) \mapsto \frac{1}{\abs{\pi\hbar}^n} \sqrt{J_q(a)} e^{-\frac{2i}{\hbar}
    \braket{p,V_q(a)}} \mathsf{f}(s_q(a),b).
\end{equation}
Therefore,
\begin{equation}
\abs{2\pi\hbar}^n \Tr(\Delta_x \hat{f}) = 2^n \int_G \sqrt{J_q(a)}
    e^{-\frac{2i}{\hbar}\braket{p,V_q(a)}} \mathsf{f}(s_q(a),a) \ud{m(a)}.
\end{equation}
Performing the following change of variables under the integral sign
$X \mapsto a = q\exp(\tfrac{1}{2}X)$ we get
\begin{align}
\abs{2\pi\hbar}^n \Tr(\Delta_x \hat{f}) & = \int_{2\mathcal{O}}
    \mathsf{f}\bigl(q\exp(-\tfrac{1}{2}X),q\exp(\tfrac{1}{2}X)\bigr)
    e^{-\frac{i}{\hbar}\braket{p,X}} \sqrt{J_q\bigl(q\exp(\tfrac{1}{2}X)\bigr)}
    \frac{\dd m(\exp)}{\dd X}\bigg|_{\frac{1}{2}X} \ud{X} \nonumber \\
& = \int_{\mathcal{O}}
    \mathsf{f}\bigl(q\exp(-\tfrac{1}{2}X),q\exp(\tfrac{1}{2}X)\bigr)
    e^{-\frac{i}{\hbar}\braket{p,X}} F(X) \ud{X} = f(x),
\end{align}
since
\begin{align}
\sqrt{J_q\bigl(q\exp(\tfrac{1}{2}X)\bigr)}
    \frac{\dd m(\exp)}{\dd X}\bigg|_{\frac{1}{2}X} & =
    \sqrt{\abs{\det \psi(\ad_{\frac{1}{2}X})
    \det \Ad_{\exp(\frac{1}{2}X)}}}
    \abs{\det \phi(\ad_{\frac{1}{2}X})} \chi_{\mathcal{O}}(X) \nonumber \\
& = F(X) \chi_{\mathcal{O}}(X).
\end{align}
This proves the second equality in \eqref{eq:53}.
\end{proof}

\begin{corollary}
For $\varphi \in L^2(G,\dd{m})$ and $\psi \in \mathcal{D}(\Delta_x)$ there holds
\begin{equation}
(\varphi,\Delta_x \psi) = \frac{1}{\abs{2\pi\hbar}^n}
    \mathcal{W}(\varphi,\psi)(x).
\label{eq:52}
\end{equation}
\end{corollary}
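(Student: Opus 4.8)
The plan is to verify \eqref{eq:52} by a direct computation, expanding the left-hand side from the definition \eqref{eq:47} of $\Delta_x$ and transforming it into the integral formula \eqref{eq:45} for $\mathcal{W}(\varphi,\psi)$. Since $\psi \in \mathcal{D}(\Delta_x)$ the vector $\Delta_x\psi$ lies in $L^2(G,\dd{m})$, so the scalar product $(\varphi,\Delta_x\psi)$ is well defined for every $\varphi \in L^2(G,\dd{m})$, and I may write it as an integral over $G$:
\[
(\varphi,\Delta_x\psi) = \frac{1}{\abs{\pi\hbar}^n} \int_G \overline{\varphi(a)} \sqrt{J_q(a)}\, e^{-\frac{2i}{\hbar}\braket{p,V_q(a)}} \psi(s_q(a)) \ud{m(a)}.
\]

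First I would perform the same change of variables $a = q\exp(\tfrac{1}{2}X)$ (equivalently $X = 2V_q(a)$) that was used in the proof of the preceding quantizer theorem. Under this substitution $V_q(a) = \tfrac{1}{2}X$, so that $e^{-\frac{2i}{\hbar}\braket{p,V_q(a)}} = e^{-\frac{i}{\hbar}\braket{p,X}}$, while $s_q(a) = q\exp(-\tfrac{1}{2}X)$ and $a = q\exp(\tfrac{1}{2}X)$. Moreover $a = L_q(\exp(\tfrac{1}{2}X))$, and since $L_q$ preserves the left Haar measure and the scaling $X \mapsto \tfrac{1}{2}X$ contributes a factor $2^{-n}$, one has $2^n\ud{m(a)} = \frac{\dd m(\exp)}{\dd X}\big|_{\frac{1}{2}X}\ud{X}$ by \eqref{eq:58}. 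Because $G \setminus L_q(\mathcal{U})$ has measure zero, the integral over $G$ becomes an integral over $X \in 2\mathcal{O}$.

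The decisive simplification is that the whole bundle of Jacobian factors collapses to $F$. Indeed, exactly as already established in the proof of the quantizer theorem,
\[
\sqrt{J_q\bigl(q\exp(\tfrac{1}{2}X)\bigr)}\, \frac{\dd m(\exp)}{\dd X}\bigg|_{\frac{1}{2}X} = F(X)\chi_{\mathcal{O}}(X),
\]
which rests on the determinant identity coming from $\psi(x)\phi(x)^2 = \phi(2x)$. Combining this with the numerical prefactor, $\frac{1}{\abs{\pi\hbar}^n}\cdot\frac{1}{2^n} = \frac{1}{\abs{2\pi\hbar}^n}$, and using $\chi_{\mathcal{O}}$ to restrict the domain from $2\mathcal{O}$ down to $\mathcal{O}$, the integral becomes
\[
(\varphi,\Delta_x\psi) = \frac{1}{\abs{2\pi\hbar}^n} \int_{\mathcal{O}} \overline{\varphi\bigl(q\exp(\tfrac{1}{2}X)\bigr)} \psi\bigl(q\exp(-\tfrac{1}{2}X)\bigr) e^{-\frac{i}{\hbar}\braket{p,X}} F(X) \ud{X},
\]
which is precisely $\abs{2\pi\hbar}^{-n}\mathcal{W}(\varphi,\psi)(x)$ by \eqref{eq:45}, establishing \eqref{eq:52}.

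The computation is essentially routine bookkeeping, and I do not expect any genuine obstacle: the change of variables, the measure Jacobian, and the collapse of the determinant factors to $F$ have all already appeared in the proof of the quantizer theorem, so the only care required is tracking the powers of $2^n$ and $\abs{\pi\hbar}^n$ and handling the characteristic function $\chi_{\mathcal{O}}$ correctly. As an independent cross-check one could instead invoke the second formula in \eqref{eq:53}: for $\varphi,\psi \in C^\infty_c(G)$ the Wigner function $\mathcal{W}(\varphi,\psi)$ corresponds to the rank-one operator $\chi \mapsto (\varphi,\chi)\psi$, whence $\mathcal{W}(\varphi,\psi)(x) = \abs{2\pi\hbar}^n\Tr(\Delta_x\hat{f}) = \abs{2\pi\hbar}^n(\varphi,\Delta_x\psi)$. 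The direct computation above is preferable, however, because it covers the full generality $\varphi \in L^2(G,\dd{m})$, $\psi \in \mathcal{D}(\Delta_x)$ asserted in the corollary.
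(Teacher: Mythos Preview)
Your proposal is correct and mirrors exactly what the paper intends: the corollary is stated without proof because it is the very same change of variables $a = q\exp(\tfrac{1}{2}X)$ and the same identity $\sqrt{J_q(q\exp(\tfrac{1}{2}X))}\,\frac{\dd m(\exp)}{\dd X}\big|_{\frac{1}{2}X} = F(X)\chi_{\mathcal{O}}(X)$ already worked out in the proof of the quantizer theorem, now applied to $(\varphi,\Delta_x\psi)$. Your bookkeeping of the $2^{-n}$ and $\abs{\pi\hbar}^{-n}$ factors is right, and your remark that the direct computation covers the stated generality $\varphi \in L^2(G,\dd{m})$, $\psi \in \mathcal{D}(\Delta_x)$ (whereas the trace-formula cross-check a priori only gives the smooth compactly supported case) is a nice observation.
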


From \eqref{eq:52} it follows that the operator corresponding to the Dirac
delta distribution $\delta_x \in \mathcal{F}'(M)$ is equal to $\Delta_x$:
\begin{equation}
\hat{\delta}_x = \Delta_x.
\end{equation}

\section{Phase space reduction}
\label{sec:5}
The Lie group $G$ acts naturally on $M$, where the action of $G$ on $M$ is given
by the formula
\begin{equation}
\Phi \colon G \times M \to M, \quad \Phi_g(q,p) = (gq,p), \quad
g \in G,\ (q,p) \in M.
\end{equation}
If $G$ is a symmetry group of the system, then it will give rise to a reduced
system. The action $\Phi$ is free and proper, and for every $g \in G$ the map
$\Phi_g$ preserves the Poisson structure on $M$. Hence, the quotient space
$M/G$ is a smooth manifold with a natural Poisson structure. This Poisson
manifold is diffeomorphic to the dual of the Lie algebra $\mathfrak{g}^*$
endowed with the following Poisson bracket
\begin{equation}
\{f,g\} = p_k C^k_{ij} Z^i f Z^j g, \quad f,g \in C^\infty(\mathfrak{g}^*).
\label{eq:55}
\end{equation}
Moreover, \nbr{G}invariant functions in $C^\infty(M)$ are exactly those
functions which depend only on momentum variables, i.e.\ they are elements of
$C^\infty(\mathfrak{g}^*)$. Therefore, the space $C^\infty(\mathfrak{g}^*)$ is
a classical Poisson algebra of the reduced system and $\mathfrak{g}^*$ is its
phase space. Note, that the Poisson bracket \eqref{eq:55} agrees with
\eqref{eq:54} for functions depending only on momentum variables.

The quantization procedure introduced in this paper respects the reduction
operation described above. For simplicity let us assume that $G$ is compact and
that the Haar measure $\dd{m}$ is normalized to unity. Denote by
$\mathcal{F}(\mathfrak{g}^*)$ the subspace of $\mathcal{F}(M)$ consisting of
functions depending only on momentum variables. Then
$\mathcal{F}(\mathfrak{g}^*)$ is a Fr\'echet subspace of $\mathcal{F}(M)$.
The \nbr{\star}product \eqref{eq:2} of functions $f,g \in
\mathcal{F}(\mathfrak{g}^*)$ takes the form
\begin{equation}
(f \star g)(p) = \int_{\mathfrak{g} \times \mathfrak{g}}
    \tilde{f}(X) \tilde{g}(Y) e^{-\frac{i}{\hbar}\braket{p,X \diamond Y}}
    L(X,Y) \ud{X} \ud{Y}.
\end{equation}
Observe, that $f \star g$ is only a function of $p$, therefore it is an element
of $\mathcal{F}(\mathfrak{g}^*)$. Thus, $\mathcal{F}(\mathfrak{g}^*)$ is a
Fr\'echet subalgebra of $\mathcal{F}(M)$ describing a reduced quantum system.
The extension of $\mathcal{F}(\mathfrak{g}^*)$ to a Hilbert algebra
$\mathcal{L}(\mathfrak{g}^*)$ is a Hilbert subspace of $\mathcal{L}(M)$
consisting of those functions in $\mathcal{L}(M)$ which depend only on momentum
variables. Denote by $\mathcal{A}(\mathfrak{g}^*)$ the extension of
$\mathcal{L}(\mathfrak{g}^*)$ to a \nbr{C^*}algebra. The algebra
$\mathcal{A}(\mathfrak{g}^*)$ is an algebra of observables of the reduced
quantum system and defines admissible states of the system. Moreover, the
operator representation restricted to the \nbr{C^*}algebra
$\mathcal{A}(\mathfrak{g}^*)$ can be reduced to an isometric
\nbr{*}representation on some Hilbert subspace of $\mathcal{H} = L^2(G,\dd{m})$.

\section{Examples}
\label{sec:6}
\subsection{The group $\mathbb{R}^n$}
\label{subsec:6.1}
The simplest example of a Lie group $G$ is an Abelian group $(\mathbb{R}^n,+)$
where the group operation is the addition of vectors. Its Lie algebra can be
identified with the vector space $\mathbb{R}^n$ equipped with a trivial Lie
bracket $[\sdot,\sdot] = 0$. This group is trivially weakly exponential for
which $\mathcal{O} = \mathbb{R}^n$ and $\mathcal{U} = \mathbb{R}^n$. The group
$\mathbb{R}^n$ can be used to describe translational degrees of freedom of a
body or a system of $N$ bodies for $n = 3N$. In this case the \nbr{\star}product
\eqref{eq:2} takes the form of a Moyal product
\begin{equation}
(f \star g)(q,p) = \int_{\mathbb{R}^{2n}} \tilde{f}(q - \tfrac{1}{2}Y,X)
    \tilde{g}(q + \tfrac{1}{2}X,Y) e^{-\frac{i}{\hbar}\braket{p,X + Y}}
    \ud{X}\ud{Y}.
\end{equation}

\subsection{The rotation group $SO(3)$}
\label{subsec:6.2}
A non-trivial example of a Lie group $G$ which is weakly exponential
is the rotation group $SO(3)$, i.e.\ the group of $3 \times 3$ real
orthogonal matrices with determinant equal 1 ($A^T A = I$, $\det A = 1$). The
Lie algebra of $SO(3)$ is denoted by $\mathfrak{so}(3)$ and consists of all real
skew-symmetric $3 \times 3$ matrices ($X^T = -X$). The general element $X$ in
$\mathfrak{so}(3)$ is of the form
\begin{equation}
X = \begin{bmatrix}
 0 & -z &  y \\
 z &  0 & -x \\
-y &  x &  0
\end{bmatrix},
\end{equation}
where $\vec{\omega} = (x,y,z)$ is the Euler vector which represents the axis of
rotation and whose length is equal to the angle of rotation. For
$X,Y \in \mathfrak{so}(3)$ the Euler vector corresponding to the Lie bracket
$[X,Y]$ is equal to the vector product $\vec{\omega}_1 \times \vec{\omega}_2$ of
the Euler vectors corresponding to $X$ and $Y$. On $\mathfrak{so}(3)$ we can
introduce a norm $\norm{X}$ of element $X$ as the length $\abs{\vec{\omega}}$ of
the corresponding Euler vector $\vec{\omega}$. Note, that $\norm{\sdot}$ is
compatible with the Lie bracket on $\mathfrak{so}(3)$: $\norm{[X,Y]} \leq
\norm{X}\norm{Y}$. In what follows we will show that the rotation group $SO(3)$
is weakly exponential.

Any rotation can be represented by a unique angle $\theta$ in the range
$0 \leq \theta \leq \pi$ (rotation angle) and a unit vector $\vec{n}$
(rotation axis). If $0 < \theta < \pi$ the vector $\vec{n}$ is also unique,
$\theta = 0$ corresponds to the identity matrix, and for $\theta = \pi$ vectors
$\pm\vec{n}$ correspond to the same rotation. Thus, as the set $\mathcal{O}$ on
which the exponential map $\exp$ is diffeomorphic we can take the set of all
matrices $X$ which corresponding Euler vectors $\vec{\omega}$ have length in
the range $0 \leq \abs{\vec{\omega}} < \pi$. In other words
$\mathcal{O} = \{X \in \mathfrak{so}(3) \mid \norm{X} < \pi\}$ is an open ball
centered at 0 and of radius $\pi$. Then $\mathcal{U} = \exp(\mathcal{O})$
consists of all rotations except ones with angle $\pi$. Note, that rotations
with angle $\pi$ are exactly those matrices $R \in SO(3)$ for which
$\Tr R = -1$. Indeed, any matrix $R \in SO(3)$ can be written in a form $e^X$
for some $X \in \mathfrak{so}(3)$. From Rodrigues' formula
\begin{equation}
R = e^X = I + \frac{\sin\theta}{\theta}X
    + 2\frac{\sin^2\frac{\theta}{2}}{\theta^2}X^2,
\end{equation}
where $\theta = \norm{X}$ is a rotation angle. Since $\Tr X = 0$ and
$\Tr X^2 = -2\norm{X}^2$ for any $X \in \mathfrak{so}(3)$, we get that
\begin{equation}
\Tr R = 3 - 4\sin^2\frac{\theta}{2}.
\end{equation}
Thus $\Tr R = -1$ if and only if $\theta = (2k + 1)\pi$, $k \in \mathbb{Z}$,
or when rotation $R$ is with angle $\pi$. Hence we can write that $\mathcal{U} =
\{R \in SO(3) \mid \Tr R \neq -1\}$.

The exponential map $\exp \colon \mathcal{O} \to \mathcal{U}$ is indeed a
diffeomorphism. To show this we have to prove that $\exp^{-1}$ is smooth.
From inverse function theorem this will happen if and only if $\T_X \exp$ is
invertible for all $X \in \mathcal{O}$. This on the other hand is equivalent
with the invertibility of $\phi(\ad_X)$. The operator $\phi(\ad_X)$ is
invertible if and only if its eigenvalues are all nonzero. The eigenvalues of
$\phi(\ad_X)$ are equal $\phi(\lambda_j)$, where $\lambda_j$ are eigenvalues of
$\ad_X$. Putting $\phi(\lambda_j) = 0$ we see that $\T_X \exp$ is invertible
precisely when $\lambda_j \neq 2k\pi i$, $k = \pm 1, \pm 2, \dotsc$. In a basis
\begin{equation}
E_1 = \begin{bmatrix}
0 & 0 &  0 \\
0 & 0 & -1 \\
0 & 1 &  0
\end{bmatrix}, \quad
E_2 = \begin{bmatrix}
 0 & 0 & 1 \\
 0 & 0 & 0 \\
-1 & 0 & 0
\end{bmatrix}, \quad
E_3 = \begin{bmatrix}
0 & -1 & 0 \\
1 &  0 & 0 \\
0 &  0 & 0
\end{bmatrix}
\end{equation}
of the Lie algebra $\mathfrak{so}(3)$ a matrix corresponding to the operator
$\ad_X$ is just equal $X$. Thus $\ad_X$ and $X$ have the same set of
eigenvalues. The eigenvalues of $X \in \mathfrak{so}(3)$ are equal
$0, i\norm{X}, -i\norm{X}$, so it is readily seen that for $X \in \mathcal{O}$
its eigenvalues will satisfy condition $\lambda_j \neq 2k\pi i$,
$k = \pm 1, \pm 2, \dotsc$. Thus we proved that $\exp^{-1}$ is smooth.

In what follows we will prove that $SO(3) \setminus \mathcal{U}$ has measure
zero. Define function $\Phi \colon SO(3) \to \mathbb{R}$ by the formula
\begin{equation}
\Phi(A) = \Tr A + 1.
\end{equation}
This function is smooth and its differential $\dd\Phi(A)$ is nonzero for every
$A \in SO(3)$. Thus the level set $\Phi^{-1}(0) = SO(3) \setminus \mathcal{U}$
will be a submanifold of $SO(3)$ of dimension $\dim SO(3) - 1$
\cite[Corollary~5.14]{Lee:2003}. Therefore, $SO(3) \setminus \mathcal{U}$
will be of measure zero \cite[Corollary~6.12]{Lee:2003}.

\subsection{The group $SL(2,\mathbb{C})$}
\label{subsec:6.3}
Yet another example of a weakly exponential Lie group is the special linear
group $SL(2,\mathbb{C})$. It is the group of $2 \times 2$ complex matrices with
determinant equal 1. The Lie algebra $\mathfrak{sl}(2,\mathbb{C})$ of
$SL(2,\mathbb{C})$ consists of all complex, traceless $2 \times 2$ matrices.
The eigenvalues $\pm\lambda$ of a matrix $X \in \mathfrak{sl}(2,\mathbb{C})$
are equal to square roots of $-\det X$ and the following equality holds:
$X^2 = \lambda^2 I$. It is then not difficult to see, that the exponential map
on $SL(2,\mathbb{C})$ takes the form
\begin{equation}
\exp(X) = \cosh\lambda I + \frac{\sinh\lambda}{\lambda} X.
\label{eq:60}
\end{equation}
The eigenvalues of $\exp(X)$ are then equal $e^{\pm\lambda}$. Let
\begin{equation}
\mathcal{O} = \{X \in \mathfrak{sl}(2,\mathbb{C}) \mid \abs{\im\lambda} < \pi
    \text{ where $\lambda$ is one of the eigenvalues of $X$}\}
\end{equation}
and
\begin{equation}
\mathcal{U} = \{A \in SL(2,\mathbb{C}) \mid \Tr A \in \mathbb{C} \setminus
    (-\infty,-2]\}.
\end{equation}

In what follows we will show that $\exp \colon \mathcal{O} \to \mathcal{U}$ is
a diffeomorphism. First observe that for $A = e^X$
\begin{equation}
\Tr A = 2\cosh\lambda,
\label{eq:61}
\end{equation}
where $\lambda$ is an eigenvalue of $X$. On the other hand eigenvalues
$\alpha_1,\alpha_2$ of $A \in SL(2,\mathbb{C})$ can be expressed by $\Tr A$
according to the formulas
\begin{equation}
\alpha_1 = \frac{\Tr A - \sqrt{(\Tr A)^2 - 4}}{2}, \quad
\alpha_2 = \frac{\Tr A + \sqrt{(\Tr A)^2 - 4}}{2}.
\label{eq:62}
\end{equation}
Note, that $\exp$ is a bijection of $\{z \in\mathbb{C} \mid \abs{\im z} < \pi\}$
onto $\mathbb{C} \setminus (-\infty,0]$ and its inverse $\exp^{-1} = \ln$ is a
principal value of the logarithmic function. Because $\textstyle z \pm
\sqrt{z^2 - 1} \neq 0$ for every $z \in \mathbb{C}$, $\textstyle z \pm
\sqrt{z^2 - 1} < 0$ only for $z \in (-\infty,-1]$ and $\textstyle z \pm
\sqrt{z^2 - 1} = \left(z \mp \sqrt{z^2 - 1}\right)^{-1}$
the function
\begin{equation}
f(z) = \frac{\ln(z \pm \sqrt{z^2 - 1})}{\pm\sqrt{z^2 - 1}}
\end{equation}
is well defined for $z \in \mathbb{C} \setminus (-\infty,-1]$ and independent on
the choice of the square root $\textstyle \pm\sqrt{z^2 - 1}$ of $z^2 - 1$ used
in its definition. If we define for $A \in \mathcal{U}$
\begin{equation}
\exp^{-1}(A) = f(\tfrac{1}{2}\Tr A)\left(A - \tfrac{1}{2}(\Tr A) I\right),
\end{equation}
then it is easy to check that $\exp^{-1}$ given by the above formula is indeed
an inverse function to $\exp \colon \mathcal{O} \to \mathcal{U}$. This proves
bijectivity of $\exp$.

Note, that $\exp \colon \mathfrak{sl}(2,\mathbb{C}) \to SL(2,\mathbb{C})$ is not
surjective \cite{Duistermaat:2000}. In fact the set of matrices not in the image
of $\exp$ is equal
\begin{equation}
SL(2,\mathbb{C}) \setminus \exp(\mathfrak{sl}(2,\mathbb{C}))
    = \{A \in SL(2,\mathbb{C}) \mid \Tr A = -2,\ A \neq -I\}.
\end{equation}

Remains to check that $\exp^{-1} \colon \mathcal{U} \to \mathcal{O}$ is smooth.
Similarly as in Section~\ref{subsec:6.2} this will be the case if for all
$X \in \mathcal{O}$ eigenvalues $\lambda_j$ of $\ad_X$ will satisfy:
$\lambda_j \neq 2k\pi i$, $k = \pm 1,\pm 2,\dotsc$. Observe that if $\pm\lambda$
are eigenvalues of $X$, then eigenvalues of $\ad_X$ are equal
$0,-2\lambda,2\lambda$. Since $\abs{\im\lambda} < \pi$ it follows that
$\exp^{-1}$ is smooth.

Now we will show that $SL(2,\mathbb{C}) \setminus \mathcal{U}$ is of measure
zero. Let $\Sigma = \{A \in SL(2,\mathbb{C}) \mid \Tr A \in \mathbb{R}\}$.
Define function $\Phi \colon SL(2,\mathbb{C}) \to \mathbb{R}$ by the formula
\begin{equation}
\Phi(A) = \im\Tr A.
\end{equation}
This function is smooth and its differential $\dd\Phi(A)$ is nonzero for every
$A \in SL(2,\mathbb{C})$. Thus the level set $\Phi^{-1}(0) = \Sigma$ will be a
submanifold of $SL(2,\mathbb{C})$ of dimension $\dim\Sigma =
\dim SL(2,\mathbb{C}) - 1$ \cite[Corollary~5.14]{Lee:2003}. Therefore, $\Sigma$
will be of measure zero \cite[Corollary~6.12]{Lee:2003} and consequently
$SL(2,\mathbb{C}) \setminus \mathcal{U} \subset \Sigma$ as well.

The special linear group $SL(2,\mathbb{C})$ is an example of a non-compact
simply connected Lie group. Note, however, that its subgroup $SL(2,\mathbb{R})$
is not weakly exponential. The sets $\mathcal{O}$ and $\mathcal{U}$ can be
defined in an analogical way and $\exp \colon \mathcal{O} \to \mathcal{U}$ will
be a diffeomorphism, but $SL(2,\mathbb{R}) \setminus \mathcal{U}$ will not have
measure zero. In fact $\{A \in SL(2,\mathbb{R}) \mid \Tr A < -2\} \subset
SL(2,\mathbb{R}) \setminus \mathcal{U}$ is a non-empty open subset and as such
has non-zero measure. On the other hand $SU(2)$ is also a subgroup of
$SL(2,\mathbb{C})$, which as we will see, is weakly exponential.

\subsection{The group $SU(2)$}
\label{subsec:6.4}
Another example of a weakly exponential Lie group is the special unitary group
$SU(2)$. It is the group of $2 \times 2$ complex unitary matrices with
determinant equal 1 ($U^\dagger U = I$, $\det U = 1$). The Lie algebra
$\mathfrak{su}(2)$ of $SU(2)$ consists of all complex skew-Hermitian
$2 \times 2$ matrices with trace zero ($X^\dagger = -X$, $\Tr X = 0$). On
$\mathfrak{su}(2)$ we can introduce a norm by the following formula:
$\norm{X} = 2\sqrt{\det X}$. Note, that $\norm{\sdot}$ is compatible with the
Lie bracket on $\mathfrak{su}(2)$: $\norm{[X,Y]} \leq \norm{X}\norm{Y}$.

The exponential map $\exp$ maps an open ball $\mathcal{O} =
\{X \in \mathfrak{su}(2) \mid \norm{X} < 2\pi\}$ diffeomorphically onto
$\mathcal{U} = SU(2) \setminus \{-I\}$, see \cite{Duistermaat:2000}. Indeed,
eigenvalues of $X \in \mathfrak{su}(2)$ are equal $\pm\tfrac{1}{2}i\norm{X}$ and
for $U \in SU(2)$ we have that $\Tr U \in \mathbb{R}$ and $\abs{\Tr U} \leq 2$.
Thus the sets $\mathcal{O}$ and $\mathcal{U}$ corresponding to the group
$SL(2,\mathbb{C})$ and intersected with $\mathfrak{su}(2)$ and $SU(2)$,
respectively, are equal to the sets $\mathcal{O}$ and $\mathcal{U}$ defined
above. The exponential map $\exp \colon \mathcal{O} \to \mathcal{U}$ will be
then a diffeomorphism. Clearly $SU(2) \setminus \mathcal{U} = \{-I\}$ is of
measure zero. The special unitary group $SU(2)$ is an example of a compact
simply connected Lie group.

\section{Conclusions and final remarks}
\label{sec:7}
In the paper was presented a complete theory of quantization of a classical
Hamiltonian system whose configuration space is in the form of a Lie group. The
received theory can be now used to quantize particular systems, like a rigid
body. The configuration space of this system is the rotation group $SO(3)$
representing rotational degrees of freedom. The translational degrees of freedom
can be included by taking as the configuration space the semi-direct product of
$\mathbb{R}^3$ with $SO(3)$, which is equal to a special Euclidean group
$E^+(3)$.

The use of deformation quantization approach to quantize a rigid body might help
in developing in quantum mechanics techniques from a classical theory.
Particular examples of rigid bodies are tops describing a precession of a body
under the influence of gravity. Especially interesting are Euler, Lagrange, and
Kovalevskaya tops, which are integrable Hamiltonian systems. It would be
interesting to use the developed quantization theory to construct quantum
versions of these systems together with a theory of quantum integrability.

Worth noting are papers \cite{Lazard:1966,Dokovic:1997} where authors determined
for the exponential function $\exp \colon \mathfrak{g} \to G$ maximal open
domains in $\mathfrak{g}$ on which $\exp$ is injective regardless of the
structure of $\mathfrak{g}$ or $G$. In \cite{Lazard:1966} authors showed that if
we choose a norm on $\mathfrak{g}$ such that $\norm{[X,Y]} \leq
\norm{X}\norm{Y}$, then $\exp$ will be injective on an open ball
$B_\pi = \{X \in \mathfrak{g} \mid \norm{X} < \pi\}$ of radius $\pi$
(and if $G$ is simply connected, then $\exp$ will be injective on an open ball
$B_{2\pi}$ of radius $2\pi$). The presented examples of the groups $SO(3)$ and
$SU(2)$ agree with this result. In \cite{Dokovic:1997} authors introduced a
function $\sigma \colon \mathfrak{g} \to \mathbb{R}^+$ defined by the formula
\begin{equation}
\sigma(X) = \max\{\abs{\im\lambda} \mid \lambda \in \Spec(\ad_X)\}.
\end{equation}
Then they showed that $\exp$ is a diffeomorphism of $\mathcal{O} =
\{X \in \mathfrak{g} \mid \sigma(X) < \pi\}$ onto $\mathcal{U} =
\exp(\mathcal{O})$. This result agrees with the presented examples of the groups
$SO(3)$, $SU(2)$, and $SL(2,\mathbb{C})$. In fact for the groups $SU(2)$ and
$SL(2,\mathbb{C})$ the exponential function is diffeomorphic on a larger set
equal $\mathcal{O} = \{X \in \mathfrak{g} \mid \sigma(X) < 2\pi\}$, which is
caused by the fact that these groups are simply connected.

In the paper we introduced a topology on the space $\mathcal{F}(M)$ in the
case of a compact group $G$. This was done by establishing an isomorphism
between $\mathcal{F}(M)$ and $C^\infty(G \times G)$ and with its help
transferring a natural Fr\'echet space topology on $C^\infty(G \times G)$ onto
$\mathcal{F}(M)$. To extend this construction in a meaningful way to non-compact
groups, so that the \nbr{\star}product would remain continuous and the results
of Section~\ref{subsec:3.4} would still hold, we would have to extend the space
$\mathcal{F}(M)$ by extending the corresponding space $C^\infty_c(G \times G)$
of integral kernels to some bigger space with nicer properties. One of the
candidates worth investigating is a Harish-Chandra's Schwartz space of functions
on $G \times G$ whose derivatives are rapidly decreasing \cite{Wallach:1988}.

The presented approach to quantization can be reformulated and applied to
systems which configuration spaces are in the form of a Riemann-Cartan manifold,
i.e.\ a manifold endowed with a metric tensor and a metric affine connection.
An analog of the weakly exponential Lie group will be a weakly geodesically
simply connected Riemann-Cartan manifold, i.e.\ a Riemann-Cartan manifold
$\mathcal{Q}$ such that for every point $q \in \mathcal{Q}$ there exists a
neighborhood $\mathcal{U}_q \subset \mathcal{Q}$ of $q$ for which $\mathcal{Q}
\setminus \mathcal{U}_q$ is of measure zero and every point in $\mathcal{U}_q$
can be connected with $q$ by a unique geodesics. Then the exponential map
$\exp_q$ will map diffeomorphically some open neighborhood $\mathcal{O}_q
\subset \T_q \mathcal{Q}$ of 0 onto $\mathcal{U}_q$. In fact a semi-simple Lie
group is an example of a Riemann-Cartan manifold which metric tensor is given by
a Killing-Cartan form and which affine connection is such that the exponential
map at unit element is equal to the exponential map on the Lie group. Such
Riemann-Cartan manifold has vanishing curvature tensor and non-vanishing torsion
tensor in the non-Abelian case.


\end{document}